\newcommand*{\addFileDependency}[1]{
\typeout{(#1)}
\@addtofilelist{#1}
\IfFileExists{#1}{}{\typeout{No file #1.}}
}
\newcommand{\blind}{1}
\newcommand{\bm}[1]{\mbox{\boldmath$ #1$}}
\newcommand{\mboxa}[1]{\mbox{\normalfont#1}}
\renewcommand{\vec}{\mboxa{vec}}
\newcommand{\tr}{\mboxa{tr}}
\newcolumntype{d}[1]{D{.}{.}{#1}}
\newenvironment{remark}[1][Remark.]{\begin{trivlist}
\item[\hskip\labelsep {\bfseries#1}]}{\end{trivlist}}
\newtheorem{theorem}{Theorem}
\newtheorem{assumption}{Assumption}
\newcommand{\beq}{\begin{equation}}
\newcommand{\eeq}{\end{equation}}
\newcommand{\beqw}{\begin{equation*}}
\newcommand{\eeqw}{\end{equation*}}
\newcommand{\eps}{\epsilon}
\newcommand{\xx}{\mbox{\boldmath$x$}}
\newcommand{\dd}{\mbox{\boldmath$d$}}
\newcommand{\pp}{\mbox{\boldmath$p$}}
\newcommand{\uu}{\mbox{\boldmath$u$}}
\newcommand{\vv}{\mbox{\boldmath$v$}}
\def\Cb{{\mathbb C}}
\def\Eb{{\mathbb E}}
\def\Gb{{\mathbb G}}
\def\Hb{{\mathbb H}}
\def\Jb{{\mathbb J}}
\def\Kb{{\mathbb K}}
\def\Lb{{\mathbb L}}
\def\Mb{{\mathbb M}}
\def\Pb{{\mathbb P}}
\def\Rb{{\mathbb R}}
\def\Ub{{\mathbb U}}
\def\Vb{{\mathbb V}}
\def\Wb{{\mathbb W}}
\def\Ac{{\mathcal A}}
\def\Rc{{\mathcal R}}
\def\Kc{{\mathcal K}}
\def\Uc{{\mathcal U}}
\def\Fc{{\mathcal F}}
\def\Hc{{\mathcal H}}
\def\Mc{{\mathcal M}}
\def\Nc{{\mathcal N}}
\def\Pc{{\mathcal P}}
\begin{document}

\def\spacingset#1{\renewcommand{\baselinestretch}%
{#1}\small\normalsize} \spacingset{1}


\if1\blind
{
\title{\bf High-\textcolor{black}{d}imensional Sparse Multivariate Stochastic Volatility Models}
\author{Benjamin Poignard\thanks{
Benjamin Poignard, Graduate School of Economics, Osaka University, Toyonaka, Osaka 560-0043, and Riken-AIP, Chuo-ku, Tokyo 103-0027, Japan (email: bpoignard@econ.osaka-u.ac.jp; Corresponding author); Manabu Asai, Faculty of Economics, Soka University, Hachioji, Tokyo 192-8577, Japan (e-mail: m-asai@soka.ac.jp). 
\textcolor{black}{The authors are most grateful to Yoshihisa Baba, Shiqing Ling, the associate editor, and \textcolor{black}{the} two anonymous reviewers for \textcolor{black}{their} helpful comments and suggestions.}
This \textcolor{black}{research} is partially supported by the Japan Society for the Promotion of Science (19K01594, 19K23193). 
}\hspace{.2cm}\\
    Graduate School of Economics, Osaka University and
    Riken-AIP, Japan\\
    and \\
Manabu Asai\\
    Faculty of Economics, Soka University, Hachioji, Tokyo, Japan}
\maketitle
} \fi

\if0\blind
{
\bigskip
\bigskip
\bigskip
\begin{center}
    {\LARGE\bf High-\textcolor{black}{d}imensional Sparse Multivariate Stochastic Volatility Models}
\end{center}
\medskip
\vspace*{1cm}
} \fi

\vspace*{-1cm}
\bigskip
\begin{abstract}
Although multivariate stochastic volatility models usually produce more accurate forecasts compared to \textcolor{black}{the} MGARCH models, their estimation techniques such as Bayesian MCMC typically suffer from the curse of dimensionality. We propose a fast {and efficient} estimation approach for MSV based on a penalized OLS framework. Specifying the MSV model as a multivariate state space model, we carry out a two-step penalized procedure. We provide the asymptotic properties of the two-step estimator and the oracle property of the first-step estimator when the number of parameters diverges. The performances of our method are illustrated through simulations and financial data. 
\end{abstract}

\noindent%
{\it Keywords:} Forecasting; Multivariate Stochastic Volatility; \textcolor{black}{Penalized} M-estimation.

{\flushleft {\bf MSC Codes:}} 62F12, 62P20.
{\flushleft {\bf JEL Classification:}} C13, C32.
\vfill

\spacingset{1.5} 

\section{Introduction}

Over \textcolor{black}{the} past decades, various covariance models have been developed for describing dynamic structures for multivariate economic and financial time series.
Within the Multivariate GARCH (MGARCH) family, the dynamic conditional correlation (DCC) model of Engle (2002) and Tse and Tsui (2002), the BEKK model of Baba et al. (1985) and Engle and Kroner (1995), 
and their variants are commonly used: see the survey of Bauwens, Laurent, and Rombouts (2006), for instance.
{As for} the multivariate stochastic volatility (MSV) family, the MSV model of Harvey, Ruiz, and Shephard (1994) was extended, among others, by the factor model of Chib, Nardari, and Shephard (2006) and the dynamic correlation model of Asai and McAleer (2009b): see Ghysels, Harvey, and Renault (1996), Asai, McAleer, and Yu (2006), and Chib, Omori, and Asai (2009) for various univariate and multivariate SV models.
Based on a thorough empirical analysis, Chib, Nardari, and Shephard (2006) highlighted that the MSV models usually outperform MGARCH-based models in terms of out-of-sample forecasts.

{Several} methods for estimating \textcolor{black}{the} MSV models have been developed. In their seminal work, Harvey, Ruiz, and Shephard (1994) derived a state space form based on the vector of the logarithm of squared returns. 
Based on this state space setting, they performed a Kalman-based filtering technique to evaluate and optimize the quasi log likelihood function. In the recent literature, a commonly used method is the Bayesian Markov Chain Monte Carlo \textcolor{black}{(MCMC)}, as described, \textcolor{black}{for example}, in Chib, Omori, and Asai (2009) and Kastner, Fr{$\ddot{\mbox{u}}$}hwirth-Schnatter, and Lopes (2017), among others.
An alternative estimation approach is the Monte Carlo Likelihood (MCL) method suggested by Durbin and Koopman (1997, 2001) and applied by Asai, Caporin, and McAleer (2015) and Asai and McAleer (2009a).
However, empirical applications in the literature are typically limited to low-dimensional random vectors when \textcolor{black}{methods based on} MCMC or MCL are performed, due to the severe costs in terms of computations, or the intricate choice of suitable priors (for the MCMC case), among others. In the same vein, MGARCH specifications also suffer from the so-called “curse of dimensionality” since the complexity is of order $O(p^2)$ \textcolor{black}{in general}, where $p$ corresponds to the problem dimension, as the specification of a general multivariate dynamic model often induces an explosion of the number of free parameters. Moreover, tricky conditions are required \textcolor{black}{for} the model parameters to satisfy the positive-definiteness of the variance-covariance process.

Another key hurdle of the aforementioned methods is the high non-linearity of the models, which requires the use of likelihood-based estimation techniques. Therefore, strongly reduced versions of such multivariate models are most often considered as soon as $p$ is larger than four or five. {The factor-model-based approach may be a solution to shrink the number of parameters. In particular,} Kastner, Fr{$\ddot{\mbox{u}}$}hwirth-Schnatter, and Lopes (2017) considered factors in their stochastic volatility {framework and provided a joint specification of a large number of covarying time series using a small number of latent factors}. However, this factor-based method requires the identification of the corresponding factors together with the treatment of the rotational \textcolor{black}{indeterminacy} inherent to factor models.

The objective of this \textcolor{black}{study} consists in \textcolor{black}{modeling} high-dimensional variance-covariance matrices within the \textcolor{black}{MSV} framework in a flexible manner and breaking the curse of dimensionality without relying on standard \textcolor{black}{procedures based on} MCMC or MCL. To do so, we introduce a vector autoregressive and moving-average (VARMA) representation for the MSV model {in the same spirit as} Harvey, Ruiz, and Shephard (1994) and apply an OLS-based two-step estimation approach extending the idea of Hannan and Rissanen (1982) and Hannan and Kavalieris (1984).
{More precisely, \textcolor{black}{as} a first step, we carry out an OLS estimation of a large dimensional VAR model with a sufficiently large number of lags to approximate the VARMA model. For the purpose of parsimony and to avoid over-fitting, we enforce the nullity of potentially numerous model coefficients using a penalization procedure on the model coefficients}. Our study shares a similar spirit with Poignard and Fermanian (\textcolor{black}{2021}), who provided a framework for high-dimensional variance-covariance within the MGARCH family: they derived some parameterizations to directly generate positive-definite covariance matrices based on multivariate ARCH processes allowing for a linear representation with respect to the parameters. \textcolor{black}{However}, our work differs from theirs in two main respects: our analysis lies within the \textcolor{black}{MSV} family; we consider a general penalization framework {for efficient estimation}, which includes a broad range of potentially non-convex penalty functions.

The main contributions of our method are as follows: using a {penalized} OLS framework, we can directly generate positive-definite variance-covariance matrices without relying on \textcolor{black}{methods like} MCMC/MCL and manage high-dimensional matrix processes; the large sample properties of the two-step estimator are provided; {in particular, we prove the oracle property of the first step estimator with a diverging dimension in the sense of Fan and Li (2001), which ensures the correct identification of the underlying set of nonzero coefficients.}

The remainder of the paper is \textcolor{black}{organized} as follows. In Section \ref{framework_setting}, we describe the framework and the new forecasting procedure based on a \textcolor{black}{penalized} OLS estimation framework. Section \ref{asymptotic_theory} contains the large sample properties of the \textcolor{black}{penalized} two-step OLS estimator. Section \ref{empirical_illustration} reports simulation-based experiment results for in-sample estimates of covariance matrices together with out-of-sample forecasting results based on real financial portfolios. Finally, Section \ref{conclusion} concludes the paper. All proofs and intermediary results are in the Appendix.

\textbf{\emph{Notations.}} Throughout this paper, we denote the cardinality of a set $E$ by $\text{card}(E)$. For a vector $\vv \in \Rb^d$, the $\ell_p$ norm is $\|\vv\|_p = \big(\sum^p_{k=1} |\vv_k|^p  \big)^{1/p}$ for $p > 0$, and $\|\vv\|_{\infty} = \underset{i}{\max}|\vv_i|$. Let the subset \textcolor{black}{be} $\Ac \subseteq \{1,\cdots,d\}$; then, $\vv_{\Ac} \in \Rb^{\text{card}(\Ac)}$ is the vector $\vv$ restricted to $\Ac$. $\Mc_{m \times n}(\Rb)$ denotes the space of $m \times n$ matrices with coefficients in $\Rb$. For a matrix $A$, $\|A\|_{\textcolor{black}{F}}$ is the \textcolor{black}{Frobenius} norm. We write $A^\top$ (resp. $\vv^\top$) to denote the transpose of the matrix $A$ (resp. the vector $\vv$). We write $\text{vec}(A)$ to denote the vectorization operator that stacks the columns of $A$ on top of one another into a vector. We denote by $\text{vech}(A)$ the $p(p+1)/2$ vector that stacks the columns of the lower triangular part of the square and symmetric matrix $A$. \textcolor{black}{$\lambda_{\min}(A)$ (resp. $\lambda_{\max}(A)$) denotes the minimum (resp. maximum) eigenvalue of $A$.}  \textcolor{black}{We write $\tr(A)$ to denote the trace of the square matrix $A$}. The $I_p$ matrix is the $p$-dimensional identity matrix. For a function $f: \Rb^d \rightarrow \Rb$, we denote \textcolor{black}{the gradient or subgradient of $f$ by $\nabla f$ and the Hessian of $f$ by $\nabla^2 f$}. We denote by $(\nabla^2 f)_{\Ac \Ac}$ the Hessian of $f$ restricted to the block $\Ac$. We write $\Ac^c$ to denote the complement of the set $\Ac$.

\vspace*{0.5cm}

\section{Penalized OLS framework for MSV}\label{framework_setting}

\subsection{Framework}

We consider a $p$-dimensional vectorial stochastic process $(y_t)_{t=1,\cdots,T}$ \textcolor{black}{and denote the vector of its model parameters by $\theta$. We then consider an MSV} decomposition given as 
\begin{align}
& y_t = D_t \varepsilon_t, \quad \varepsilon_t \sim iid(0, \Gamma), \label{eq:yt} \\
& h_{t+1} = \mu + \Phi (h_t -\mu) + \eta_t, \quad \eta_t \sim \Nc_{\Rb^p}(0, \Sigma_{\eta}), \label{eq:ht}
\end{align}
where $\Gamma$ is a $p \times p$ correlation matrix, $\varepsilon_t = (\varepsilon_{1t},\ldots,\varepsilon_{pt})^\top$ is a $p \times 1$ random vector, which is independently and identically distributed (i.i.d.), centered with variance-covariance $\Gamma$, $h_t = (h_{1t},\ldots,h_{pt})^\top$ is a $p \times 1$ vector of log-volatility, $D_t = \mbox{diag}\big(\exp( h_{1t}/2), \ldots, \exp( h_{pt}/2) \big)$ is a diagonal matrix of volatility, $\mu =(\mu_1,\ldots,\mu_p)^\top$ is a $p \times 1$ vector, $\Phi$ is a $p \times p$ matrix, and $\Sigma_{\eta}$ is a $p \times p$ covariance matrix of $\eta_t$. The MSV model (\ref{eq:yt}) and (\ref{eq:ht}) reduces to the MSV model of Harvey, Ruiz, and Shepard (1994) when $\Phi$ is diagonal and $\varepsilon_{it}$ follows a $t$ distribution.

\textcolor{black}{Subsequently}, we define $\textcolor{black}{y_t^{\ell}} = (\log( y_{1t}^2),\ldots, \log( y_{pt}^2))^\top$. Following Harvey, Ruiz, and Shepard (1994), the MSV model can be formulated as a state space model:
\begin{align}
& \textcolor{black}{y_t^{\ell}}  = c + \alpha_t + \zeta_t, \label{eq:ss1}\\
& \alpha_{t+1} = \Phi \alpha_t + \eta_t, \label{eq:ss2}
\end{align}
where $c=(c_1,\ldots,c_p)^\top$, $\zeta_t = (\zeta_{1t},\ldots,\zeta_{pt})^\top$, and $\alpha_t = h_t -\mu$ with 
$c_i = \mu_i + \Eb[\log (\varepsilon_{it}^2)]$ and $\zeta_{it} = \log( \varepsilon_{it}^2) - \Eb[\log (\varepsilon_{it}^2)]$.
Assuming a $t$ distribution for $\varepsilon_{it}$, Harvey, Ruiz, and Shepard (1994) specified the covariance matrix of $\zeta_t$ as $\Sigma_{\zeta}$. Note that $\Eb[\zeta_t]=0$ by definition.
Based on the state space form, these authors suggested a quasi-maximum likelihood estimation of the MSV model using the Kalman filter. Alternative methods were proposed such as the Bayesian MCMC technique of Chib, Nardari, and Shephard (2006) \textcolor{black}{and} the Monte Carlo Likelihood (MCL) method of Durbin and Koopman (1997, 2001). A significant drawback of these methods is the computational cost and, thus, the curse of dimensionality: most of the applications are restricted to small vector sizes and/or reduced forms are fostered.

In this paper, we aim \textcolor{black}{to tackle} this issue for \textcolor{black}{the} MSV models using a penalized OLS estimation method. Although the MSV model (\ref{eq:yt}) and (\ref{eq:ht}) might be a basic model, the following advantages with respect to \textcolor{black}{the} MGARCH models can be highlighted: (i) relatively stable estimates and forecasts for variance-covariance matrices; (ii) simpler restrictions for stationarity conditions; \textcolor{black}{and} (iii) no intricate matrix parameterization and/or parameter restrictions to generate positive-definite matrices. 
\textcolor{black}{Regarding (i), see the theoretical comparison of Taylor (1994, Section 5) and the empirical results of Dan{\'i}elsson (1998) and Ding and Vo (2012), for instance. As for (ii) and (iii), see Bauwens, Laurent, and Rombouts (2006) and Chib, Omori, and Asai (2009) for \textcolor{black}{the} MGARCH and MSV models, respectively.}

\textcolor{black}{
As in Harvey, Ruiz, and Shephard (1994) and Kim, Shephard, and Chib (2002), we consider the log of squared returns. Harvey, Ruiz, and Shephard (1994) suggested the quasi-maximum likelihood (QML) estimation based on the Kalman filter, by treating the distribution of $\log \chi^2(1)$ as a normal distribution.
Ruiz (1994) analyzed the asymptotic properties of the QML estimator for the univariate case. Since their QML estimation depends on the numerical optimization algorithm, Shephard (1993) and So, Li, and Lam (1997) developed simulated and standard expectation-maximization algorithms, respectively. However, the inefficiency of the QML estimator comes from the fact that $\log \chi^2(1)$ is highly right-skewed. To fix this issue, Kim, Shephard, and Chib (2002) approximated $\log \chi^2(1)$ by a mixture of normal distributions to carry out a Bayesian MCMC estimation. Instead, our estimation procedure improves the efficiency \textcolor{black}{using} the \textcolor{black}{penalized} OLS regression, as previously detailed.}

\subsection{Our proposed approach}

\textcolor{black}{In this section, we propose a new procedure for estimating high-dimensional stochastic volatility models.
Our approach starts from the measurement equation (\ref{eq:ss1}).
Instead of the state space form, we derive the VARMA representation of $\textcolor{black}{y_t^{\ell}}$ \textcolor{black}{to} apply the ideas of Hannan and Rissanen (1982) and Hannan and Kavalieris (1984) under the framework of a penalized OLS estimation.
Our approach consists of four steps and can be summarized as \textcolor{black}{follows:}}
\begin{itemize}
\item[\textbf{Step 1.}] \textcolor{black}{Consider a penalized OLS estimation to approximate the error terms in the VARMA representation;}
\item[\textbf{Step 2.}] \textcolor{black}{Using the approximated errors, obtain a regression-based estimator of $(c,\Phi)$};
\item[\textbf{Step 3.}] \textcolor{black}{Conditional on} the VARMA estimators, use an \textcolor{black}{ad hoc} estimator for $\Sigma_{\zeta}$ such that the corresponding estimator is positive-definite; 
\item[\textbf{Step 4.}] \textcolor{black}{Obtain the estimator of $\Gamma$.}
\end{itemize}

\medskip

\textcolor{black}{Let us now \textcolor{black}{detail} this four step procedure.} Since $\textcolor{black}{y_t^{\ell}}$ is the sum of a VAR(1) process and an i.i.d. noise by (\ref{eq:ss1}), the discussion of Granger and Morris (1976) suggests that $\textcolor{black}{y_t^{\ell}}$ has a VARMA(1,1) representation. 
By equations (\ref{eq:ss1}) and (\ref{eq:ss2}), we obtain
\[
\textcolor{black}{y_t^{\ell}}  = (I-\Phi)c + \Phi \textcolor{black}{y_{t-1}^{\ell}} + (\zeta_t + \eta_{t-1}) - \Phi \zeta_{t-1},
\]
which {can alternatively be written as
\begin{equation}
\textcolor{black}{y_t^{\ell}}  = (I-\Phi)c + \Phi \textcolor{black}{y_{t-1}^{\ell}}  + u_t + \Xi u_{t-1},
\label{eq:gm0}
\end{equation}
with $(u_t)$ \textcolor{black}{being} a $p$-dimensional white noise vector with moments $\Eb[u_t]=0, \; \text{Var}(u_t) = \Sigma_u, \; \Eb[u_t u_s^\top] = \mathbf{0} \mbox{ for } t \neq s$,} where $\Xi$ and $\Sigma_u$ are obtained by matching moments of $w_t = (\zeta_t + \eta_{t-1}) - \Phi \zeta_{t-1}$ and $w_t^* = u_t + \Xi u_{t-1}$. Using $\Eb[w_t w_t^\top]=\Eb[w_t^* w_t^{*\top}]$ and $\Eb[w_t w_{t-1}^\top]=\Eb[w_t^* w_{t-1}^{*\top}]$, the relationship between $(\Xi, \Sigma_u)$ and other parameters is given as follows:
\begin{align}
\Sigma_{\eta} + \Sigma_{\zeta} + \Phi \Sigma_{\zeta} \Phi^\top &= \Sigma_u + \Xi \Sigma_u \Xi^\top, 
\label{eq:gm1} \\
- \Phi \Sigma_{\zeta}  &= \Xi \Sigma_u.
\label{eq:gm2}
\end{align}
{$\Sigma_{\eta}$ and $\Sigma_{\zeta}$ can be deduced from $\Phi$, $\Xi$, and $\Sigma_u$ based on equations (\ref{eq:gm1}) and (\ref{eq:gm2}).
\textcolor{black}{Let $(x_t)$ denote the mean-subtracted process $x_t=y_t^{\ell} - \Eb[y_t^{\ell}]$.}
Assuming a stable and invertible model, $(x_t)$ has an AR($\infty$) representation:}
\begin{equation}\label{VAR_inf}
x_t = \sum_{i=1}^{\infty} \Psi_i x_{t-i} + u_t,
\end{equation}
\textcolor{black}{with $u_t$ defined in equation (\ref{eq:gm0}).}
Based on a penalized OLS estimation, we can obtain \textcolor{black}{an approximation of $u_t$ in the first step, denoted as $\widehat{u}^{(m)}_t$. The latter approximation depends on $m$: \textcolor{black}{we} empirically need to specify $m$ sufficiently large as a surrogate of $\infty$ in the summation in (\ref{VAR_inf}). Thus, for the sake of parsimony and to avoid the over-fitting issue, we assume sparsity among the $\Psi_i$\textcolor{black}{s}.} In the second step, we calculate the OLS estimator of $(\widehat{c}^*,\widehat{\Phi},\widehat{\Xi})$ by regressing $x_t$ on a constant, $x_{t-1}$, and \textcolor{black}{$\widehat{u}^{(m)}_{t-1}$}. For the third step, we start from the decomposition of the unconditional variance-covariance matrix of $x_t$, which is given by
\begin{equation}
\Sigma_x = \Sigma_{\alpha} + \Sigma_{\zeta}, 
\label{eq:deco}
\end{equation}
where $\Sigma_x = \Eb[\textcolor{black}{x_t x_t^\top}]$, $\Sigma_{\zeta} = \Eb[\zeta_t \zeta_t^\top]$, and $\Sigma_{\alpha} = \Eb[\alpha_t \alpha_t^\top]$ with
\[
\vec (\Sigma_{\alpha}) = [ I_{p^2} - (\Phi \otimes \Phi)]^{-1} \vec (\Sigma_{\eta}). 
\]
{Denoting the sample covariance matrix of $x_t$ and \textcolor{black}{$\widehat{u}^{(m)}_t$} by $S_x$ and $S_{\widehat{u}^{\textcolor{black}{(m)}}}$, respectively,
we obtain an estimator of $\Sigma_{\zeta}$ as}
\[
S_{\zeta} = - \frac{1}{2} \left[ \widehat{\Phi}^{-1} \widehat{\Xi} S_{\widehat{u}^{\textcolor{black}{(m)}}}
+ S_{\widehat{u}^{\textcolor{black}{(m)}}} \widehat{\Xi}^\top \widehat{\Phi}^{\top -1} \right],
\]
by \textcolor{black}{the sample analogous of the mean of $\Sigma_{\zeta}$ obtained by equation (\ref{eq:gm2}) and its transpose.}
As there is no guarantee for $S_{\zeta}$ and $S_x-S_{\zeta}$ to be \textcolor{black}{positive-definite} by the approach, we consider \textcolor{black}{ad hoc} estimators for $\Sigma_{\zeta}$ and $\Sigma_{\alpha}$ based on decomposition (\ref{eq:deco}). Finally, in the fourth step, we estimate $\Gamma$ by a correlation matrix of $y_t$. \\
\textcolor{black}{To summarize, our procedure can be broken down as follows:}
\begin{description}
\item[Step 1.] We approximate (\ref{VAR_inf}) as
\begin{equation}\label{approxm}
x_t = \sum_{i=1}^m \Psi_i x_{t-i} + u^{(m)}_t,
\end{equation}
with $u^{(m)}_t = u_t + \underset{i>m}{\sum}\Psi_iX_{t-i}$: under suitable parameter conditions, $\underset{i>m}{\sum}\Psi_iX_{t-i}$ is actually negligible when $m$ is large enough. Such conditions can be set in the same vein as {Assumptions (15.2.2)-(15.2.4) of L\"{u}tkepohl (2006) or Assumption 2.1(b) of Chang, Park, and Song (2006) for \textcolor{black}{the} VAR models; as Assumption 2 of Chang and Park (2002) for AR models: if we consider a univariate process, \textcolor{black}{based} on their latter assumption, $\Eb[|u^{(m)}_t-u_t|^r]=o(m^{-r})$ assuming the existence of the $r$-th moment of $u_t$. Under the sparsity assumption for the VAR($m$) coefficients, we consider the \textcolor{black}{penalized} OLS problem}
\begin{equation*}
\widehat{\Psi}_{1:m} = \mathop{\mbox{arg min}}_{ \Psi_{1:m} } 
 \big\{ \frac{1}{2T} \sum_{t=1}^T || x_t - \sum_{i=1}^m \Psi_i x_{t-i} ||^2_2
+ \textbf{pen}(\frac{\lambda_T}{T},\vec (\Psi_{1:m}))  \big\},
\end{equation*}
where $\textbf{pen}(\frac{\lambda_T}{T},\cdot) : \Rb^d \to \Rb$ is a coordinate-separable penalty applied to the coefficients $\Psi_{1:m}=[ \Psi_1 \; \cdots \Psi_m] \in \Mc_{p \times pm}(\Rb)$, $\lambda_T$ is the regularization parameter which depends on the sample size and enforces a particular type of sparse structure in the solution $\widehat{\Psi}_{1:m}$. \textcolor{black}{In vector form, $\text{vec}(\Psi_{1:m}) \in \Rb^{d}, d=mp^2$. In the asymptotic analysis detailed in Subsection \ref{first_step}, the dimension $d$ potentially diverges with the sample size $T$. In particular, this diverging property includes the case ``$m$ large and $p$ fixed",} which is pertinent when the objective is to suitably approximate $(u_t)$ by $(u^{(m)}_t)$. \\
\textcolor{black}{As the number of parameters $d$ increases with the sample size, we assume that the true parameter value is sparse, which refers to the condition that only $k < d$ elements of the true parameter are nonzero but allows the identities of these elements to be unknown. In other words, the true parameter contains a large number of zero coefficients}. Moreover, when $m$ is large, the sparse property is pertinent in the context of time series with autoregressive components. \textcolor{black}{Indeed, the most recent observations are likely to have} a higher-level effect on the current $(x_t)$ \textcolor{black}{in contrast to} older observations. \textcolor{black}{Consequently}, it is natural to assume that the parameters in $\Psi_i$ decay with $i$ and become negligible. \textcolor{black}{Since the set of non-zero coefficients is unknown, we rely on the penalty function $\textbf{pen}(\cdot,\cdot)$ to estimate it.} Importantly, the penalty function is non-differentiable at the origin to foster sparsity in the \textcolor{black}{estimator}. Furthermore, an additional merit for imposing sparsity is its ability to fix the so-called over-fitting issue: such a problem occurs when too many parameters must be estimated in light of the sample size, \textcolor{black}{which results} in poor out-of-sample performances. \textcolor{black}{Sparsity-based inference methods potentially fix this problem, as emphasized by, e.g., Belloni et al. (2013) or Ng (2013).}

\item[Step 2.] Let {$\widehat{u}^{(m)}_t = x_t - \overset{m}{\underset{i=1}{\sum}} \widehat{\Psi}_i x_{t-i}$.} \textcolor{black}{Conditional on} $\widehat{\Psi}_{1:m}$, we consider the regression
{\[
\textcolor{black}{y_t^{\ell}} = c^* + \Phi \textcolor{black}{y_{t-1}^{\ell}} + \Xi \widehat{u}^{(m)}_{t-1} + v_t,
\]}
where the parameters are $(c^*,\Phi,\Xi)$, and since we replace $u_{t-1}$ by {$\widehat{u}^{(m)}_{t-1}$}, $(v_t)$ is the error term for this auxiliary regression. The second step objective function is
\beqw
(\widehat{c}^*,\widehat{\Phi},\widehat{\Xi})|\widehat{\Psi}_{1:m} = \mathop{\mbox{arg min}}_{ (c^*,\Phi,\Xi) }  \big\{ \frac{1}{2T} \sum_{t=1}^T ||\textcolor{black}{y_t^{\ell}} -  \big(c^* + \Phi\textcolor{black}{y_{t-1}^{\ell}} + \Xi {\widehat{u}^{(m)}_{t-1}}  \big)||^2_2\big\},
\eeqw
such that we can obtain the estimator of $c$ by $\widehat{c} = (I -\widehat{\Phi})^{-1} \widehat{c}^*$. In this step, the second step parameter dimension is $p(1+2p)$.
\item[Step 3.] The estimators of $\Sigma_{\zeta}$ and $\Sigma_{\alpha}$ are deduced as
\begin{equation}\label{adhoc_estim}
\widehat{\Sigma}_{\zeta} = r S_x, \quad \widehat{\Sigma}_{\alpha} = (1-r) S_x,
\end{equation}
where $r$ is a constant satisfying $0 < r < 1$. This ad hoc method aims to treat the positive\textcolor{black}{-}definiteness of the estimators and to deal with the high-dimensionality issue of $\widehat{\Sigma}_{\zeta}$.
While we consider a naive decomposition based on equation (\ref{eq:deco}) for the former, we set $r = (\pi^2/2)(p^{-1} \textcolor{black}{\tr (S_{x})})^{-1}$ in (\ref{adhoc_estim}).
Here, $ \pi^2/2$ is the value of $\Eb[\zeta_{it}^2]$ when $\varepsilon_{it}$ follows the standard normal distribution. 
The ad hoc estimators yield $\tr (\widehat{\Sigma}_{\zeta}) = r \tr(S_x) = p \pi^2/2$ and $\tr (\widehat{\Sigma}_{\alpha}) =  \tr(S_x) - p \pi^2/2$.
Using such approach, we are able to estimate $\tr({\Sigma}_{\zeta})$ and $\tr({\Sigma}_{\alpha})$ with accuracy and consistency, respectively.
More importantly, the computational cost is negligible, compared to alternative estimators (e.g., \textcolor{black}{the} GMM type method) that would require a numerical optimization \textcolor{black}{with constraints on the positive-definiteness of $S_{\zeta}$ and $S_x-S_{\zeta}$.} 

\item[Step 4.] Estimate $\Gamma$ by a correlation matrix of $y_t$.
\end{description}

When the tuning parameter $\lambda_T$ shrinks to zero, Steps 1 and 2 reduce to the standard OLS estimation for low-dimensional VARMA models considered by Hannan and Rissanen (1982) and Hannan and Kavalieris (1984). Step 1 corresponds to a multivariate version of the AR($\infty$) representation of a log-GARCH model. 
Although Harvey, Ruiz, and Shephard (1994) applied the Kalman filter, its computational cost is non-negligible for large $p$, since the cost {evolves according to $O(T p^2)$} for storing covariance matrices of a $p \times 1$ state vector for all $t=1,\ldots,T$. 
For the estimators in \textbf{Step 3}, we may improve them by considering moment-matching methods using equations (\ref{eq:gm1}), (\ref{eq:gm2}), and $(\ref{eq:deco})$ with restrictions on the positive-definiteness of the estimators of $\Sigma_{\zeta}$ and $\Sigma_{\alpha}$.
However, we use the above fast {and efficient} method described in \textbf{Step 3} without the need of a numerical optimization procedure.
Finally, the fourth step can easily be adapted to a sparse correlation matrix setting, especially when the size $p/T$ is not negligible.

\medskip

We now introduce our setting for generating the volatility process. For a low-dimensional case, we can calculate the minimum mean square linear estimator (MMSLE) of $\alpha_t$ based on the full sample \textcolor{black}{$\bm{y}^\ell=(y_1^{\ell \top},\ldots,y_T^{\ell \top})^\top \in \Rb^{pT}$} by the state space smoothing algorithm.
In the high-dimensional case, we consider the multivariate version of Harvey (1998)\textcolor{black}{'s approach} with the vector form of (\ref{eq:ss1}) as follows:
\[
\textcolor{black}{\bm{y}^{\ell}} = \bm{c}^\dagger + \bm{\alpha} + \bm{\zeta},
\]
where $\bm{c}^\dagger = (\iota_T \otimes c) \in \Rb^{pT}$, $\bm{\alpha}=(\alpha_1^\top,\ldots,\alpha_T^\top)^\top \in \Rb^{pT}$, and $\bm{\zeta}=(\zeta_1^\top,\ldots,\zeta_T^\top)^\top\in \Rb^{pT}$.
By the model structure, the covariance matrix of $\bm{x}$ is given by
\[
V_x = V_{\alpha} + V_{\zeta},
\]
where 
\[
V_{\alpha} = \left( \begin{array}{cccccc}
\Sigma_{\alpha} & \Sigma_{\alpha} \Phi^\top & \Sigma_{\alpha} (\Phi^\top)^2 & \cdots & \Sigma_{\alpha} (\Phi^\top)^{T-2} & \Sigma_{\alpha} (\Phi^\top)^{T-1} \\
\Phi \Sigma_{\alpha} & \Sigma_{\alpha} & \Sigma_{\alpha} \Phi^\top & \cdots & \Sigma_{\alpha} (\Phi^\top)^{T-3} & \Sigma_{\alpha} (\Phi^\top)^{T-2} \\
\Phi^2 \Sigma_{\alpha} & \Phi \Sigma_{\alpha} & \Sigma_{\alpha} & \cdots & \Sigma_{\alpha} (\Phi^\top)^{T-4} & \Sigma_{\alpha} (\Phi^\top)^{T-3} \\
\vdots & \vdots & \vdots & \ddots & \vdots & \vdots \\
\Phi^{T-2} \Sigma_{\alpha} & \Phi^{T-3} \Sigma_{\alpha} & \Phi^{T-4} \Sigma_{\alpha} & \cdots & \Sigma_{\alpha} & \Sigma_{\alpha}  \Phi^\top \\
\Phi^{T-1} \Sigma_{\alpha} & \Phi^{T-2} \Sigma_{\alpha} & \Phi^{T-3} \Sigma_{\alpha} & \cdots & \Phi \Sigma_{\alpha} & \Sigma_{\alpha} 
\end{array} \right),
\]
and $V_{\zeta} = (I_T \otimes \Sigma_{\zeta})$.
Then\textcolor{black}{,} the MMSLE \textcolor{black}{can be calculated as follows:}
\begin{equation*}
\widetilde{\bm{\alpha}} = V_{\alpha} V_x^{-1} (\textcolor{black}{\bm{y}^\ell} - \bm{c}^\dagger) + \bm{c}^\dagger.
\end{equation*}
As in Harvey (1998), the covariance matrix is deduced from the relationship $H_t = D_t \Gamma D_t$ such that the sample variance of the standardized variable of $y_{it}$ equals to one.
We consider the estimator as $\widetilde{H}_t = \widetilde{D}_t \widehat{\Gamma} \widetilde{D}_t$, where
\begin{equation*}
\widetilde{D}_t = \mbox{diag} \big( \widetilde{d}_{1t},\ldots, \widetilde{d}_{pt} \big), 
\quad
\widetilde{d}_{it} = \bar{d}_i \exp \left( \widetilde{x}_{it}/2 \right), 
\quad
\bar{d}_i  = \sqrt{ T^{-1} \sum_{t=1}^T y_{it}^2  \exp \left(  -\widetilde{x}_{it} \right) },
\end{equation*}
for $i=1,\ldots,p$. The standardized variables are defined as $\widetilde{z}_{it} = y_{it} / \widetilde{d}_{it}$, which implies $T^{-1} \sum_{t=1}^T \widetilde{z}_{it}^2 = 1$ by definition. We call our proposed parameterization ``penalized OLS-MSV''.

\subsection{Volatility forecasting}

We now provide the forecasts for variance-covariance based on our proposed method. The MMSLE for the $l$th-step-ahead forecast of $\alpha_T$ is given by
\begin{equation*}
\widehat{\alpha}_{T+l} = R_l V_x^{-1} ( \textcolor{black}{\bm{y}^\ell} - \bm{c}^{\dagger}) + c,
\end{equation*}
where $R_l = \left[ \Phi^{T+l-1} \Sigma_{\alpha} \;\; \Phi^{T+l-2} \Sigma_{\alpha} \; 
\cdots \; \Phi^{l} \Sigma_{\alpha} \right]$. The $l$th-step-ahead forecast of the covariance matrix is given by
\begin{equation*}
\widehat{H}_t = \widehat{D}_t \widehat{\Gamma} \widehat{D}_t,
\end{equation*}
where for $i=1,\ldots,p$,
\[
\widehat{D}_{T+l} = \mbox{diag} \big(\widehat{d}_{1,T+l},\ldots, \widehat{d}_{p,T+l} \big), 
\quad
\widehat{d}_{i,T+l} = \bar{d}_i \exp \left( \widehat{x}_{i,T+l}/2 \right).
\]
\textcolor{black}{
By the structure of $R_l$ and $V_x$, the inconsistency on the off-diagonal elements of the third-step estimator may affect the forecasts. We assess its applicability via the Monte
Carlo experiments performed in Section \ref{empirical_illustration}.}

Using the covariance forecasts, we can construct time-varying portfolios for the asset returns, in which \textcolor{black}{the} portfolio weights are determined by \textcolor{black}{past information}.
\textcolor{black}{To assess the relevance of the volatility models in terms of forecasts, we can obtain the value-at-risk (VaR) threshold of the portfolio return. This threshold is the negative of the 100$q$-th percentile of the portfolio return distribution, with $q$ small such as $q=0.01$, and may be used in the test procedure of Candelon et al. (2011). More precisely, the VaR threshold at time $t+1$ is given by $- \tau_q \sqrt{\widehat{h}_{t+1}^p}$, where $\widehat{h}_{t+1}^p=w_{t+1}^\top \widehat{H}_{t+1} w_{t+1}$, $w_{t+1}$ is the $p\times 1$ vector of portfolio weights at time $t+1$, and $\tau_q$ is the 100$q$-th percentile of the standard normal distribution or a historically simulated distribution of $\{ (w_t^\top y_t)/\sqrt{\widehat{h}_t^p}\}$.
Alternatively,} we can compare forecasting models directly based on the minimum-variance portfolio via the test suggested by Engle and Colacito (2006).

\section{Asymptotic properties}\label{asymptotic_theory}

In this section, we provide the asymptotic properties of the penalized two-step estimator. We show that the first step estimator satisfies the oracle property for the SCAD and MCP cases and when the number of parameters diverges with the sample size. Conditional on this first step sparse estimator, we derive the conditions for consistency and asymptotic normality of the second step estimator.

\subsection{First step penalized estimator $\widehat{\Psi}_{1:m}$}\label{first_step}

\textcolor{black}{In \textbf{Step 1}, we estimate the parameter $\theta = \text{vec}(\Psi_{1:m})$ with $d=mp^2$, the dimension that can diverge with the sample size $T$. Consequently, both dimension $d=d_T$ and parameter $\theta=\theta_T$ are indexed \textcolor{black}{hereafter} by $T$ to highlight the dependence of \textcolor{black}{$d$ and, thus, $\theta$}, with respect to $T$. More formally, we consider a sequence of parametric models
$\Pc_T:=\{\Pb_{\theta_T}, \,\theta_T \in \Theta_{1,T}\}$, $\Theta_{1,T}\subset \Rb^{d_T}$. We \textcolor{black}{denote the non-penalized loss function} by $\Gb_T: \Rb^{pT} \times \Theta_{1,T} \rightarrow \Rb$: the value $\Gb_T(\underline{y};\theta_T)$ with $\theta_T \in \Theta_{1,T}$ evaluates the quality of the ``fit'' for the realizations of $y_t$ for every $t=1,\cdots,T$ and under $\Pb_{\theta_T}$. The loss $\Gb_T(\underline{y};\theta)$ is associated to a continuous function $\ell : \Rb^{pT} \times \Theta_{1,T} \rightarrow \Rb$ that can be written as
\begin{equation*}
\Gb_T(\underline{y};\theta) := \cfrac{1}{T} \overset{T}{\underset{t=1}{\sum}} \frac{1}{2}\|x_t-\overset{m}{\underset{i=1}{\sum}}\Psi_ix_{t-i}\|^2_2 = \cfrac{1}{T} \overset{T}{\underset{t=1}{\sum}} \frac{1}{2}\|x_t-\Psi_{1:m} Z_{m,t-1}\|^2_2 := \cfrac{1}{T} \overset{T}{\underset{t=1}{\sum}} \ell(y_s,s\leq t;\theta),
\end{equation*}
where $x_t$ corresponds to the vector of continuous transforms of $\log(y^2_{it})$ (vector of mean-subtracted series), $\Psi_{1:m}=(\Psi_1,\cdots,\Psi_m)\in \Mc_{p \times pm}(\Rb)$, and $Z_{m,t-1}=(x^\top_{t-1},\cdots,x^\top_{t-m})^\top\in \Rb^{pm}$. For every $T$, we assume there exists a unique pseudo-true parameter value $\theta_{0,T}$: for every $T$, the function $\theta_T \mapsto \Eb[\ell(y_s,s\leq t;\theta_T)]$ is uniquely minimized on $\Theta_{1,T}$ at $\theta_T = \theta_{0,T}$ and the first-order conditions are satisfied, \textcolor{black}{that is}, $\Eb[ \nabla_{\theta_T}\Gb_T(\underline{y};\theta_{0,T})] = 0$. In light of the possibly explosive number of parameters for a given $T$, $\theta_{0,T}$ is assumed sparse so that the size of the true support $k_T = \text{card}(\Ac_T)$, with $\Ac_T:=\{i=1,\cdots,d_T: \theta_{0,i,T}\neq 0\}$, also diverges with $T$. To estimate the latter support, we rely on the penalty function $\textbf{pen}(\frac{\lambda_T}{T},.)$, which is assumed coordinate-separable, \textcolor{black}{that is}, $\textbf{pen}(\frac{\lambda_T}{T},\theta_{T}) = \sum^{d_T}_{i=1}\pp(\frac{\lambda_T}{T},|\theta_{i,T}|)$. Then, the penalized problem becomes
\begin{equation} \label{obj_crit_gen}
\widehat{\theta}_T = \underset{\theta_T \in \Theta_{1,T}}{\arg \; \min} \;  \big\{ \Gb_T(\underline{y};\theta_T) + \overset{d_T}{\underset{i=1}{\sum}}\pp(\frac{\lambda_T}{T},|\theta_{i,T}|) \big\}.
\end{equation}}
For the penalty function, we consider the convex penalty LASSO $\pp(\lambda,|\theta|) = \lambda |\theta|$ of \textcolor{black}{Tibshirani (1996)} and the non-convex penalties SCAD and MCP. The SCAD of Fan and Li (2001) is defined as
\beqw
\pp(\lambda,|\theta|) = \begin{cases}
\lambda |\theta|,&\text{for} \;  |\theta|\leq\lambda,\\
-\frac{1}{(2(a-1))}(\theta^2-2 a\lambda|\theta|+\lambda^2), &\text{for} \;\lambda\leq |\theta|\leq a \lambda,\\
(a+1)\lambda^2/2, &\text{for} \;  |\theta|> a\lambda,
\end{cases}
\eeqw
where $a>2$. The MCP due to Zhang et al. \textcolor{black}{(2010) is defined} for $b>0$ as
\beqw
\pp(\lambda,|\theta|)  = \lambda\Big[|\theta|-\frac{\theta^2}{2 b\lambda}\Big] \mathbf{1}_{ \big\{0 \leq |\theta|<b\lambda\big\}} + \lambda\frac{b}{2} \mathbf{1}_{ \big\{|\theta|\geq b\lambda\big\}}.
\eeqw

\medskip

All assumptions we relied on for the large sample analysis are reported in Section \ref{appendix_proof} of the Appendix. In particular, the sparsity assumption states that the true parameter vector is sparse, that is\textcolor{black}{,} the cardinality of the true sparse support \textcolor{black}{$\Ac_T$} is of size $k_T < d_T$. We assume stability of the VARMA(1,1) process $(x_t)$ to apply the large sample theory for stationary processes. Finally, we assume suitable regularity conditions for both the non-penalized loss and the penalty function.

\medskip

We first show the existence of the penalized estimator $\widehat{\theta}_T$ for the three aforementioned penalty cases.
\begin{theorem}\label{bound_prob_asym}
Under Assumptions \ref{assumption_sparsity}-\ref{assumption_second_derivative} given in Appendix \ref{appendix_proof}, assume that the penalty function satisfies Assumptions \ref{assumption_folded}-(i),(ii),(iii) in Appendix \ref{appendix_proof} for the SCAD and MCP cases and satisfies $\lambda_T=o(T)$ for the LASSO \textcolor{black}{case; then, under} the scaling \textcolor{black}{behavior} $d^2_T=o(T)$, there is a local optimum $\widehat{\theta}_{\textcolor{black}{T}}$ of (\ref{obj_crit_gen}) satisfying
\beqw
\|\widehat{\theta}_{\textcolor{black}{T}}-\theta_{0,\textcolor{black}{T}}\|_2 = O_p\big(\sqrt{d_T} ( T^{-1/2} + R_T ) \big),
\eeqw
where $R_T = A_{1,T}$ for the SCAD and MCP defined in Assumption \ref{assumption_folded}-(ii), and $R_T = \frac{\lambda_T}{T}$ for the LASSO.
\end{theorem}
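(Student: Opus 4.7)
The plan is to follow the classical Fan--Li strategy for penalized M-estimation with a diverging number of parameters, adapted to the present VAR($m$) setting. Set $\alpha_T = \sqrt{d_T}\bigl(T^{-1/2} + R_T\bigr)$ and, for a constant $C>0$ to be chosen large, consider the closed ball $\Bc_T(C) = \{\theta_T = \theta_{0,T} + \alpha_T u : \|u\|_2 \le C\}$. I will show that with probability tending to one, the penalized criterion $Q_T(\theta_T) := \Gb_T(\underline{y};\theta_T) + \sum_{i=1}^{d_T} \pp(\lambda_T/T, |\theta_{i,T}|)$ strictly exceeds $Q_T(\theta_{0,T})$ on the boundary $\|u\|_2 = C$. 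Since $Q_T$ is continuous, this forces a local minimizer $\widehat{\theta}_T$ in the interior of $\Bc_T(C)$, whence $\|\widehat{\theta}_T - \theta_{0,T}\|_2 = O_p(\alpha_T)$.

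For the non-penalized part, I would Taylor expand: there exists $\tilde{\theta}_T$ between $\theta_{0,T}$ and $\theta_{0,T} + \alpha_T u$ such that
\begin{equation*}
\Gb_T(\underline{y};\theta_{0,T} + \alpha_T u) - \Gb_T(\underline{y};\theta_{0,T})
= \alpha_T \nabla \Gb_T(\underline{y};\theta_{0,T})^\top u + \tfrac{1}{2}\alpha_T^2 u^\top \nabla^2 \Gb_T(\underline{y};\tilde{\theta}_T) u.
\end{equation*}
The score term is controlled by Cauchy--Schwarz and the moment bound $\Eb\|\nabla \Gb_T(\underline{y};\theta_{0,T})\|_2^2 = O(d_T/T)$, which holds by stationarity of $(x_t)$ and the fact that $\nabla \Gb_T$ is an average of zero-mean terms in $d_T$ coordinates; thus this term is $O_p(\alpha_T \sqrt{d_T/T}\,\|u\|_2)$. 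For the Hessian term, I would use the regularity Assumption~\ref{assumption_second_derivative} together with the scaling $d_T^2 = o(T)$ to argue that, uniformly over $\tilde{\theta}_T \in \Bc_T(C)$, $\nabla^2 \Gb_T(\underline{y};\tilde{\theta}_T)$ has its smallest eigenvalue bounded below by $c_0/2>0$ with probability $\to 1$; the condition $d_T^2 = o(T)$ is exactly what is needed to make the perturbation $\nabla^2 \Gb_T(\underline{y};\tilde\theta_T) - \Eb\nabla^2 \Gb_T(\underline{y};\theta_{0,T})$ negligible in operator norm on the ball. Hence the quadratic term is bounded below by $(c_0/4)\alpha_T^2 \|u\|_2^2$.

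For the penalty difference, on $\Bc_T(C)$ only coordinates in the active set $\Ac_T$ are of magnitude comparable to $\theta_{0,i,T}$; coordinates in $\Ac_T^c$ equal $\alpha_T u_i$ and contribute nonnegatively (since $\pp(\cdot,|\cdot|)$ is nonnegative and vanishes at zero). For the active coordinates, a first-order expansion combined with Assumption~\ref{assumption_folded}-(ii),(iii) gives
\begin{equation*}
\Big|\sum_{i\in\Ac_T}\bigl[\pp(\lambda_T/T, |\theta_{0,i,T} + \alpha_T u_i|) - \pp(\lambda_T/T, |\theta_{0,i,T}|)\bigr]\Big|
\le \alpha_T \sqrt{k_T}\, R_T\, \|u_{\Ac_T}\|_2 + o(\alpha_T^2 \|u\|_2^2),
\end{equation*}
where I use that the derivative of the penalty at $|\theta_{0,i,T}|$ is of order $A_{1,T}=R_T$ for SCAD/MCP (by Assumption~\ref{assumption_folded}-(ii) since nonzero coefficients exceed the threshold eventually) and order $\lambda_T/T$ for LASSO. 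Because $\sqrt{k_T} \le \sqrt{d_T}$, both the score and penalty contributions are of order $\alpha_T \cdot \alpha_T \|u\|_2 = \alpha_T^2 \|u\|_2$, and taking $C$ large makes the quadratic lower bound dominate. This yields $Q_T(\theta_{0,T} + \alpha_T u) > Q_T(\theta_{0,T})$ uniformly on $\|u\|_2 = C$ with probability tending to one, proving the claim.

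The main obstacle I anticipate is controlling the Hessian $\nabla^2 \Gb_T(\underline{y};\tilde\theta_T)$ uniformly over the shrinking ball when $d_T$ is allowed to diverge: here the quadratic form $\Gb_T$ has Hessian $T^{-1}\sum_t Z_{m,t-1}Z_{m,t-1}^\top$, whose concentration around its population counterpart in operator norm requires a maximal inequality for dependent data in $d_T$ dimensions, and this is precisely where the scaling $d_T^2 = o(T)$ is consumed. The penalty-side subtlety is the non-differentiability of the LASSO, which is handled by a subgradient argument: on $\Ac_T^c$ the penalty only adds nonnegative terms, and on $\Ac_T$ one uses directional differentiability in the sign direction of $\theta_{0,i,T}$, which suffices because $\theta_{0,i,T}\neq 0$ there.
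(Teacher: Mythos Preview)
Your proposal is correct and follows essentially the same Fan--Li localization argument as the paper. One simplification you should exploit more directly: since $\Gb_T$ is quadratic in $\theta_T$, the Hessian $\nabla^2\Gb_T(\underline{y};\theta_T)=T^{-1}\sum_t(Z_{m,t-1}Z_{m,t-1}^\top\otimes I_p)$ is constant in $\theta_T$, so there is no intermediate point $\tilde\theta_T$ and no uniformity-over-the-ball issue at all---the only stochastic control needed is $\|\nabla^2\Gb_T-\Hb\|_F=O_p(d_T/\sqrt{T})=o_p(1)$ from Assumption~\ref{assumption_second_derivative}, which is exactly how the paper (implicitly via Markov on the quadratic form) and you (via operator norm) handle it.
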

\begin{remark}
For a suitable choice of $\lambda_T$, we would obtain a $\sqrt{T/d_T}$-consistent $\widehat{\theta}_{\textcolor{black}{T}}$. A diverging $d_T$ requires the use of an explicit norm: due to norm equivalences, some constants may appear \textcolor{black}{that} may depend on the size $d_T$ and, thus, on $T$.
\end{remark}

Our second result is dedicated to the oracle property: we show that the penalization procedure in problem (\ref{obj_crit_gen}) asymptotically recovers the true underlying sparse subset $\Ac_{\textcolor{black}{T}}$ and the nonzero estimated coefficients are normally distributed. We prove the oracle property for the SCAD and MCP only: these penalty functions are non-convex, a key property that enables to relax the incoherence/irrepresentable condition and/or avoid the specification of adaptive weights. \textcolor{black}{The} incoherence/irrepresentable condition - see inequality (3) of Zou (2006) regarding the irrepresentable condition; see Loh and Wainwright (2017) regarding the incoherence condition - is necessary to prove the oracle property for the LASSO: such condition is nontrivial and difficult to empirically verify. Rather than assuming the incoherence/irrepresentable condition, Zou (2006) proposed the adaptive LASSO: stochastic weights are specified in the LASSO penalization to alter the convergence rate of the regularization parameter $\lambda_T$; such weights depend on a first step $\sqrt{T/d_T}$-consistent estimator, typically an \textcolor{black}{non-penalized} OLS estimator: the adaptive LASSO is consequently a two-step procedure. In the same vein, Poignard (2020) specified adaptive weights in the Sparse Group LASSO penalty - $\ell_1+\ell_1/\ell_2$ penalty - since the convexity of the $\ell_1$ and $\ell_1/\ell_2$ norms prevents from satisfying the oracle property. The key advantage of non-convex penalization is the relaxation of the incoherence/irrepresentable condition and avoids a two-step procedure as in the adaptive LASSO.

\begin{theorem}\label{oracle_theorem}
Under Assumptions \ref{assumption_sparsity}-\textcolor{black}{\ref{assumption_array}} \textcolor{black}{given in Appendix \ref{appendix_proof}}, assume $d^3_T = o(T)$, $\lambda_T=o(T)$; then, the $\sqrt{\frac{T}{d_T}}$-consistent local estimator $\widehat{\theta}_{\textcolor{black}{T}}$ of Theorem \ref{bound_prob_asym} satisfies
\beqw
\begin{array}{llll}
\underset{T \rightarrow\infty}{\lim} \;\Pb(\widehat{\Ac}_{\textcolor{black}{T}} = \Ac_{\textcolor{black}{T}}) = 1, \;\;\text{and} &&\\
&&\\
\sqrt{T}Q_T\Vb^{-1/2}_{\Ac_{\textcolor{black}{T}}\Ac_{\textcolor{black}{T}}} \big(\widehat\theta_{\textcolor{black}{T}} - \theta_{0,\textcolor{black}{T}} \big)_{\Ac_{\textcolor{black}{T}}} \overset{d}{\underset{T \rightarrow\infty}{\longrightarrow}} \Nc_{\Rb^r} \big(0,\mathbb{C} \big),&&
\end{array}
\eeqw
where $\Vb_{\Ac_{\textcolor{black}{T}}\Ac_{\textcolor{black}{T}}} :=  \big(\Hb^{-1} \Mb \Hb^{-1} \big)_{\Ac_{\textcolor{black}{T}}\Ac_{\textcolor{black}{T}}}$, \textcolor{black}{$\Hb := \Eb[\partial^2_{\theta_{k,\textcolor{black}{T}}\theta_{l,\textcolor{black}{T}}}\ell(y_s, s \leq t;\theta_{0,\textcolor{black}{T}})]_{1 \leq k,l \leq d_T}$ and $\Mb := \Eb[\nabla_{\theta_{\textcolor{black}{T}}}\ell(y_s,s\leq t;\theta_{0,\textcolor{black}{T}}) \nabla_{\theta^\top_{\textcolor{black}{T}}}\ell(y_s,s\leq t;\theta_{0,\textcolor{black}{T}})]$}, where $\nabla_{\theta_{\textcolor{black}{T}}}\ell(y_s,s\leq t;\theta_{0,\textcolor{black}{T}})= \big(Z_{m,t-1}\otimes \{x_t-\Psi_{0,1:m} Z_{m,t-1}\} \big)$ and $\nabla^2_{\theta_{\textcolor{black}{T}} \theta^\top_{\textcolor{black}{T}}}\ell(y_s, s \leq t;\theta_{0,\textcolor{black}{T}}) =  \big(Z_{m,t-1}Z^\top_{m,t-1}\otimes I_p \big)$, and $Q_T$ is a $r \times \normalfont\text{card}(\Ac_{\textcolor{black}{T}})$ matrix satisfying $ Q_T Q^\top_T \overset{\Pb}{\underset{T \rightarrow \infty}{\longrightarrow}} \mathbb{C}$ with $\Cb$ \textcolor{black}{being} a $r \times r$ symmetric \textcolor{black}{positive-definite} and deterministic matrix.
\end{theorem}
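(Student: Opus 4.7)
My plan is to adapt the standard oracle proof for non-convex penalties in the spirit of Fan and Li (2001) and Fan and Peng (2004), modified to cope with the diverging dimension $d_T$ and the time-series dependence of $(x_t)$. The argument splits naturally into two stages: first, sparsistency ($\Pb(\widehat{\Ac}_T = \Ac_T) \to 1$); second, asymptotic normality on $\Ac_T$, which reduces the problem to a growing $k_T$-dimensional unpenalized estimation, to which a projection-based CLT is applied.

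For sparsistency, I take the $\sqrt{T/d_T}$-consistent local minimizer $\widehat\theta_T$ provided by Theorem \ref{bound_prob_asym} and argue by contradiction through the KKT (subgradient) conditions of (\ref{obj_crit_gen}). Suppose $\widehat\theta_{j,T} \neq 0$ for some $j \in \Ac^c_T$; then $\partial_j \Gb_T(\underline{y};\widehat\theta_T) + \text{sgn}(\widehat\theta_{j,T})\,\pp'(\tfrac{\lambda_T}{T},|\widehat\theta_{j,T}|) = 0$. A first-order Taylor expansion of $\partial_j\Gb_T(\underline{y};\cdot)$ at $\theta_{0,T}$, together with $\|\widehat\theta_T - \theta_{0,T}\|_2 = O_p(\sqrt{d_T/T})$ and the moment/stability conditions, yields $|\partial_j\Gb_T(\underline{y};\widehat\theta_T)| = O_p(\sqrt{d_T/T})$. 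For SCAD/MCP, since $|\widehat\theta_{j,T}| \to 0$ in probability, it eventually falls below the kink $\lambda_T/T$, so the penalty derivative equals $\lambda_T/T$. Choosing $\lambda_T$ such that $\lambda_T/\sqrt{d_T T} \to \infty$ (compatible with Assumption \ref{assumption_folded}) makes the penalty term strictly dominant, giving the required contradiction and $\widehat\theta_{j,T} = 0$ with probability tending to one.

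On the event $\{\widehat{\Ac}_T = \Ac_T\}$, the oracle estimator lives in the restricted space $\{\theta_T : \theta_{\Ac^c_T,T}=0\}$. By the beta-min condition built into the sparsity assumption on $\theta_{0,T}$, the SCAD/MCP penalty derivatives on $\Ac_T$ vanish eventually (these penalties are flat beyond $a\lambda_T/T$ and $b\lambda_T/T$, respectively), so the KKT conditions collapse to $\nabla_{\Ac_T}\Gb_T(\underline{y};\widehat\theta_T) = 0$. Taylor expansion around $\theta_{0,T}$ then gives $\sqrt{T}(\widehat\theta_T - \theta_{0,T})_{\Ac_T} = -\Hb_{\Ac_T\Ac_T}^{-1}\sqrt{T}\nabla_{\Ac_T}\Gb_T(\underline{y};\theta_{0,T}) + o_p(1)$. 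Using the explicit quadratic form $\nabla^2_{\theta_T\theta^\top_T}\ell = Z_{m,t-1}Z^\top_{m,t-1}\otimes I_p$, the ergodic theorem for the stationary VARMA(1,1), and the rate $d_T^3 = o(T)$, one obtains $\|\nabla^2_{\Ac_T\Ac_T}\Gb_T(\bar\theta_T) - \Hb_{\Ac_T\Ac_T}\|_F = o_p(1)$ uniformly on $\{\|\theta - \theta_{0,T}\|_2 \leq C\sqrt{d_T/T}\}$, which justifies the linearization.

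The final CLT for the score vector $\sqrt{T}\nabla_{\Ac_T}\Gb_T(\underline{y};\theta_{0,T})$, of diverging dimension $k_T$, is handled by the Cramér-Wold/projection device baked into the statement: define the $r$-dimensional martingale difference array $W_{T,t} = T^{-1/2} Q_T\Vb_{\Ac_T\Ac_T}^{-1/2}\Hb_{\Ac_T\Ac_T}^{-1}\nabla_{\theta_{\Ac_T}}\ell(y_s,s\leq t;\theta_{0,T})$, whose conditional covariance converges to $\Cb$ since $Q_TQ_T^\top \to \Cb$ and the sandwich $\Hb^{-1}\Mb\Hb^{-1}$ restricted to $\Ac_T$ equals $\Vb_{\Ac_T\Ac_T}$. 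A martingale CLT (e.g., Hall and Heyde, Theorem 3.2) then delivers the limit once Lindeberg is verified through moment bounds on the score and stationarity provided by Assumption \ref{assumption_array}. The main obstacle is precisely this interplay between the diverging dimension and the remainder control: ensuring the quadratic remainder $(\widehat\theta - \theta_{0,T})^\top (\nabla^2\Gb_T(\bar\theta_T) - \Hb)(\widehat\theta - \theta_{0,T})_{\Ac_T}$ is $o_p(T^{-1/2})$ uniformly over a shrinking neighborhood is exactly what forces the strengthened rate $d_T^3 = o(T)$, and checking Lindeberg for a triangular array whose effective dimension grows with $T$ must be done with care.
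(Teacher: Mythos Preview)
Your proposal is correct and follows essentially the same route as the paper: sparsistency via the sign of $\partial_{\theta_i}\Gb_T^{\text{pen}}$ on $\Ac^c_T$ (you frame it as a KKT contradiction, the paper as a gradient-sign argument, but these are equivalent), then a Taylor linearization on $\Ac_T$ with the SCAD/MCP penalty derivatives vanishing by the beta-min condition, and finally a martingale CLT for the projected score array with Lindeberg verified through Assumption~\ref{assumption_array}. The only cosmetic differences are that the paper keeps the penalty bias and curvature terms $\mathbf{b}(\theta_{0,T})_{\Ac_T}$ and $\mathbf{S}(\theta_{0,T})_{\Ac_T\Ac_T}$ explicitly in the expansion before showing they vanish, and invokes Shiryaev's Theorem~4 rather than Hall--Heyde for the triangular-array martingale CLT.
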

{\begin{remark} This result deserves a few comments:
\begin{itemize}
\item[(i)] The scaling behavior $(d_T,T)$ is given as $d^3_T = o(T)$\textcolor{black}{. T}his is because the third order term in the Taylor expansion vanishes for the least squares loss. If we consider a non-linear-based non-penalized loss, this rate would become $d^5_T=o(T)$, as in Fan and Peng (2004) or Poignard (2020).
\item[(ii)] The cardinality of the true support $\Ac_{T}$ denoted by $k_T$ also diverges with the sample size\textcolor{black}{. Thus,} the dimension of $(\widehat\theta_{\textcolor{black}{T}} - \theta_{0,\textcolor{black}{T}})_{\Ac_{\textcolor{black}{T}}}$ is diverging. This motivates the introduction of the matrix $Q_T$ to obtain a finite dimensional Gaussian distribution.
\end{itemize}
\end{remark}}

{The first step estimator is deduced from the truncated VAR process (\ref{approxm}): a VAR($m$) is fitted to obtain $(\widehat{u}^{(m)}_t)$ as an approximation of $(u_t)$. In this context, the specification of a diverging number of parameters is relevant to correctly approximate $(u_t)$. \textcolor{black}{When} $p$ is fixed and $m:=m_T \rightarrow \infty$, our scaling condition $d^3_T=o(T)$ for the oracle property is identical to condition (15.2.5) of Proposition 15.1. (result of Lewis and Reinsel, 1985) of L\"{u}tkepohl (2006). However, our setting does not enable to simultaneously distinguish $m_T \rightarrow \infty$ \textcolor{black}{and} $p:=p_T \rightarrow \infty$. Furthermore, we may derive a finite sample and explicit upper bound for the approximation error $\frac{1}{T}\overset{T}{\underset{t=1}{\sum}}\|\widehat{u}^{(m)}_t-(u_t+\underset{i > m}{\sum}\Psi_ix_{t-i})\|_2$ in the same spirit as in Proposition 2.4. of Wilms, Basu, Bien, and Matteson (2021), who considered an approximation for VARMA(p,q) processes and relied on a LASSO penalization. The derivation of such bound would require additional assumptions on the non-convex penalty functions - such as the $\mu$-amenable assumption as in Loh and Wainwright (2017) - and the derivation of an exponential bound over $\nabla_{\theta_{\textcolor{black}{T}}}\Gb_T(\underline{y};\theta_{\textcolor{black}{T}})$. We leave this topic for future research.}

\subsection{Second step estimator $(\widehat{c}^*,\widehat{\Phi},\widehat{\Xi})$}

We consider the large sample properties of the second step estimator $\widehat{\gamma} = (\widehat{c}^{*\top},\text{vec}(\widehat{\Phi})^\top,\text{vec}(\widehat{\Xi})^\top)^\top$, which is of size $d_2 = p(1+2p)$ assumed fixed. \textcolor{black}{Conditional on} $\widehat{\theta}_{\textcolor{black}{T}}$, we consider a second step loss function $\Lb_T$ from $\Rb^{pT} \times \Theta_2$ to $\Rb$ with $\Theta_2 \subset \Rb^{p(1+2p)}$ compact, and $\Lb_T(\underline{y};\widehat{\theta}_{\textcolor{black}{T}},\gamma)$ is the empirical loss associated to a continuous function $f: \Rb^{pT} \times \Theta_{1,\textcolor{black}{T}} \times \Theta_2 \rightarrow \Rb$, \textcolor{black}{that is},
\beqw
\Lb_T(\underline{y};\widehat{\theta}_{\textcolor{black}{T}},\gamma) =  \frac{1}{T} \overset{T}{\underset{t=1}{\sum}}\frac{1}{2}\|\textcolor{black}{y_t^{\ell}}-c^*-\Phi\textcolor{black}{y_{t-1}^{\ell}}-\Xi {\widehat{u}^{(m)}_{t-1}}\|^2_2  := \frac{1}{T} \overset{T}{\underset{t=1}{\sum}} \| \textcolor{black}{y_t^{\ell}}-\Gamma K_{t-1}(\widehat{\theta}_{\textcolor{black}{T}})\|^2_2:=\frac{1}{T} \overset{T}{\underset{t=1}{\sum}} f(y_s,s\leq t; \widehat{\theta}_{\textcolor{black}{T}},\gamma),
\eeqw
where $K_{t-1}(\widehat{\theta}_{\textcolor{black}{T}}) = (1,\textcolor{black}{y_{t-1}^{\ell\top}},{\widehat{u}^{(m)\top}_{t-1}})^\top \in \Rb^{1+2p}$ and $\Gamma = (c^*,\Phi,\Xi) \in \Mc_{p\times (1+2p)}(\Rb)$. The dependence with respect to the first step estimator is through {$\widehat{u}^{(m)}_t$}. The problem of interest is
\beqw
\widehat{\gamma} = \underset{\gamma\in\Theta_2}{\arg\;\min}\;\big\{\Lb_T(\underline{y};\widehat{\theta}_{\textcolor{black}{T}},\gamma) \big\}.
\eeqw
The function $\gamma \rightarrow \Eb[f(y_s,s\leq t; \theta_{0,\textcolor{black}{T}},\gamma)]$ is assumed to be uniquely minimized at $\gamma = \gamma_0$\textcolor{black}{; the} true parameter vector $\gamma_0 = (c^{*\top}_0,\text{vec}(\Phi_0)^\top,\text{vec}(\Xi_0)^\top)^\top$. 
\begin{theorem}\label{consistency_second}
Under the conditions of Theorem \ref{bound_prob_asym}, under Assumptions \ref{assumption_first_2ndstep}- \textcolor{black}{\ref{assumption_third_cross_1_2ndstep}} \textcolor{black}{in Appendix \ref{appendix_proof}}, the sequence of second step estimators $\widehat{\gamma}$ satisfies 
\beqw
\|\widehat{\gamma}-\gamma_0\| = O_p(\frac{1}{\sqrt{T}}).
\eeqw
\end{theorem}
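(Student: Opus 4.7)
The plan is to exploit the fact that $\Lb_T(\underline{y};\widehat{\theta}_T,\gamma)$ is quadratic in $\gamma$, so that the second-step estimator admits a closed form. Writing the residual at the true $\gamma_0$ as
\beqw
v_t := \textcolor{black}{y_t^{\ell}} - \Gamma_0 K_{t-1}(\widehat{\theta}_T) = u_t + \Xi_0(u_{t-1}-\widehat{u}^{(m)}_{t-1}),
\eeqw
the first-order conditions yield
\beqw
\sqrt{T}\,\text{vec}(\widehat{\Gamma}-\Gamma_0) = \big(\widehat G_T^{-1}\otimes I_p\big)\,\text{vec}\Big( \frac{1}{\sqrt T}\sum_{t=1}^T v_t\, K_{t-1}(\widehat{\theta}_T)^\top\Big),
\eeqw
with $\widehat G_T := \frac{1}{T}\sum_{t=1}^T K_{t-1}(\widehat{\theta}_T) K_{t-1}(\widehat{\theta}_T)^\top$. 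It therefore suffices to establish (a) $\widehat G_T \overset{\Pb}{\to} G_0 := \Eb[K_{t-1}(\theta_{0,T}) K_{t-1}(\theta_{0,T})^\top]$, a positive-definite matrix, and (b) $\frac{1}{\sqrt T}\sum_t v_t K_{t-1}(\widehat\theta_T)^\top = O_p(1)$; the desired rate then follows by Slutsky's lemma combined with the Kronecker-product identity above.

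For (a), the only nontrivial block of $K_{t-1}(\widehat{\theta}_T) - K_{t-1}(\theta_{0,T})$ is the last $p$ coordinates, equal to $\widehat{u}^{(m)}_{t-1}-u_{t-1} = (\widehat{\Psi}_{1:m}-\Psi_{0,1:m}) Z_{m,t-2} - \sum_{i>m}\Psi_{0,i} x_{t-1-i}$. Combining the first-step rate $\|\widehat\Psi_{1:m}-\Psi_{0,1:m}\|_F = O_p(\sqrt{d_T/T})$ from Theorem \ref{bound_prob_asym}, the moment bounds on $\|Z_{m,t-2}\|$, and the summability assumption on $(\Psi_{0,i})$, Cauchy--Schwarz controls this error in mean square; the ergodic theorem for the stationary process $(x_t)$ then delivers $\widehat G_T \overset{\Pb}{\to} G_0$, whose positive-definiteness is ensured by the identification condition of the second-step problem (Assumption \ref{assumption_first_2ndstep}). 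For (b), I split the score as
\beqw
\frac{1}{\sqrt T}\sum_{t=1}^T v_t K_{t-1}(\widehat\theta_T)^\top = \frac{1}{\sqrt T}\sum_{t=1}^T u_t K_{t-1}(\theta_{0,T})^\top + \Delta_T,
\eeqw
where $\Delta_T$ gathers the three remainder terms obtained by (i) replacing $K_{t-1}(\theta_{0,T})$ by $K_{t-1}(\widehat\theta_T)$ against $u_t$, (ii) the contribution of $\Xi_0(u_{t-1}-\widehat u^{(m)}_{t-1})$ inside $v_t$ against $K_{t-1}(\theta_{0,T})$, and (iii) their cross-product. The leading term is $O_p(1)$ by a CLT-type argument for stationary processes applied to the uncorrelated pair $(u_t, K_{t-1}(\theta_{0,T}))$ under the moment and mixing conditions of Assumptions \ref{assumption_first_2ndstep}--\ref{assumption_third_cross_1_2ndstep}.

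The main obstacle I anticipate is showing $\Delta_T = o_p(1)$, in particular the contribution of (ii), namely $\frac{\Xi_0}{\sqrt T}\sum_t K_{t-1}(\widehat\theta_T)(u_{t-1}-\widehat u^{(m)}_{t-1})^\top$. A naive Cauchy--Schwarz combined with $\frac{1}{T}\sum_t \|\widehat u^{(m)}_{t-1}-u_{t-1}\|^2 = O_p(d_T/T)$ only delivers $O_p(\sqrt{d_T})$, which is insufficient. The refinement must exploit the uncorrelatedness of the innovation $u_{t-1}$ with the lagged $x$'s entering $K_{t-1}(\theta_{0,T})$ and $Z_{m,t-2}$, turning the first-order term $\frac{1}{\sqrt T}\sum_t K_{t-1}(\theta_{0,T}) Z_{m,t-2}^\top (\widehat\Psi_{1:m}-\Psi_{0,1:m})^\top$ into an $O_p(1)\cdot \|\widehat\Psi_{1:m}-\Psi_{0,1:m}\|_F$ bound via a quasi-martingale expansion, while the summability of $(\Psi_{0,i})$ kills the truncation residual $\sum_{i>m}\Psi_{0,i} x_{t-1-i}$. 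Under the scaling $d_T^2=o(T)$ inherited from Theorem \ref{bound_prob_asym}, these two refinements deliver $\Delta_T=o_p(1)$, and combining with step (a) gives $\|\widehat{\gamma}-\gamma_0\| = O_p(T^{-1/2})$, as claimed.
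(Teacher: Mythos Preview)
Your route is genuinely different from the paper's. You exploit the closed-form OLS solution and argue directly on the score $\frac{1}{\sqrt T}\sum_t v_t K_{t-1}(\widehat\theta_T)^\top$ and the design matrix $\widehat G_T$. The paper instead runs a Fan--Li style local-minimizer argument: it fixes $\|\uu\|_2=C_\eps$, Taylor-expands $\Lb_T(\underline y;\widehat\theta_T,\gamma_0+\nu_T\uu)-\Lb_T(\underline y;\widehat\theta_T,\gamma_0)$ in $\gamma$, then further Taylor-expands the resulting gradient and Hessian in $\theta_T$ around $\theta_{0,T}$, bounding each piece by the moment Assumptions \ref{assumption_first_2ndstep}--\ref{assumption_third_cross_1_2ndstep} and Markov's inequality.

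The genuine gap is exactly where you flag it, and your proposed fix does not close it. Your ``refinement'' for term (ii) via uncorrelatedness cannot work as stated: $K_{t-1}(\theta_{0,T})$ contains $y_{t-1}^\ell$, which is correlated with every lag in $Z_{m,t-2}$, so $\frac{1}{\sqrt T}\sum_t K_{t-1}(\theta_{0,T}) Z_{m,t-2}^\top = \sqrt T\,\Eb[K_{t-1}Z_{m,t-2}^\top] + O_p(\cdot)$ carries a non-vanishing deterministic part. Even granting $\|\Eb[K_{t-1}Z_{m,t-2}^\top]\|_F=O(1)$ from autocovariance decay, the best this delivers is $\sqrt T\cdot O(1)\cdot\|\widehat\Psi_{1:m}-\Psi_{0,1:m}\|_F = O_p(\sqrt{d_T})$, not $O_p(1)$, and no ``quasi-martingale expansion'' removes that centering term. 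The paper avoids this by never separating $Z_{m,t-2}$ from $(\widehat\Psi-\Psi_0)$: it writes the score perturbation as $\sum_l \partial^2_{\gamma_k\theta_{l,T}}\Lb_T(\underline y;\theta_{0,T},\gamma_0)(\widehat\theta_{l,T}-\theta_{0,l,T})$ and invokes Assumption \ref{assumption_second_cross_2ndstep} \emph{directly} on $\upsilon_{kl,t}=\partial^2_{\gamma_k\theta_{l,T}}f$, obtaining $\big(\sum_{l=1}^{d_T}[\sum_k \uu_k\,\partial^2_{\gamma_k\theta_{l,T}}\Lb_T]^2\big)^{1/2}=O_p(\sqrt{d_T/T})$; Cauchy--Schwarz against $\|\widehat\theta_T-\theta_{0,T}\|_2=O_p(\sqrt{d_T/T})$ then gives a cross term of order $O_p(d_T/T)=o_p(T^{-1/2})$ under $d_T^2=o(T)$. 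In short, the paper \emph{assumes} precisely the moment bound your primitive argument is trying (and failing) to derive; if you keep the closed-form route, invoke Assumption \ref{assumption_second_cross_2ndstep} at that step rather than uncorrelatedness. A minor point: Assumption \ref{assumption_first_2ndstep} is a score-autocovariance condition, not a statement about positive-definiteness of $G_0$; the latter enters the paper's proof only implicitly through $\lambda_{\min}(\Eb[\nabla^2_{\gamma\gamma^\top}f])>0$ and is made explicit only later in Assumption \ref{assumption_varcov_2ndstep}.
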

\begin{remark}
To derive this explicit convergence rate, the moment \textcolor{black}{conditions in} Assumptions \ref{assumption_second_cross_2ndstep} and \textcolor{black}{\ref{assumption_third_cross_1_2ndstep}} are key to control for the first step estimator when the number of parameters is diverging. The framework can potentially be extended to a diverging $d_2:=d_{2,T}$, at the expense of more complicated assumptions. To keep our asymptotic arguments simple, we assumed $d_2$ \textcolor{black}{to be} fixed.  
\end{remark}
We now derive the asymptotic distribution of the second step estimator $\widehat{\gamma}$ \textcolor{black}{conditional on} $\widehat{\theta}_{\textcolor{black}{T}}$ \textcolor{black}{whose elements belong to $\Ac_{\textcolor{black}{T}}$}. 
\begin{theorem}\label{distr_second}
Under the assumptions of Theorem \ref{oracle_theorem}, assume $d^3_T=o(T)$ and $\widehat{\theta}_{\textcolor{black}{T}}$ is the oracle estimator, under the conditions of Theorem \ref{consistency_second} and under \textcolor{black}{Assumptions \ref{assumption_varcov_2ndstep}-\ref{assumption_third_cross_2_2ndstep}} in Appendix \ref{appendix_proof}, then
\beqw
\sqrt{T}
\big(\widehat{\gamma}-\gamma_0\big)
\underset{T \rightarrow\infty}{\overset{d}{\longrightarrow}} \Nc_{\Rb^{d_2}} \big(\mathbf{0},\Vb_{\gamma} \big),
\eeqw
with $\Vb_{\gamma} = \textcolor{black}{\Ub^{-1}}\textcolor{black}{\Upsilon_{\gamma \Ac_T}}\Vb_{\Ac_{\textcolor{black}{T}}\Ac_{\textcolor{black}{T}}} \textcolor{black}{\Upsilon^\top_{\gamma \Ac_T}}  \textcolor{black}{\Ub^{-1}} +  \textcolor{black}{\Ub^{-1}} \textcolor{black}{\Wb} \textcolor{black}{\Ub^{-1}} \textcolor{black}{+\Ub^{-1} \Upsilon_{\gamma \Ac_T} \Hb^{-1}_{\Ac_T\Ac_T}\Jb_{\Ac_T\gamma}\Ub^{-1} + \Ub^{-1} \Jb^\top_{\Ac_T\gamma}  \Hb^{-1}_{\Ac_T\Ac_T}\Upsilon^\top_{\gamma \Ac_T}\Ub^{-1}}$,
where $\Vb_{\Ac_{\textcolor{black}{T}}\Ac_{\textcolor{black}{T}}},\textcolor{black}{\Hb_{\Ac_T\Ac_T}}$ are defined in Theorem \ref{oracle_theorem}, $\textcolor{black}{\Ub:=\Eb[ \big(I_p\otimes K_{t-1}(\theta_{0,\textcolor{black}{T}})K_{t-1}(\theta_{0,\textcolor{black}{T}})^\top \big)]}$, \\$\textcolor{black}{\Wb:=\Eb[\big(K_{t-1}(\theta_{0,\textcolor{black}{T}})\otimes \big\{\textcolor{black}{y_{t}^{\ell}}-\Gamma_0K_{t-1}(\theta_{0,\textcolor{black}{T}})\big\}\big)\big(K_{t-1}(\theta_{0,\textcolor{black}{T}})\otimes \big\{\textcolor{black}{y_{t}^{\ell}}-\Gamma_0K_{t-1}(\theta_{0,\textcolor{black}{T}})\big\}\big)^\top]}$,\\ $\textcolor{black}{\Upsilon_{\gamma \Ac_T} := \Eb[\partial^2_{\gamma_l \theta_{k,T}}f(y_s,s\leq t;\theta_{0,\textcolor{black}{T}},\gamma_0)]_{1\leq l \leq d_2, k \in \Ac_T}}$ \textcolor{black}{and $\Jb_{\Ac_T\gamma} = \Eb[\nabla_{\theta_{\Ac_T}}\ell(y_s,s\leq t;\theta_{0,T})\nabla_{\gamma^\top}f(y_s,s\leq t;\theta_{0,T},\gamma_0)]$}.
\end{theorem}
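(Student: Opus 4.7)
The plan is to derive the asymptotic distribution by expanding the first-order condition of the second step estimator around $(\theta_{0,T},\gamma_0)$ and combining the expansion with the asymptotic linear representation of $\widehat{\theta}_T$ obtained from Theorem \ref{oracle_theorem}. Since $\widehat{\gamma}$ is an interior minimizer (by consistency, Theorem \ref{consistency_second}), it satisfies $\nabla_\gamma \Lb_T(\underline{y};\widehat{\theta}_T,\widehat{\gamma}) = 0$. A second-order Taylor expansion jointly in $(\gamma,\theta_T)$ around $(\gamma_0,\theta_{0,T})$, restricted to the oracle support $\Ac_T$ (which is justified since Theorem \ref{oracle_theorem} guarantees $\Pb(\widehat{\Ac}_T = \Ac_T)\to 1$), yields
\begin{equation*}
0 = \nabla_\gamma \Lb_T(\underline{y};\theta_{0,T},\gamma_0) + \nabla^2_{\gamma\gamma^\top}\Lb_T(\underline{y};\bar{\theta}_T,\bar{\gamma})(\widehat{\gamma}-\gamma_0) + \nabla^2_{\gamma\theta^\top_{\Ac_T}}\Lb_T(\underline{y};\bar{\theta}_T,\bar{\gamma})(\widehat{\theta}_T-\theta_{0,T})_{\Ac_T},
\end{equation*}
where $(\bar{\theta}_T,\bar{\gamma})$ lies between $(\widehat{\theta}_T,\widehat{\gamma})$ and $(\theta_{0,T},\gamma_0)$. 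Rearranging after multiplying by $\sqrt{T}$, I would obtain
\begin{equation*}
\sqrt{T}(\widehat{\gamma}-\gamma_0) = -\widehat{\Ub}^{-1}\bigl\{\sqrt{T}\nabla_\gamma \Lb_T(\underline{y};\theta_{0,T},\gamma_0) + \widehat{\Upsilon}_{\gamma \Ac_T}\sqrt{T}(\widehat{\theta}_T-\theta_{0,T})_{\Ac_T}\bigr\} + o_p(1),
\end{equation*}
with $\widehat{\Ub}$, $\widehat{\Upsilon}_{\gamma\Ac_T}$ the empirical Hessian and cross-Hessian, which by the moment/ergodicity assumptions converge in probability to $\Ub$ and $\Upsilon_{\gamma\Ac_T}$ respectively.

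Next, I would invoke the oracle representation implicit in the proof of Theorem \ref{oracle_theorem}: on the event $\widehat{\Ac}_T = \Ac_T$, the first-order conditions of the restricted problem give
\begin{equation*}
\sqrt{T}(\widehat{\theta}_T-\theta_{0,T})_{\Ac_T} = -\Hb^{-1}_{\Ac_T\Ac_T}\sqrt{T}\nabla_{\theta_{\Ac_T}}\Gb_T(\underline{y};\theta_{0,T}) + o_p(1),
\end{equation*}
which uses the penalty conditions to ensure the penalty contribution on $\Ac_T$ is $o_p(T^{-1/2})$ and the rate $d_T^3 = o(T)$ to absorb third-order Taylor remainders. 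Substituting this into the previous display yields the linearization
\begin{equation*}
\sqrt{T}(\widehat{\gamma}-\gamma_0) = -\Ub^{-1}\bigl\{\sqrt{T}\nabla_\gamma \Lb_T(\underline{y};\theta_{0,T},\gamma_0) - \Upsilon_{\gamma \Ac_T}\Hb^{-1}_{\Ac_T\Ac_T}\sqrt{T}\nabla_{\theta_{\Ac_T}}\Gb_T(\underline{y};\theta_{0,T})\bigr\} + o_p(1).
\end{equation*}

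The last step is a joint central limit theorem applied to the stacked score
\begin{equation*}
\sqrt{T}\bigl(\nabla_\gamma \Lb_T(\underline{y};\theta_{0,T},\gamma_0)^\top,\; \nabla_{\theta_{\Ac_T}}\Gb_T(\underline{y};\theta_{0,T})^\top\bigr)^\top,
\end{equation*}
viewed as a triangular array since both dimensions and coordinates depend on $T$. Under the stationarity/mixing and moment conditions on $(x_t)$ already used for Theorems \ref{bound_prob_asym}--\ref{oracle_theorem} (together with a Cramér--Wold device against a deterministic projection matrix $Q_T$ that tames the diverging dimension, as in Theorem \ref{oracle_theorem}), this stacked score is asymptotically normal with asymptotic variance composed of the blocks $\Wb$, $\Mb_{\Ac_T\Ac_T}$ and the cross moment $\Jb_{\Ac_T\gamma} = \Eb[\nabla_{\theta_{\Ac_T}}\ell \cdot \nabla_{\gamma^\top}f]$. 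The delta-method identity $\Vb_{\Ac_T\Ac_T} = \Hb^{-1}_{\Ac_T\Ac_T}\Mb_{\Ac_T\Ac_T}\Hb^{-1}_{\Ac_T\Ac_T}$ then yields the four-term sandwich formula of $\Vb_\gamma$: the $\Wb$-term is the pure second-step contribution, the $\Upsilon\Vb_{\Ac_T\Ac_T}\Upsilon^\top$ term is the first-step contribution propagated through the cross-Hessian, and the $\Jb$-based terms are the two cross-covariances.

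The main obstacle is controlling the first-step remainder uniformly over the diverging dimension $d_T$: the cross-Hessian $\widehat{\Upsilon}_{\gamma\Ac_T}$ acts on a vector of growing dimension $k_T$, so the error $\|\widehat{\Upsilon}_{\gamma\Ac_T}-\Upsilon_{\gamma\Ac_T}\|_F \cdot \sqrt{T}\|(\widehat{\theta}_T-\theta_{0,T})_{\Ac_T}\|_2$ must be shown $o_p(1)$, which is where the scaling $d_T^3=o(T)$ together with the moment assumptions \ref{assumption_second_cross_2ndstep}--\ref{assumption_third_cross_2_2ndstep} enter; similarly, the negligibility of the truncation residual $u_t - \widehat{u}^{(m)}_t$ in the score $\nabla_\gamma\Lb_T$ must be quantified using the VAR($\infty$) decay and the oracle consistency from \textbf{Step 1}. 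Once these remainder controls are in place, assembling the four variance blocks is a direct matrix computation.
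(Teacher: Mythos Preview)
Your proposal is correct and follows essentially the same route as the paper: Taylor-expand the second-step first-order condition jointly in $(\gamma,\theta_{\Ac_T})$, replace the empirical Hessian and cross-Hessian by their population limits $\Ub$ and $\Upsilon_{\gamma\Ac_T}$ via the moment assumptions, substitute the oracle linear representation $\sqrt{T}(\widehat{\theta}_T-\theta_{0,T})_{\Ac_T} = -\Hb^{-1}_{\Ac_T\Ac_T}\sqrt{T}\nabla_{\theta_{\Ac_T}}\Gb_T + o_p(1)$, and apply a martingale-array CLT to the stacked score. Two small refinements relative to the paper's execution: (i) no projection matrix $Q_T$ is needed here because after premultiplying by $\Upsilon_{\gamma\Ac_T}$ the stacked score lives in $\Rb^{2d_2}$ with $d_2$ fixed, so the paper verifies the Lindeberg condition directly for this fixed-dimensional array (Assumption \ref{assumption_array_2nd} and Shiryaev's Theorem \ref{lindeberg_shiryaev}); (ii) the truncation residual $u_t-\widehat{u}^{(m)}_t$ is not treated as a separate remainder but is absorbed entirely through the Taylor expansion in $\theta_T$, since $K_{t-1}(\theta_T)$ carries the dependence on $\widehat{u}^{(m)}_{t-1}$.
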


\begin{remark} The following comments can be emphasized:
\begin{itemize}
\item[(i)] The effect of the first step estimator is explicitly provided. It affects the variance of the second-step estimator through $\textcolor{black}{\Ub^{-1}}\textcolor{black}{\Upsilon_{\gamma \Ac_T}}\Vb_{\Ac_{\textcolor{black}{T}}\Ac_{\textcolor{black}{T}}} \textcolor{black}{\Upsilon^\top_{\gamma \Ac_T}}  \textcolor{black}{\Ub^{-1}}$ \textcolor{black}{and $\Ub^{-1} \Upsilon_{\gamma \Ac_T} \Hb^{-1}_{\Ac_T\Ac_T}\Jb_{\Ac_T\gamma}\Ub^{-1}$}.
\item[(ii)] The key difficulty is to establish that $\nabla^2_{\gamma\theta^\top_{\textcolor{black}{T}}}\Lb_T(\underline{y};\widehat{\theta}_{\textcolor{black}{T}},\gamma_0)$ converges in probability to some deterministic counterpart while controlling for the diverging dimension of $\widehat{\theta}_{\textcolor{black}{T}}$\textcolor{black}{, justifying} the \textcolor{black}{moment conditions of Assumption \ref{assumption_third_cross_2_2ndstep}}.
\end{itemize}
\end{remark}

The third step estimator is accurate in the sense that $p^{-1}\tr(\widehat{\Sigma}_{\zeta})$ always takes the true value $\Eb[p^{-1}\tr(\widehat{\Sigma}_{\zeta})]=\pi^2/2$ under the Gaussian assumption. To improve the estimator, we can consider the structure $\Sigma_{\zeta} = \sigma_{\zeta}^2 P_{\zeta}$, where $P_{\zeta}$ is a correlation matrix and $\sigma_{\zeta}^2$ is the variance for non-Gaussian assumption.
Neglecting the computational costs, we may estimate \textcolor{black}{positive-definite} $\Sigma_{\zeta}$ and $\Sigma_{\alpha}$ given $\widehat{\Phi}$ and $\widehat{\Xi}$ under the restrictions discussed below equation (\ref{eq:deco}).

\section{Empirical analysis}\label{empirical_illustration}

\subsection{Simulation experiment}

In this section, we empirically investigate the ability of the proposed penalization method to better capture complex variance-covariance processes. We simulate the $p$-dimensional stochastic process $(\eps_t)$ based on two data generating processes (DGP\textcolor{black}{s}): the multivariate ARCH and the BEKK processes. For the multivariate ARCH with $q^*$ lags - M-ARCH($q^*$) in the rest of the paper - case, we consider the DGP
\beqw
\left\{\begin{array}{llll}
\eps_t& = & H^{1/2}_t\eta_t,\\
H_t& = &\Omega + \overset{q^*}{\underset{k = 1}{\sum}} (I_p\otimes\eps^\top_{t-k} ) A_k (I_p\otimes\eps_{t-k}),
\end{array}\right.
\eeqw
where $q^*$ is the number of lagged matrices being functions of $\eps_{t-k}$ and the $p^2 \times p^2$ square matrices $A_k$ satisfy the stationarity conditions of Theorem 2 of Boussama (2006) together with the positivity condition given by Gouri\'eroux (1997). We generate the diagonal elements of $A_k$ from a uniform distribution $\Uc([0.01,0.05])$ and the off-diagonal ones from $\Uc([-0.01,0.01])$ under the ordering constraint $\forall k \geq 2, \forall i,j, \, |A_{k,ij}| \leq |A_{k-1,ij}|$. Note that these coefficients are more constrained (i.e.\textcolor{black}{,} closer to zero) when the dimension $p$ increases. As for the matrix $\Omega$, the diagonal and off-diagonal elements are simulated from $\Uc([0.1,0.2])$ and $\Uc([-0.01,0.01])$, respectively. As for the BEKK process, the DGP is based on 
\beqw
\left\{\begin{array}{llll}
\eps_t& = & H^{1/2}_t\eta_t,\\
H_t& = &\Omega + A \eps_{t-1}\eps^\top_{t-1} A^\top + B H_{t-1} B^\top,
\end{array}\right.
\eeqw
where $A,B$ are $p \times p$ matrices, satisfying the stationarity constraint $\|D^+_p\{ \big(A \otimes A \big) +  \big(B \otimes B \big)\} D_p\|_s < 1$, \textcolor{black}{and $D_p$ and $D^+_p$ are the duplication matrix and elimination matrix, respectively} (see Subsection 11.3 ``Stationarity of VEC and BEKK Models" of Francq and Zako\"ian (2010) for the stationarity condition and Remark 11.1 for the definition of the latter matrices). The entries of $A$ and $B$ are generated from the uniform distribution $\Uc([-0.8,0.8])$. The matrix $\Omega$ is generated as in the M-ARCH($q^*$) case. Unlike the M-ARCH($q^*$) case, the BEKK dynamic includes an autoregressive component through $B$, which motivated the use of larger lags when estimating our proposed parameterizations. In both proposed dynamics, we initialize the observations $(\eps_{k},\cdots,\eps_{1})$ with centered and unit variance multivariate Gaussian distribution, where $k = q^*$ in the M-ARCH model and $k = 1$ in the BEKK \textcolor{black}{model}. \textcolor{black}{Further}, conditional on the past $k$ observations, we generate $H_t$ and, thus, $\eps_t$ according to a centered multivariate Gaussian distribution with variance-covariance $H_t$.

\medskip

We consider the problem sizes, $p = 15, 50, 100$, and $T = 800$ observations for each of them. For the M-ARCH($q^*$)-based data generating process, we considered $q^*=2$ when $p=15$ and $q^*=1$ when $p=50, 100$. \textcolor{black}{Subsequently}, we propose to compare the true variance-covariance processes - BEKK and M-ARCH($q^*$) - and the estimated ones through our proposed MSV model and the scalar DCC together with the constant correlation model (CCC). The estimation of the DCC model is based on the classic two-step Gaussian QMLE, where the marginal conditional volatility processes are specified as GARCH(1,1) and a correlation targeting procedure is applied in the second step, providing an estimated trajectory $\widehat{H}^{\text{dcc}}_t$. The CCC is estimated thanks to a joint estimation of the GARCH(1,1) parameters and correlation parameters through a Gaussian QML, which provides an estimated process $\widehat{H}^{\text{ccc}}_t$. More details on the DCC and CCC can be found in Appendix \ref{competing_MGARCH}.

\medskip

Regarding our proposed variance-covariance dynamic, the penalized OLS-MSV, denoted as $\widehat{H}^{\text{ols,scad}}_t$ for the SCAD OLS-MSV, $\widehat{H}^{\text{ols,mcp}}_t$ for the MCP OLS-MSV, and the non-penalized version of the OLS-MSV denoted as $\widehat{H}^{\text{ols}}_t$, we considered two settings depending on the DGP. In the M-ARCH($q^*$) case, we set the number of lags in Step 1 in (\ref{approxm}) as $m = 10$ when $p=15$ and set \textcolor{black}{it} as $m=5$ for a dimension $p=50, 100$. In the BEKK case, due to the autoregressive nature of the latter dynamic, more lags were specified: we selected $m=30$ (resp. $m=15$, resp. $m=5$) when $p=15$ (resp. $p=50$, resp. $p=100$). For both DGPs, the correlation matrix of Step 4 is estimated as the sample correlation matrix estimator. In the SCAD and MCP cases, the coefficients $a$ and $b$ are set as $3.5$ - a value close to the optimal one as in Fan and Li (2001) - and $3$\textcolor{black}{, respectively}.

\medskip

We compare the true variance-covariance and the estimated variance-covariance processes through the aforementioned models. To do so, we specify a matrix distance, namely, the Frobenius norm, defined as $ || A - B ||_F := \sqrt{\tr((A - B)^\top(A - B))} $.
We compute the previous norm for each $t$ and for $A = H_t$ and 
$B \in  \big\{\widehat{H}^{\text{dcc}}_t,\widehat{H}^{\text{ccc}}_t,\widehat{H}^{\text{ols}}_t,\widehat{H}^{\text{ols,scad}}_t,\widehat{H}^{\text{ols,mcp}}_t \big\}$. We take the average of those quantities over $T=800$ periods of time. Since we repeat this experiment $100$ times, this provides an average gap for all those simulations. 

\medskip

By a cross-validation (CV) procedure - see, \textcolor{black}{for example}, Hastie \textcolor{black}{et} al. (2015, Chap. 2) - we selected the regularization parameter and emphasize that the standard CV developed for i.i.d. data \textcolor{black}{cannot} be used in our time series framework. To fix this issue, we used the hv-CV procedure devised by Racine (2000), which consists in leaving a gap between the test sample and the training sample, on both sides of the test sample.

\medskip

The average difference results are reported in Table \ref{simulation_average_march} for the M-ARCH($q^*$)\textcolor{black}{-based} DGP and Table \ref{simulation_average_bekk} for the BEKK-based DGP. First, our proposed method provides better in-sample results in terms of accuracy compared to standard MGARCH models. The results are closer to each other in the BEKK\textcolor{black}{-based} DGP case, essentially due to the presence of an autoregressive term, which is a priori in \textcolor{black}{favor} of the DCC/CCC model. Interestingly, our results \textcolor{black}{emphasize} the gain in considering a penalized MSV, especially when the dimension increases.

\subsection{Application to real data}

To assess the relevance of the proposed penalized method, we propose a real data experiment, where we focus on direct out-of-sample evaluation methods, which allow for \textcolor{black}{pair-wise} comparisons. They test whether some of the variance-covariance models provide better forecasts in terms of portfolio volatility behavior. Following the methodology of Engle and Colacito (2006), we develop a mean-variance portfolio approach to test the $H_t$ forecasts. Intuitively, if a conditional covariance process is misspecified, the minimum variance portfolio should emphasize such a shortcoming, compared to other models. \textcolor{black}{Here}, consider an investor who allocates a fixed amount between $p$ stocks, according to a minimum-variance strategy and independently at each time $t$:
\beq\label{portprob}
\min_{w_t} \; w^{\top}_tH_tw_t, \;\;\text{s.t.} \;\;\iota^{\top} w_t = 1,
\eeq
where $w_t$ is the $p \times 1$ vector of portfolio weights chosen at (the end of) time $t-1$, $\iota$ is a $p \times 1$ vector of $1$, and $H_t$ is the estimated conditional covariance matrix of the asset returns at time $t$. The solution of (\ref{portprob}) is given by the global minimum variance portfolio
$w_t = H^{-1}_t \iota/\iota^{\top}H^{-1}_t\iota$. Engle and Colacito (2006) \textcolor{black}{showed} that the realized portfolio volatility is the smallest when the variance-covariance matrices are correctly specified. \textcolor{black}{Consequently}, if wealth is allocated using two different dynamic models $i$ and $j$, whose predicted covariance matrices are $(H^{i}_t)$ and $(H^{j}_t)$, the strategy providing the smallest portfolio variance will be considered as the best. To do so, we consider a sequence of minimum variance portfolio weights $(w_{i,t})$ and $(w_{j,t})$, depending on the model. \textcolor{black}{Further}, we consider a distance based on the difference of the squared returns of the two portfolios, defined as
$u_{ij,t} = \left\{w^{\top}_{i,t} \epsilon_t\right\}^2 - \left\{w^{\top}_{j,t} \epsilon_t\right\}^2$.
The portfolio variances are the same if the predicted covariance matrices are the same. Thus, we test the null hypothesis
$\Hc_0: \; \mathbb{E}\left[u_{ij,t}\right] = 0$ by the Diebold and Mariano (1995) test. It consists of a least squares regression using HAC standard errors, given by
$u_{ij,t} =\alpha + \epsilon_{u,t}$, $\mathbb{E}[\epsilon_{u,t}]=0$, and we test $ \mathbf{H0}: \alpha = 0$.
If the mean of $u_{ij,t}$ is significantly positive (resp. negative), the forecasts given by the covariance matrices of model $j$ (resp. $i$) are preferred. \textcolor{black}{Following Engle and Colacito (2006), we compute the test statistic
\beqw
\widehat{\text{DM}}_{ij} = \frac{\sqrt{h}\,\overline{u}_{ij}}{\sqrt{\widehat{\text{Var}}(\sqrt{h}\,\overline{u}_{ij})}}, \;\overline{u}_{ij} =\frac{1}{h} \overset{h}{\underset{t=1}{\sum}}u_{ij,t}, 
\eeqw
with $h$ \textcolor{black}{being} the number of one-period ahead forecasts and $\sqrt{\widehat{\text{Var}}(\sqrt{h}\,\overline{u}_{ij})}$ is a \textcolor{black}{heteroscedasticity} and autocorrelation consistent
estimator of the asymptotic variance of $\sqrt{h}\,\overline{u}_{ij})$. In particular, $\widehat{\text{DM}}_{ij} \overset{d}{\longrightarrow} \Nc_{\Rb}(0,1)$ under $\textbf{H0}$.}

\medskip

We run the latter test to compare the scalar DCC (DCC), the Orthogonal GARCH (O-G), the BEKK (BEKK), and our OLS-MSV method (MSV) together with its penalized counterpart (denoted as MSV-SCA and MSV-MCP for the SCAD and MCP, respectively). We also consider the adaptive LASSO, denoted as MSV-AL, as an additional convex penalization technique developed by Zou (2006). In that case, we selected $\delta=3$ (resp. $\delta=4$) for the low-dimensional (resp. high-dimensional) portfolio as the power entering the stochastic weights and the first step estimator is the \textcolor{black}{non-penalized} OLS estimator, following Zou (2006). The definitions of the BEKK and O-GARCH processes are in Appendix \ref{competing_MGARCH}. No variance-targeting was applied in the BEKK.

\medskip

We consider two different data sets. \textcolor{black}{First is} a low-dimensional portfolio of daily financial returns composed of the MSCI stock index for the following 23 countries over the period December 1998 - March 2018: Australia, Austria, Belgium, Canada, Denmark, Finland, France, Germany, Greece, Hong Kong, Ireland, Italy, Japan, Netherlands, New Zealand, Norway, Portugal, Singapore, Spain, Sweden, Switzerland, the United-Kingdom, \textcolor{black}{and} the United-States. The second portfolio corresponds to the daily stock returns of the S\&P 100, where we considered firms that have been continuously included in the index from December 2015 until January 2020, excluding AbbVieInc., Dow Inc., Facebook, Inc., General Motors, Kraft Heinz, Kinder Morgan, and PayPal Holdings\textcolor{black}{. This} leaves $94$ assets and, thus, corresponds to the high-dimensional portfolio. The matrix $H_t$ in (\ref{portprob}) is deduced from the aforementioned dynamics that have been estimated on the sub-sample December 1998 - November 2015 (resp. February 2010 - January 2018), \textcolor{black}{that is 4000 (resp. $2000$) observations} for the MSCI (resp. S\&P 100) portfolio. Once the latter process is estimated in-sample, out-of-sample predictions \textcolor{black}{have been} plugged into the program~(\ref{portprob}) between December 2015 and March 2018 (resp. February 2018 - January 2020) for the MSCI (resp. S\&P 100) portfolio. For both data sets, the MSV-based models are estimated with $m=10$ lags, \textcolor{black}{and as an alternative, with $m=20$ lags. The calibration of $m$ is set in light of the scaling $d^3_T=o(T)$ in Theorem \ref{oracle_theorem} to satisfy the oracle property, or equivalently $d_T = O(T^c)$ with $0<c<1/3$\textcolor{black}{; that is}, there exists $L>0$ a finite constant such that $d_T \leq L  \, T^c$. For $p$ fixed and $T$ the in-sample size, then, $T^{c} \approx 16$ (resp. $13$) with $c=1/3-\eps$, $\eps\rightarrow 0$ for the MSCI (resp. S\&P) portfolio, providing a calibration order for the number of lags. The matrix forecast comparisons are provided in Tables \ref{DM-MSCI} and \ref{DM-SP}. First, the results \textcolor{black}{emphasize} that the proposed penalized OLS-MSV method outperforms the MGARCH-based competitors in both portfolio cases. For the low-dimensional portfolio, no clear-cut results are in favor of penalized MSV over non-penalized MSV. \textcolor{black}{However}, interestingly, fostering sparsity yields much better forecasting performances for the high-dimensional portfolio in the adaptive LASSO and SCAD (with $20$ lags), at least. Furthermore, the calibration $m=20$ for the low-dimensional portfolio does not provide better forecasts. The results are different in the high-dimensional portfolio: for each penalized MSV model, a larger $m$ with sparse estimation provides better performances; there is a gain in penalizing the MSV process over the non-penalized version of the MSV.}

\medskip

The results based on the Diebold-Mariano test are limited since they are \textcolor{black}{pair-wise} comparisons. \textcolor{black}{It} is not possible to ensure that an optimal test is clearly identified. To tackle this issue, Hansen, Lunde, and Nason (2003, 2011) proposed the Model Confidence Set (MCS) method, which is a testing framework for the null hypothesis of equivalence across subsets of models. Starting with a full set of candidate models, the MCS method sequentially trims the elements of this set, thus reducing the number of viable models. To be more precise, this approach performs an iterative selection procedure testing the null hypothesis of equal forecasting ability among all models included in a set $\Mc$ (the starting set containing all candidate models) for a given loss function. The null hypothesis is $\mathbf{H0:} \; \Eb[u_{ij,t}]=0$, $i>j$ for any $i,j \in \Mc$. To test $\mathbf{H0}$, Hansen, Lunde, and Nason (2003) proposed the following two statistics:
\beqw
t_R = \underset{i,j \in\Mc}{\max} \; |\frac{\overline{u}_{ij}}{\sqrt{\widehat{\text{Var}}(\overline{u}_{ij})}}|, \;\;\text{and} \;\;t_{SQ} = \underset{i,j \in\Mc, i>j}{\sum} \;\big(\frac{\overline{u}_{ij}}{\sqrt{\widehat{\text{Var}}(\overline{u}_{ij})}} \big)^2, \;\overline{u}_{ij} =\frac{1}{h} \overset{h}{\underset{t=1}{\sum}}u_{ij,t},
\eeqw
\textcolor{black}{where} $h$ is the number of one-period ahead forecasts and $\sqrt{\widehat{\text{Var}}(\overline{u}_{ij})}$ is the bootstrap estimate of the variance of $\overline{u}_{ij}$. The $p$-values of the test statistics are obtained using a bootstrap method. For a given confidence level, if $\mathbf{H0}$ is rejected, the worst performing model is excluded from the set $\Mc$, where such a model is identified using the \textcolor{black}{following} rule:
\beqw
j = \underset{j \in\Mc}{\arg\;\max} \;\big(\underset{i \in\Mc, i\neq j}{\sum}\overline{u}_{ij} \big) \big(\widehat{\text{Var}}(\underset{i \in\Mc, i\neq j}{\sum}\overline{u}_{ij}) \big)^{-1/2},
\eeqw
where the variance is obtained again using a bootstrap approach. Table \ref{MCS} reports the MCS results for both MSCI and S$\&$P 100 portfolios when applied to the loss function $u_{ij,t}$. We used the statistic $t_{SQ}$ to compute the $p$-values for three confidence levels ($5\%, \; 10\%$, and $20\%$). These $p$-values inform about the included/excluded models for a given confidence level. If a $p$-value is larger than the fixed confidence level, the corresponding model is included in the MCS test of statistically equivalent models. The higher the $p$-value, the better the model is in terms of prediction accuracy. For the MSCI portfolio results in Table \ref{MCS-MSCI}, we can draw the following remarks: our MSV specifications are always included for any confidence level, contrary to the standard MGARCH models; among the MSV specifications, the penalized processes provide better forecasting performances. For the high-dimensional S$\&$P 100 \textcolor{black}{portfolio} in Table \ref{MCS-SP}, only the adaptive LASSO (with $m=10$) and SCAD (with $m=20$) penalized models are included in the test for all confidence levels. These results support our findings in the Diebold-Mariano test. 

\section{Conclusion}\label{conclusion}

The focus of this \textcolor{black}{study} was \textcolor{black}{on} the estimation of high-dimensional MSV models. Our main contribution consisted in proposing an estimation framework that does not rely on standard MCMC/MCL methods but instead on a penalized OLS framework for state-space estimation. The corresponding large sample properties of the two-step estimator are derived. In particular, we considered a sparse first step estimator for a broad range of penalty functions when the number of parameters is diverging. We derived an explicit convergence rate of the second step estimator and its large sample distribution. The performances of our proposed method compared to standard MGARCH models are illustrated through simulated experiments together with an out-of-sample analysis for prediction accuracy, where our method clearly outperformed the competing MGARCH models. These results also emphasized the gain of penalization, which manages \textcolor{black}{the over-fitting problem}.

Various issues and extensions can be further considered. Our proposed model could be extended to accommodate a factor structure, long memory, and/or asymmetry, as discussed in Asai, McAleer, and Yu (2006) and Chib, Omori, and Asai (2009). 
\textcolor{black}{Besides} the factor setting considered by Chib, Nardari, and Shephard (2006) and So and Choi (2009), we can consider the rotation of the variables as in Harvey, Ruiz, and Shephard (1994) and Hafner and Preminger (2009). For the long memory property, So and Kwok (2006) extended Harvey, Ruiz, and Shephard (1994)'s model\textcolor{black}{; hence}, we can consider applying their work.
For including asymmetric effects, we may extend Harvey and Shephard (1996)\textcolor{black}{'s approach}; see Asai and McAleer (2009a) for the multivariate case. Another direction \textcolor{black}{would include} modeling directly the variance-covariance matrix $H_t$ without relying on the decomposition $D_t\Gamma D_t$. To do so, a $\log$-type dynamic on $H_t$ could be considered and the estimation could be managed through the development of a suitable state-\textcolor{black}{space-based} setting.







\newpage
\spacingset{0.7}

\section*{References}

\begin{small}
\begin{description}

\item Abadir, K.M. and Magnus, J.R. (2005). {\sl Matrix algebra.} Cambridge University Press.

\item Alexander, C. (2001). Orthogonal GARCH, In: Alexander, C. (Ed.), {\it Mastering Risk}. Financial Times-Prentice Hall, London, pp. 21--28.

\item Asai, M., Caporin, M., and McAleer, M. (2015). Forecasting Value-at-Risk Using Block Structure Multivariate Stochastic Volatility. {\it International Review of Economics \& Finance}, {\bf 40}, 40--50.


\item Asai, M., and McAleer, M. (2009a). Multivariate Stochastic Volatility, Leverage and News Impact Surfaces. {\it Econometrics Journal}, {\bf 12}, 292--309.

\item Asai, M., and McAleer, M. (2009b). The Structure of Dynamic Correlations in Multivariate Stochastic Volatility Models. {\it Journal of Econometrics}, {\bf 150}, 182--192.

\item Asai, M., McAleer, M., and Yu, J. (2006). Multivariate Stochastic Volatility: A Review. {\sl Econometric Reviews}, {\bf 25}, 145--175.

\item Baba, Y., Engle, R.F., Kraft, D., and Kroner, K. (1985). Multivariate Simultaneous Generalized ARCH. Unpublished Paper, University of California, San Diego. [Published as Engle and Kroner (1995)]


\item Bauwens L., Laurent, S., and Rombouts, J.K.V. (2006). Multivariate GARCH Models: A Survey. {\sl Journal of Applied Econometrics}, {\bf 21}, 79--109.


\item Billingsley, P. (1995). {\sl Probability and measure}. New York: John Wiley\&Sons.

\item Belloni A., Chernozhukov V. and Hansen C. (2013). Inference for high-dimensional sparse econometric models. {\it Advances in Economics and Econometrics. 10th World Congress, Econometric Society}, {\bf 3}, Cambridge: Cambridge University Press, 245--295.


\item Boussama, F. (2006). Ergodicit\'edes cha\^{i}nes de Markov \`a valeursdansunevari\'et\'ealg\'ebrique: application aux mod\`elesGARCHmultivari\'es. {\it ComptesRendus de l'Acad\'emiedes Sciences Paris}, 343, 275--278.

\item \textcolor{black}{Candelon, B., Colletaz, G., Hurlin, C., and Tokpavi, S. (2011). Backtesting Value-at-Risk: A GMM Duration-based Approach. {\it Journal of Financial Econometrics}, {\bf 9}, 314--343.}

\item Chang, Y. and Park, J.Y. (2002). On the Asymptotics of ADF Tests for Unit Roots. {\it Econometric Reviews}, {\bf 21}(4), 431--447

\item Chang, Y., Park, J.Y., and Song, K. (2006). Bootstrapping Cointegrating Regressions. {\it Journal of Econometrics}, {\bf 133}, 703--739.

\item Chib, S., Nardari, F., and Shephard, N. (2006). Analysis of High Dimensional Multivariate Stochastic Volatility Models. {\it Journal of Econometrics,} {\bf 134}, 341--371.

\item Chib, S., Omori, Y., and Asai, M. (2009). Multivariate Stochastic Volatility, In: Andersen, T.G., R.A. Davis, J.P. Kreiss, and T. Mikosch (Eds.). {\sl Handbook of Financial Time Series}. New York: Springer-Verlag, pp.365--400.

\item Dan{\'i}elsson, J. (1998). Multivariate Stochastic Volatility Models: Estimation and a Comparison with VGARCH Models. {\it Journal of Empirical Finance}, {\bf 5}, 155--173.

\item Diebold F, and Mariano R. (1995). Comparing predictive accuracy. {\it Journal of Business \& Economic Statistics}, {\bf 13}, 253–263.

\item Durbin, J. and Koopman, S.J. (1997). Monte Carlo Maximum Likelihood Estimation for Non-Gaussian State Space Models. {\it Biometrika}, {\bf 84}, 669--684.

\item Ding, L. and Vo, M. (2012). Exchange Rates and Oil Prices: A Multivariate Stochastic Volatility Analysis. {\it Quarterly Review of Economics and Finance}, {\bf 52}, 15--37.

\item Durbin, J. and Koopman, S.J. (2001). {\sl Time Series Analysis by State Space Methods}. Oxford: Oxford University Press.


\item Engle, R.F. (2002). Dynamic Conditional Correlation. {\it Journal of Business \& Economic Statistics}, {\bf 20}: 339--350.

\item Engle, R.F. and Colacito, R. (2006). Testing and Valuing Dynamic Correlations for Asset Allocation. {\it Journal of Business \& Economic Statistics}, {\bf 24}, 238--253.

\item Engle, R.F. and Kroner, K.F. (1995). Multivariate Simultaneous Generalized ARCH. {\it Econometric Theory} {\bf 11}, 122--150.


\item Fan, J. and Li, R. (2001). Variable Selection via Nonconcave Penalized Likelihood and its Oracle Properties. {\it Journal of the American Statistical Association}, {\bf 96} (456), 1348--1360.


\item Fan, J. and Peng, H. (2004). Nonconcave Penalized Likelihood with a Diverging Number of Parameters. {\it The Annals of Statistics}, {\bf 32} (3), 928--961.

\item Francq, C. and Zako\"{i}an, J.-M. (2010). {\it GARCH Models Structure, Statistical Inference and Financial Applications}. Chichester, West Sussex: John Wiley and Sons. 


\item Gouri\'eroux, C. (1997). {\it ARCH Models and Financial Applications}. Springer.

\item Granger, C.W.J. and Morris, M. (1976). Time Series Modeling and Interpretation. {\it Journal of the Royal Statistical Society, Series A}, {\bf 139}, 246--257.

\item Ghysels, E., Harvey, A.C., and Renault, E. (1996). Stochastic Volatility, In: Rao, C. R. and G.S. Maddala (Eds.) {\it Statistical Models in Finance (Handbook of Statistics)}. Amsterdam: North-Holland, pp. 119--191. 

\item Hafner, C.M. and Preminger, A. (2009). Asymptotic Theory for a Factor GARCH Model. {\it Econometric Theory}, {\bf 25}, 336--363.

\item Hannan, E.J. and Kavalieris, L. (1984). Multivariate Linear Time Series Models. {\it Advances in Applied Probability}, {\bf 16}, 492--561. 

\item Hannan, E. J. and Rissanen, J. (1982). Recursive Estimation of Mixed Autoregressive-Moving Average Order. {\it Biometrika}, {\bf 69}, 81--94. 

\item Hansen, P. R., Lunde, A., and Nason, J.M. (2003). Choosing the Best Volatility Models: The Model Confidence Set Approach. {\it Oxford Bulletin of Economics and Statistics}, {\bf 65}, 839-–8-61.

\item Hansen, P. R., Lunde, A., and Nason, J.M. (2011). The Model Confidence Set. {\it Econometrica}, {\bf 79} (2), 453--497.

\item Harvey, A. (1998). Long Memory in Stochastic Volatility,
In: Knight, J. and S. Satchell (Eds.), {\em Forecasting Volatility in Financial Markets}. Oxford: Butterworth-Haineman, 307--320.

\item Harvey, A. C., Ruiz, E., and Shephard, N. (1994). Multivariate Stochastic Variance Models. {\it Review of Economic Studies}, {\bf 61}, 247--264.

\item Harvey, A. C. and Shephard, N. (1996). Estimation of an Asymmetric Stochastic Volatility Model for Asset Returns. {\it Journal of Business \& Economic Statistics}, {\bf 14}, 429--434.

\item Hastie, T., Tibshirani, R., and Wainwright, M. (2015). {\sl Statistical Learning with Sparsity: The LASSO and Generalizations.} Monographs on Statistics and Applied Probability 143. Chapman and Hall.

\item Kastner, G., Fr{$\ddot{\mbox{u}}$}-Schnatter, S., and Lopes, H.F. (2017). Efficient Bayesian Inference for Multivariate Factor Stochastic Volatility Models. {\sl Journal of Computational and Graphical Statistics}, {\bf 26}, 905--917.

\item Kim, S., Shephard, N., and Chib, S. (1998). Stochastic Volatility: Likelihood Inference and Comparison with ARCH Models. {\it Review of Economic Studies}, {\bf 65}, 361--393.

\item Lewis, R. and Reinsel, G. C. (1985). Prediction of Multivariate Time Series by Autoregressive Model Fitting. {\it Journal of Multivariate Analysis}, {\bf 16}, 393--411.

\item Loh, P.L. and Wainwright, M.J. (2017). Support Recovery Without Incoherence: A Case for Non-convex Regularization. {\it The Annals of Statistics}, {\bf 45} (6), 2455--2482.

\item L\"utkepohl, H. (2006). {\sl New Introduction to Multiple Time Series Analyais}. New York: Springer-Verlag.


\item\textcolor{black}{Ng, S. (2013). Variable Selection in Predictive Regressions. {\it Handbook
of Economic Forecasting.} {\bf 2}, 752--789.}

\item Poignard, B. (2020). Asymptotic Theory of the Adaptive Sparse Group Lasso. {\it Annals of the Institute of Statistical Mathematics.} {\bf 72}, 297--328.

\item Poignard, B. and Fermanian, J.D. (2021). High-dimensional Penalized ARCH Processes. {\it Econometric Reviews}. {\bf 40} (1), 86--107.

\item Racine, J. (2000). Consistent Cross-validatory Model-selection for Dependent Data: HV-block Cross-validation. {\it Journal of Econometrics.} {\bf 99}, 39--61.

\item Ruiz, E. (1994). Quasi-maximum Likelihood Estimation of Stochastic Volatility Models. {\it Journal of Econometrics}, {\bf 63}, 289--306.

\item\textcolor{black}{Shephard, N. (1993). Fitting Nonlinear Time-series Models with Applications to Stochastic Variance Models. {\it Journal of Applied Econometrics}, {\bf 8}, S135--S152.}

\item\textcolor{black}{So, M. K. P. and Choi, C. Y. (2009). A Threshold Factor Multivariate Stochastic Volatility Model. {\it Journal of Forecasting}, {\bf 28}, 712--735.}

\item\textcolor{black}{So, M. K. P. and Kwok, S. W. (2006). A Multivariate Long Memory Stochastic Volatility Model. {\it Physica A: Statistical Mechanics and its Applications}, {\bf 362}, 450--464.}

\item\textcolor{black}{So, M. K. P., Li, W. K., and Lam, K. (1997). Multivariate Modelling of the Autoregressive Random Variance Process. {\sl Journal of Time Series Analysis}, {\bf 18}, 429--446.}

\item Shiryaev, A. N. (1991). {\it Probability}. Berlin: Springer.

\item\textcolor{black}{Taylor, S. J. (1994). Modeling Stochastic Volatility: A Review and Comparative Study. {\sl Mathematical Finance}, {\bf 4}, 183--204.}

\item Tibshirani, R. (1996). Regression Shrinkage and Selection via the LASSO. {\it Journal of the Royal Statistical Society. Series B}, {\bf 58} (1), 267--288.

\item Tse, Y.K. and Tsui, A.K.C. (2002). A Multivariate Generalized Autoregressive Conditional Heteroscedasticity Model with Time-varying Correlations. {\sl Journal of Business \& Economic Statistics}, {\bf 20}, 351--361.

\item Wilms, I., Basu, S., Bien, J., and Matteson D.S. (2021). Sparse Identification and Estimation of Large-scale Vector Autoregressive Moving Averages. To appear in {\it Journal of the American Statistical Association}.

\item Zhang, C.-H. (2010). Nearly Unbiased Variable Selection under Minimax Concave Penalty. {\it The Annals of Statistics}, {\bf 38}, 894--942.

\item Zou, H. (2006). The Adaptive LASSO and Its Oracle Properties. {\it Journal of the American Statistical Association}, {\bf 101}, No. 476, 1418--1429.


\end{description}
\end{small}



\newpage
\spacingset{1}

\begin{table}[h!]\centering\caption{\label{simulation_average_march} Average distance true/estimated covariance matrices - M-ARCH($q^*$) (100 replications)}
{\normalsize
\begin{tabular}{l | c | c | c | c | c }\hline&
$\widehat{H}^{\text{dcc}}_t$&$\widehat{H}^{\text{ccc}}_t$&$\widehat{H}^{\text{ols}}_t$&$\widehat{H}^{\text{ols,scad}}_t$&$\widehat{H}^{\text{ols,mcp}}_t$\\\hline
$p = 15$&$6.37$&$6.75$&$5.92$&$5.81$&$5.80$\\
$p = 50$&$17.10$&$18.38$&$16.80$&$15.87$&$15.85$\\
$p = 100$&$64.05$&$70.51$&$62.09$&$59.84$&$59.66$\\
\hline\end{tabular}
}
\end{table}	

\begin{table}[h!]\centering\caption{\label{simulation_average_bekk} Average distance true/estimated covariance matrices - BEKK (100 replications)}
{\normalsize
\begin{tabular}{l | c | c | c | c | c }\hline&
$\widehat{H}^{\text{dcc}}_t$&$\widehat{H}^{\text{ccc}}_t$&$\widehat{H}^{\text{ols}}_t$&$\widehat{H}^{\text{ols,scad}}_t$&$\widehat{H}^{\text{ols,mcp}}_t$\\\hline
$p = 15$&$21.75$&$22.02$&$21.60$&$20.57$&$21.18$\\
$p = 50$&$114.40$&$115.63$&$115.78$&$111.16$&$113.14$\\
$p = 100$&$292.28$&$295.99$&$291.73$&$285.88$&$287.97$
\\
\hline\end{tabular}
}
\end{table}	

\newpage

\begin{landscape}

\begin{table}[h]
\begin{subtable}[h]{0.45\textwidth}
\centering
\scalebox{0.8}{\begin{tabular}{c| c | c | c | c | c | c | c | c | c | c | c}\hline&\color{black}DCC&\color{black} O-G &\color{black}BEKK&\color{black}$\text{MSV}_{10}$&\color{black}$\text{MSV-AL}_{10}$&\color{black}$\text{MSV-SCA}_{10}$&\color{black}$\text{MSV-MCP}_{10}$&\color{black}$\text{MSV}_{20}$&\color{black}$\text{MSV-AL}_{20}$&\color{black}$\text{MSV-SCA}_{20}$&\color{black}$\text{MSV-MCP}_{20}$\\
\hline

\color{black} DCC&&\color{black}$\mathbf{-2.259}^c$&\color{black}$\mathbf{-4.302}^c$&\color{black}$\mathbf{7.408}^c$&\color{black}$\mathbf{7.541}^c$&\color{black}$\mathbf{7.188}^c$&\color{black}$\mathbf{7.441}^c$&\color{black}$\mathbf{7.215}^c$&\color{black}$\mathbf{7.496}^c$&\color{black}$\mathbf{7.466}^c$&\color{black}$\mathbf{7.399}^c$\\

\color{black}O-G  &\color{black}$\mathbf{2.259}^c$&&\color{black}$\mathbf{-4.852}^c$&\color{black}$\mathbf{6.207}^c$&\color{black}$\mathbf{6.241}^c$&\color{black}$\mathbf{6.056}^c$&\color{black}$\mathbf{6.228}^c$&\color{black}$\mathbf{6.196}^c$&\color{black}$\mathbf{6.279}^c$&\color{black}$\mathbf{6.298}^c$&\color{black}$\mathbf{6.270}^c$\\

\color{black}BEKK&\color{black}$\mathbf{4.302}^c$&\color{black}$\mathbf{4.852}^c$&&\color{black}$\mathbf{6.249}^c$&\color{black}$\mathbf{6.281}^c$&\color{black}$\mathbf{6.204}^c$&\color{black}$\mathbf{6.264}^c$&\color{black}$\mathbf{6.256}^c$&\color{black}$\mathbf{6.292}^c$&\color{black}$\mathbf{6.323}^c$&\color{black}$\mathbf{6.296}^c$\\

\color{black}$\text{MSV}_{10}$&\color{black}$\mathbf{-7.408}^c$&\color{black}$\mathbf{-6.207}^c$&\color{black}$\mathbf{-6.249}^c$&&\color{black}$0.667$&\color{black}$\mathbf{-1.535}^a$&\color{black}$1.257$&\color{black}$0.542$&\color{black}$0.641$&\color{black}$0.983$&\color{black}$1.276$\\

\color{black}$\text{MSV-AL}_{10}$&\color{black}$\mathbf{-7.541}^c$&\color{black}$\mathbf{-6.241}^c$&\color{black}$\mathbf{-6.281}^c$&\color{black}$-0.667$&&\color{black}$\mathbf{-1.843}^b$&\color{black}$-0.299$&\color{black}$-0.096$&\color{black}$-0.059$&\color{black}$0.237$&\color{black}$0.304$\\

\color{black}$\text{MSV-SCA}_{10}$&\color{black}$\mathbf{-7.188}^c$&\color{black}$\mathbf{-6.056}^c$&\color{black}$\mathbf{-6.204}^c$&\color{black}$\mathbf{1.535}^a$&\color{black}$\mathbf{1.843}^b$&&\color{black}$\mathbf{2.078}^b$&\color{black}$\mathbf{1.299}^a$&\color{black}$\mathbf{1.521}^a$&\color{black}$\mathbf{2.008}^b$&\color{black}$\mathbf{1.902}^b$\\

\color{black}$\text{MSV-MCP}_{10}$&\color{black}$\mathbf{-7.441}^c$&\color{black}$\mathbf{-6.228}^c$&\color{black}$\mathbf{-6.264}^c$&\color{black}$-1.257$&\color{black}$0.299$&\color{black}$\mathbf{-2.078}^b$&&\color{black}$0.157$&\color{black}$0.227$&\color{black}$0.619$&\color{black}$0.839$\\

\color{black}$\text{MSV}_{20}$&\color{black}$\mathbf{-7.215}^c$&\color{black}$\mathbf{-6.196}^c$&\color{black}$\mathbf{-6.256}^c$&\color{black}$-0.542$&\color{black}$0.096$&\color{black}$\mathbf{-1.299}^a$&\color{black}$-0.157$&&\color{black}$0.076$&\color{black}$0.547$&\color{black}$1.161$\\

\color{black}$\text{MSV-AL}_{20}$&\color{black}$\mathbf{-7.496}^c$&\color{black}$\mathbf{-6.279}^c$&\color{black}$\mathbf{-6.292}^c$&\color{black}$-0.641$&\color{black}$0.059$&\color{black}$\mathbf{-1.521}^a$&\color{black}$-0.227$&\color{black}$-0.076$&&\color{black}$0.609$&\color{black}$0.826$\\

\color{black}$\text{MSV-SCA}_{20}$&\color{black}$\mathbf{-7.466}^c$&\color{black}$\mathbf{-6.298}^c$&\color{black}$\mathbf{-6.323}^c$&\color{black}$-0.983$&\color{black}$-0.237$&\color{black}$\mathbf{-2.008}^b$&\color{black}$-0.619$&\color{black}$-0.547$&\color{black}$-0.609$&&\color{black}$0.101$\\

\color{black}$\text{MSV-MCP}_{20}$&\color{black}$\mathbf{-7.399}^c$&\color{black}$\mathbf{-6.270}^c$&\color{black}$\mathbf{-6.296}^c$&\color{black}$-1.276$&\color{black}$-0.304$&\color{black}$\mathbf{-1.902}^b$&\color{black}$-0.839$&\color{black}$-1.161$&\color{black}$-0.826$&\color{black}$-0.101$&\\

\hline
\end{tabular}}\caption{\color{black} MSCI portfolio}\label{DM-MSCI}
\end{subtable}
\hfill\vspace*{0.8cm}\\
\begin{subtable}[h]{0.45\textwidth}
\centering
\scalebox{0.8}{\begin{tabular}{c| c | c | c | c | c | c | c | c | c | c | c}\hline&\color{black}DCC&\color{black} O-G &\color{black}BEKK&\color{black}$\text{MSV}_{10}$&\color{black}$\text{MSV-AL}_{10}$&\color{black}$\text{MSV-SCA}_{10}$&\color{black}$\text{MSV-MCP}_{10}$&\color{black}$\text{MSV}_{20}$&\color{black}$\text{MSV-AL}_{20}$&\color{black}$\text{MSV-SCA}_{20}$&\color{black}$\text{MSV-MCP}_{20}$\\
\hline

\color{black} DCC&&\color{black}$\mathbf{-4.785}^c$&\color{black}$\mathbf{-4.698}^c$&\color{black}$\mathbf{3.694}^c$&\color{black} $\mathbf{7.572}^c$&\color{black} $\mathbf{3.652}^c$&\color{black}$\mathbf{3.645}^c$&\color{black}$\mathbf{2.791}^c$&\color{black}$\mathbf{5.056}^c$&\color{black}$\mathbf{4.587}^c$&\color{black}$\mathbf{2.792}^c$\\

\color{black}O-G  &\color{black}  $\mathbf{4.785}^c$&&\color{black}$0.732$&\color{black}$\mathbf{5.646}^c$&\color{black} $\mathbf{8.562}^c$&\color{black} $\mathbf{5.605}^c$&\color{black} $\mathbf{5.609}^c$&\color{black}$\mathbf{4.996}^c$&\color{black}$\mathbf{6.496}^c$&\color{black}$\mathbf{6.202}^c$&\color{black}$\mathbf{4.995}^c$\\

\color{black}BEKK&\color{black}$\mathbf{4.698}^c$&\color{black} $-0.732$&&\color{black}$\mathbf{5.754}^c$&\color{black}$\mathbf{8.781}^c$&\color{black} $\mathbf{5.713}^c$&\color{black} $\mathbf{5.716}^c$&\color{black} $\mathbf{5.090}^c$&\color{black}$\mathbf{6.651}^c$&\color{black} $\mathbf{6.318}^c$&\color{black}   $\mathbf{5.088}^c$\\

\color{black}$\text{MSV}_{10}$&\color{black}  $\mathbf{-3.694}^c$&\color{black} $\mathbf{-5.646}^c$&\color{black} $\mathbf{-5.754}^c$&\color{black} &\color{black} $\mathbf{5.902}^c$&\color{black}$-1.117$&\color{black}$\mathbf{-1.732}^b$&\color{black} $\mathbf{-7.651}^c$&\color{black} $\mathbf{5.701}^c$&\color{black} $\mathbf{5.235}^c$&\color{black}$1.182$\\

\color{black}$\text{MSV-AL}_{10}$&\color{black}  $\mathbf{-7.572}^c$&\color{black} $\mathbf{-8.562}^c$&\color{black}$\mathbf{-8.781}^c$&\color{black} $\mathbf{-5.902}^c$&\color{black}  &\color{black}$\mathbf{-5.936}^c$&\color{black} $\mathbf{-5.928}^c$&\color{black} $\mathbf{-6.870}^c$&\color{black}$\mathbf{-4.704}^c$&\color{black} $\mathbf{-4.803}^c$&\color{black}$\mathbf{-6.828}^c$\\

\color{black}$\text{MSV-SCA}_{10}$&\color{black}  $\mathbf{-3.652}^c$&\color{black}$\mathbf{-5.605}^c$&\color{black} $\mathbf{-5.713}^c$&\color{black} $1.117$&\color{black} $\mathbf{5.936}^c$&\color{black}   &\color{black}$-0.2883$&\color{black} $\mathbf{-7.735}^c$&\color{black}  $\mathbf{5.798}^c$&\color{black} $\mathbf{5.427}^c$&\color{black}$\mathbf{-7.201}^c$\\

\color{black}$\text{MSV-MCP}_{10}$&\color{black}  $\mathbf{-3.645}^c$&\color{black} $\mathbf{-5.609}^c$&\color{black}$\mathbf{-5.716}^c$&\color{black} $\mathbf{1.732}^b$&\color{black} $\mathbf{5.928}^c$&\color{black} $0.288$&\color{black} &\color{black} $\mathbf{-7.602}^c$&\color{black} $\mathbf{5.746}^c$&\color{black} $\mathbf{5.277}^c$&\color{black} $\mathbf{-7.095}^c$\\

\color{black}$\text{MSV}_{20}$&\color{black}  $\mathbf{-2.791}^c$&\color{black} $\mathbf{-4.996}^c$&\color{black} $\mathbf{-5.090}^c$&\color{black} $\mathbf{7.651}^c$&\color{black} $\mathbf{6.870}^c$&\color{black} $\mathbf{7.735}^c$&\color{black}  $\mathbf{7.602}^c$&\color{black}   &\color{black}  $\mathbf{7.905}^c$&\color{black} $\mathbf{8.406}^c$&\color{black} $0.312$\\

\color{black}$\text{MSV-AL}_{20}$&\color{black}  $\mathbf{-5.056}^c$&\color{black} $\mathbf{-6.496}^c$&\color{black}$\mathbf{-6.651}^c$&\color{black} $\mathbf{-5.701}^c$&\color{black} $\mathbf{4.704}^c$&\color{black} $\mathbf{-5.798}^c$&\color{black}$\mathbf{-5.746}^c$&\color{black} $\mathbf{-7.905}^c$&\color{black} &\color{black} $\mathbf{-1.991}^b$&\color{black}$\mathbf{-7.781}^c$\\

\color{black}$\text{MSV-SCA}_{20}$&\color{black}  $\mathbf{-4.587}^c$&\color{black}$\mathbf{-6.202}^c$&\color{black} $\mathbf{-6.318}^c$&\color{black}$\mathbf{-5.235}^c$&\color{black} $\mathbf{4.803}^c$&\color{black} $\mathbf{-5.427}^c$&\color{black}$\mathbf{-5.277}^c$&\color{black} $\mathbf{-8.406}^c$&\color{black} $\mathbf{1.991}^b$&\color{black} &\color{black}$\mathbf{-8.366}^c$\\

\color{black}$\text{MSV-MCP}_{20}$&\color{black} $\mathbf{-2.792}^c$&\color{black} $\mathbf{-4.995}^c$&\color{black} $\mathbf{-5.088}^c$&\color{black} $-1.182$&\color{black} $\mathbf{6.828}^c$&\color{black} $\mathbf{7.201}^c$&\color{black} $\mathbf{7.095}^c$&\color{black} $-0.312$&\color{black} $\mathbf{7.781}^c$&\color{black}  $\mathbf{8.366}^c$&\color{black}

\\
\hline
\end{tabular}}\caption{\color{black}S\&P 100 portfolio}\label{DM-SP}\vspace*{0.5cm}
\end{subtable}
\caption{\color{black} This table reports the out-of-sample t-statistics of the Diebold-Mariano test for the MSCI (Table \ref{DM-MSCI}) and S\&P 100 (Table \ref{DM-SP}) portfolios that checks the equality between covariance matrix forecasts using the loss function $u_{ij,t}$ over the period December 2015 - March 2018 and February 2018 - January 2020, respectively. This loss function is defined as the difference between squared realized returns of alternative Multivariate Variance-Covariance models. When the null hypothesis of equal predictive accuracy is rejected, a positive number is evidence in favour of the model in the column. $a$, $b$, $c$: rejection of the null hypothesis at 10\%, 5\% and 1\% respectively. The MSV models are indexed by the number of lags $m$.}
\label{DM}
\end{table}

\end{landscape}

\newpage

\begin{table}[h]
\begin{subtable}[h]{0.45\textwidth}
\centering
\begin{tabular}{c | c | c | c} \hline
\color{black}Confidence level &\color{black}$\; 5\%\;$&\color{black} $10\%$&\color{black} $20\%$\\
\hline
\color{black}DCC&\color{black} $0.001$&\color{black}$0.001$&\color{black}\color{black}$0.001$\\
\color{black}O-G &\color{black}$0.001$&\color{black}$0.001$&\color{black}\color{black}$0.001$\\
\color{black}BEKK&\color{black}$0.008$&\color{black}$0.008$&\color{black} \color{black}$0.009$\\
\color{black}$\text{MSV}_{10}$&\color{black}$\mathbf{0.723}$&\color{black}$\mathbf{0.704}$&\color{black} $\mathbf{0.697}$\\
\color{black}$\text{MSV-AL}_{10}$&\color{black}$\mathbf{0.934}$&\color{black}$\mathbf{0.916}$&\color{black}$\mathbf{0.929}$\\
\color{black}$\text{MSV-SCA}_{10}$&\color{black}$\mathbf{0.103}$&\color{black}$\mathbf{0.177}$&\color{black}$0.111$\\
\color{black}$\text{MSV-MCP}_{10}$&\color{black}$\mathbf{0.723}$&\color{black}$\mathbf{0.704}$&\color{black}$\mathbf{0.697}$\\
\color{black}$\text{MSV}_{20}$&\color{black}$\mathbf{0.723}$&\color{black}$\mathbf{0.704}$&\color{black}$\mathbf{0.697}$\\
\color{black}$\text{MSV-AL}_{20}$&\color{black}$\mathbf{0.723}$&\color{black} $\mathbf{0.865}$&\color{black}$\mathbf{0.862}$\\
\color{black}$\text{MSV-SCA}_{20}$&\color{black}$\mathbf{0.934}$&\color{black}$\mathbf{0.916}$&\color{black}$\mathbf{0.929}$\\
\color{black}$\text{MSV-MCP}_{20}$&\color{black}$\mathbf{1.000}$&\color{black}$\mathbf{1.000}$&\color{black}$\mathbf{1.000}$\\\hline
\end{tabular}
\caption{\color{black}MCS $p$-values, MSCI portfolio} \label{MCS-MSCI}
\end{subtable}
\hfill
\begin{subtable}[h]{0.45\textwidth}
\centering
\begin{tabular}{c | c | c | c} \hline
\color{black}Confidence level &\color{black}$\; 5\%\;$&\color{black} $10\%$&\color{black} $20\%$\\
\hline
\color{black}DCC&\color{black}$0.001$&\color{black}$0.001$&\color{black}$0.002$\\
\color{black}O-G &\color{black}$0.001$&\color{black}$0.001$&\color{black}$0.001$\\
\color{black}BEKK&\color{black}$0.001$&\color{black}$0.001$&\color{black}$0.001$\\
\color{black}$\text{MSV}_{10}$&\color{black}$0.001$&\color{black}$0.001$&\color{black}$0.001$\\
\color{black}$\text{MSV-AL}_{10}$&\color{black}$\mathbf{1.000}$&\color{black}$\mathbf{1.000}$&\color{black}$\mathbf{1.000}$\\
\color{black}$\text{MSV-SCA}_{10}$&\color{black}$0.001$&\color{black}$0.001$&\color{black}$0.002$\\
\color{black}$\text{MSV-MCP}_{10}$&\color{black}$0.001$&\color{black}$0.001$&\color{black}$0.002$\\
\color{black}$\text{MSV}_{20}$&\color{black}$0.001$&\color{black}$0.001$&\color{black}$0.001$\\
\color{black}$\text{MSV-AL}_{20}$&\color{black}$\mathbf{0.06}$&\color{black}$0.007$&\color{black}$0.05$\\
\color{black}$\text{MSV-SCA}_{20}$&\color{black}$\mathbf{0.301}$&\color{black}$\mathbf{0.280}$&\color{black}$\mathbf{0.250}$\\
\color{black}$\text{MSV-MCP}_{20}$&\color{black}$0.001$&\color{black}$0.001$&\color{black} $0.001$\\\hline

\end{tabular}
\caption{\color{black}MCS $p$-values, S$\&$P 100 portfolio}\label{MCS-SP}
\end{subtable}
\caption{\color{black}These tables report the MCS results using the loss $u_{ij,t}$ and the statistic $t_{SQ}$ for different confidence levels. Bold values indicate that the models are included in the confidence set, that is they are statistically equivalent in terms of prediction accuracy. For each confidence level, the lowest $p$-value corresponds to the first model being excluded; the largest $p$-value corresponds to the best performing model. The MSV models are indexed by the number of lags $m$.}
\label{MCS}
\end{table}

\newpage

\appendix

\section{Intermediary results}\label{inter}

In this Section, we introduce some technical results \textcolor{black}{used} in our proofs. The dependent setting requires more sophisticated probabilistic tools to derive asymptotic results than the i.i.d. case. 
When we consider a diverging number of parameters, the empirical criterion can be viewed as a sequence of dependent arrays for which we need refined asymptotic results. Shiryaev (1991) proposed a version of the central limit theorem for dependent sequence of arrays, provided this sequence is a square integrable martingale difference satisfying the so-called Lindeberg condition. A similar theorem can be found in Billingsley (1995, Theorem 35.12, p.476). \textcolor{black}{We provide here Theorem 4 of Shiryaev (see Theorem 4, p.543, Shiryaev, 1991) that we use} to derive the asymptotic distribution of the penalized estimator (\ref{obj_crit_gen}) in Theorem \ref{oracle_theorem}.
\begin{theorem}[Shiryaev, 1991]\label{lindeberg_shiryaev}
Let a sequence of square integrable martingale differences $\xi^n=(\xi_{nk},\Fc^n_k), n \geq 1$, with $\Fc^n_k=\sigma(\xi_{ns}, s \leq k)$, satisfy the Lindeberg condition for any $0 < t \leq 1$, for $\eps>0$, given by
\beqw
\overset{\lfloor nt \rfloor}{\underset{k=0}{\sum}}\Eb\Big[\xi^2_{nk}\mathbf{1}_{|\xi_{nk}|>\eps}|\Fc^n_{k-1}\Big] \overset{\Pb}{\underset{n \rightarrow \infty}{\longrightarrow}} 0,
\eeqw
then if $\overset{\lfloor nt \rfloor}{\underset{k=0}{\sum}}\Eb\Big[\xi^2_{nk}|\Fc^n_{k-1}\Big] \overset{\Pb}{\underset{n \rightarrow \infty}{\longrightarrow}} \sigma^2_t$, or $\overset{\lfloor nt \rfloor}{\underset{k=0}{\sum}}\xi^2_{nk} \overset{\Pb}{\underset{n \rightarrow \infty}{\longrightarrow}} \sigma^2_t$, then $\overset{\lfloor nt \rfloor}{\underset{k=0}{\sum}}\xi_{nk} \overset{d}{\underset{n \rightarrow \infty}{\longrightarrow}} \Nc_{\Rb}(0,\sigma^2_t)$.
\end{theorem}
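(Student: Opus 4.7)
The plan is to establish this martingale central limit theorem by the method of characteristic functions, combining a conditional Taylor expansion with truncation controlled by the Lindeberg hypothesis. Fix $u \in \Rb$ and $t \in (0,1]$, and set $S_{n,t}=\sum_{k=1}^{\lfloor nt\rfloor}\xi_{nk}$. It suffices to show $\Eb[\exp(iuS_{n,t})] \to \exp(-u^2\sigma^2_t/2)$, which yields the stated convergence in distribution by L\'evy's continuity theorem. Using the elementary bound $|e^{iy}-1-iy+y^2/2|\leq \min(y^2,|y|^3/6)$ and the martingale-difference property $\Eb[\xi_{nk}|\Fc^n_{k-1}]=0$, I would expand each conditional factor as
\beqw
\Eb[\exp(iu\xi_{nk})|\Fc^n_{k-1}] = 1 - \frac{u^2}{2}\Eb[\xi^2_{nk}|\Fc^n_{k-1}] + r_{nk},
\eeqw
where splitting on $\{|\xi_{nk}|>\delta\}$ gives $|r_{nk}|\leq u^2\Eb[\xi^2_{nk}\mathbf{1}_{|\xi_{nk}|>\delta}|\Fc^n_{k-1}] + \tfrac{|u|^3\delta}{6}\Eb[\xi^2_{nk}|\Fc^n_{k-1}]$ for every $\delta>0$. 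Summing in $k$ and invoking Lindeberg kills the first piece in probability, while the conditional-variance hypothesis bounds the second by $\tfrac{|u|^3\delta}{6}\sigma^2_t + o_p(1)$, which becomes arbitrarily small as $\delta\to 0$.

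Next, I would compare the random products $P_n := \prod_{k=1}^{\lfloor nt\rfloor}\exp(iu\xi_{nk})$ and $Q_n := \prod_{k=1}^{\lfloor nt\rfloor}\exp(-\tfrac{u^2}{2}\Eb[\xi^2_{nk}|\Fc^n_{k-1}])$ via nested conditioning. Using the identity $a_1\cdots a_N - b_1\cdots b_N=\sum_{k=1}^N b_1\cdots b_{k-1}(a_k-b_k)a_{k+1}\cdots a_N$, together with $|\exp(iu\xi_{nk})|=1$ and $|Q_n|\leq 1$, the tower property applied recursively to the outermost factor yields $|\Eb[P_n]-\Eb[Q_n]|\leq \sum_k \Eb[|r_{nk}'|]$ where $r_{nk}'$ absorbs both the remainder $r_{nk}$ and the error between $1-z$ and $e^{-z}$ (which is $O(z^2)$ and summable to zero since $\max_k \Eb[\xi^2_{nk}|\Fc^n_{k-1}]\overset{\Pb}{\longrightarrow}0$ by Lindeberg). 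The previous paragraph forces this sum to $o_p(1)$, and since $Q_n = \exp(-\tfrac{u^2}{2}\sum_k\Eb[\xi^2_{nk}|\Fc^n_{k-1}])\overset{\Pb}{\longrightarrow} \exp(-u^2\sigma^2_t/2)$ by the conditional-variance hypothesis and bounded convergence, the desired characteristic-function limit follows. For the alternative hypothesis $\sum_k\xi^2_{nk}\overset{\Pb}{\longrightarrow}\sigma^2_t$, I would separately establish the asymptotic equivalence $\sum_k(\xi^2_{nk}-\Eb[\xi^2_{nk}|\Fc^n_{k-1}])\overset{\Pb}{\longrightarrow}0$ by truncating the increments at level $\delta$, bounding the second moment of the resulting martingale-difference sum via orthogonality of increments, and using Lindeberg once more for the discarded tail; this reduces the second condition to the first.

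The main obstacle is the telescoping of the complex-valued product in the second paragraph. Each individual factor incurs only a small approximation error, but the chain of factors has length $\lfloor nt\rfloor$ that grows with $n$, so a naive bound would accumulate errors of order one. The cancellation that saves the argument comes from the martingale property, which eliminates the first-order terms at every step of the nested conditioning and leaves only the quadratic remainders whose aggregate is globally controlled by Lindeberg. Formalizing this demands a uniform-in-$k$ control of the conditional approximation error together with careful use of predictability of the approximating factors so that the tower property can be iterated cleanly through all $\lfloor nt\rfloor$ stages.
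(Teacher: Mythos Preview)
The paper does not prove this statement at all: it is quoted verbatim as Theorem 4 of Shiryaev (1991, p.~543) and used as an off-the-shelf tool in the proofs of Theorems \ref{oracle_theorem} and \ref{distr_second}. There is therefore no ``paper's own proof'' to compare against.

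Your sketch is the classical characteristic-function route to the martingale CLT (as in Hall and Heyde, or Billingsley, Theorem 35.12), and the outline is sound. A couple of points deserve care if you flesh it out. First, in the telescoping step you write $|\Eb[P_n]-\Eb[Q_n]|\leq\sum_k\Eb[|r_{nk}'|]$, but the product $a_{k+1}\cdots a_N$ on the right of the telescoping identity is not $\Fc^n_{k-1}$-measurable, so you cannot simply pull the absolute value inside and sum. The standard fix is to condition from the inside out: take $\Eb[\cdot|\Fc^n_{N-1}]$ first, replace the innermost $\exp(iu\xi_{nN})$ by its conditional mean, then iterate; predictability of the $Q_n$-factors is exactly what makes this clean, as you note in your last paragraph. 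Second, the step ``$\max_k\Eb[\xi^2_{nk}|\Fc^n_{k-1}]\overset{\Pb}{\to}0$ by Lindeberg'' is correct but not immediate: it follows because $\Eb[\xi^2_{nk}|\Fc^n_{k-1}]\leq\eps^2+\Eb[\xi^2_{nk}\mathbf{1}_{|\xi_{nk}|>\eps}|\Fc^n_{k-1}]$ and the second term, summed over $k$, goes to zero. With these two refinements the argument goes through.
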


\section{\textcolor{black}{Derivative formulas}}\label{appendix_derivative}

\textcolor{black}{In this section, we derive the gradient, Hessian for both $\Gb_T(\underline{y};.)$ and $\Lb_T(\underline{y};.)$. We also provide the cross-derivatives of $\Lb_T(\underline{y};.)$.} 

\vspace*{0.3cm}

\noindent \textcolor{black}{\textbf{First and second order derivatives of $\Gb_T(\underline{y};\theta_{\textcolor{black}{T}})$ with respect to $\theta_T$.}} To derive the gradient function $\nabla_{\theta_{\textcolor{black}{T}}}\Gb_T(\underline{y};\theta_{\textcolor{black}{T}})$, we consider the the differential $\dd \Gb_T(\underline{y};\theta_{\textcolor{black}{T}})$ with respect to $\Psi_{1:m}$, which is
\begin{eqnarray*}
\lefteqn{\dd\Gb_T(\underline{y};\theta_{\textcolor{black}{T}})}\\ &=&\dd\big\{ \frac{1}{2T}\overset{T}{\underset{t=1}{\sum}}\big(x_t-\Psi_{1:m}Z_{m,t-1}\big)^\top \big(x_t-\Psi_{1:m}Z_{m,t-1}\big) \big\}\\
& = & \frac{1}{2T}\overset{T}{\underset{t=1}{\sum}} \tr\big(-x^\top_t (\dd\Psi_{1:m})Z_{m,t-1}-Z^\top_{m,t-1}(\dd\Psi_{1:m})^\top x_t+Z^\top_{m,t-1}(\dd\Psi_{1:m})^\top\Psi_{1:m}Z_{m,t-1}\\
& & +Z^\top_{m,t-1}\Psi^\top_{1:m}(\dd\Psi_{1:m})Z_{m,t-1}\big).
\end{eqnarray*}
Now using the matrix identification formulas of Abadir and Magnus (2005), we obtain for the score
\beqw
\nabla_{\theta_{\textcolor{black}{T}}} \Gb_T(\underline{y};\theta_{0,\textcolor{black}{T}})= \frac{1}{T}\overset{T}{\underset{t=1}{\sum}} \nabla_{\theta_{\textcolor{black}{T}}} \ell(y_s, s\leq t;\theta_{0,\textcolor{black}{T}})= -\frac{1}{T}\overset{T}{\underset{t=1}{\sum}} \big(Z_{m,t-1}\otimes \{x_t-\Psi_{0,1:m} Z_{m,t-1}\}\big).
\eeqw
As for the Hessian, we aim at extracting the form $\text{tr}(L (\dd X)^\top M (\dd X))$ for $L$ and $M$ constant matrices, with sizes $n \times n$ and $v \times v$ respectively, with a differential operator $\dd$ applied twice with respect to the $v \times n$ matrix $X$. In our case, applying the differential operator twice with respect to the $p \times pm$ matrix $\Psi_{1:m}$, we have
\beqw
\dd^2\Gb_T(\underline{y};\theta_{0,\textcolor{black}{T}}) = \cfrac{1}{T}\overset{T}{\underset{t=1}{\sum}} \text{tr}\big(Z^\top_{m,t-1}(\dd \Psi_{1:m})^\top(\dd\Psi_{1:m})Z_{m,t-1}\big) = \cfrac{1}{T}\overset{T}{\underset{t=1}{\sum}} \text{tr}\big(Z_{m,t-1}Z^\top_{m,t-1}(\dd \Psi_{1:m})^\top I_p(\dd\Psi_{1:m})\big). 
\eeqw
Hence, using exercise 13.49 of Abadir and Magnus (2006), we obtain by identification of the Hessian:
\beqw
\nabla^2_{\theta \theta^\top_{\textcolor{black}{T}}} \Gb_T(\underline{y};\theta_{0,\textcolor{black}{T}}) = \cfrac{1}{T}\overset{T}{\underset{t=1}{\sum}} (Z_{m,t-1} Z^\top_{m,t-1} \otimes I_p).
\eeqw

\vspace*{0.3cm}

\noindent \textcolor{black}{\textbf{First and second order derivatives of $\Lb_T(\underline{y};\theta_{\textcolor{black}{T}},\gamma)$ with respect to $\gamma$.}} Let us now consider the second step loss function. The score $\nabla_{\gamma}\Lb_T(\underline{y};\theta_{\textcolor{black}{T}},\gamma)$ and Hessian $\nabla^2_{\gamma\gamma^\top}\Lb_T(\underline{y};\theta_{\textcolor{black}{T}},\gamma)$ can be obtained in a similar manner, where $\gamma = \text{vec}(\Gamma)$. For the sake of clarification, we omit the $\theta$ argument in $\textcolor{black}{k_T}$. To compute the score, we use the differential operator $\dd$ with respect to $\Gamma$, so that we have
\begin{eqnarray*}
\lefteqn{\dd \Lb_T(\underline{y};\theta_{\textcolor{black}{T}},\gamma)}\\
&=& \frac{1}{2T}\overset{T}{\underset{t=1}{\sum}}\tr\big(-\textcolor{black}{y_t^{\ell\top}}(\dd\Gamma)K_{t-1}-K^\top_{t-1}(\dd\Gamma)^\top \textcolor{black}{y_t^{\ell}} + K^\top_{t-1}(\dd\Gamma)^\top \Gamma K_{t-1} + K^\top_{t-1}\Gamma^\top (\dd\Gamma)K_{t-1}\big)\\
& = & \frac{1}{2T}\overset{T}{\underset{t=1}{\sum}}\tr\big(-2 \textcolor{black}{y_t^{\ell\top}}(\dd\Gamma)K_{t-1} + 2 K^\top_{t-1}\Gamma^\top(\dd\Gamma)K_{t-1}\big).
\end{eqnarray*}
As a consequence, by identification, the first order derivative is
\beqw
\nabla_{\gamma}\Lb_T(\underline{y};\theta_{\textcolor{black}{T}},\gamma)=
 \frac{1}{T}\overset{T}{\underset{t=1}{\sum}}\Big[\big(K_{t-1}\otimes -\textcolor{black}{y_t^{\ell}}\big) + \big(K_{t-1}\otimes \Gamma K_{t-1}\big)\Big] = - \frac{1}{T}\overset{T}{\underset{t=1}{\sum}}\big(K_{t-1}\otimes \big\{\textcolor{black}{y_t^{\ell}}-\Gamma K_{t-1}\big\}\big).
\eeqw
The second order differential is now
\beqw
\dd^2\Lb_T(\underline{y};\theta_{\textcolor{black}{T}},\gamma)=
 \frac{1}{T}\overset{T}{\underset{t=1}{\sum}}\tr\big(K^\top_{t-1}(\dd\Gamma)^\top (\dd\Gamma)K_{t-1}\big).
\eeqw
By identification, we have
\beqw
\nabla^2_{\gamma\gamma^\top}\Lb_T(\underline{y};\theta_{\textcolor{black}{T}},\gamma) = \frac{1}{T}\overset{T}{\underset{t=1}{\sum}}\big(I_p \otimes K_{t-1}K^\top_{t-1}\big).
\eeqw

\vspace*{0.3cm}
 
We now focus on the second step loss function and its cross-derivatives, which are used in the moment conditions of Assumptions \ref{assumption_second_cross_2ndstep} and \ref{assumption_third_cross_1_2ndstep}. \\
\noindent \textbf{Cross-derivatives of $\Lb_T(\underline{y};\theta_{\textcolor{black}{T}},\gamma)$.} Let us investigate element-by-element, the first order derivative with respect to the elements in $\Gamma$ and then its derivative with respect to the elements in $\Psi_{1:m}$. This computation enables to explicit the moment conditions we assume in Assumptions \ref{assumption_third_cross_1_2ndstep} and \ref{assumption_third_cross_2_2ndstep}. The loss function $\Lb_T(\underline{y};\theta_{\textcolor{black}{T}},\gamma)$ is based on $f(y_s,s\leq t;\theta_{\textcolor{black}{T}},\gamma)$, which can be expanded as
\begin{eqnarray}
\lefteqn{f(y_s,s\leq t;\theta_{\textcolor{black}{T}},\gamma)} \nonumber\\
& = & \textcolor{black}{y_t^{\ell\top}} \textcolor{black}{y_t^{\ell}}-\textcolor{black}{y_t^{\ell\top}}c^* -\textcolor{black}{y_t^{\ell\top}}\Phi \textcolor{black}{y_{t-1}^{\ell}} - \textcolor{black}{y_t^{\ell\top}} \Xi \textcolor{black}{y_{t-1}^{\ell}} + \textcolor{black}{y_t^{\ell\top}}\Xi \Psi_{1:m}Z_{m,t-2} \nonumber\\
&  & - c^{*\top} \textcolor{black}{y_t^{\ell}}+c^{*\top}c^* +c^{*\top}\Phi \textcolor{black}{y_{t-1}^{\ell}} +c^{*\top} \Xi \textcolor{black}{y_{t-1}^{\ell}} -c^{*\top}\Xi \Psi_{1:m}Z_{m,t-2}\nonumber\\
& & - \textcolor{black}{y_{t-1}^{\ell\top}}\Phi^\top \textcolor{black}{y_t^{\ell}}+\textcolor{black}{y_{t-1}^{\ell\top}}\Phi^\top c^* +\textcolor{black}{y_{t-1}^{\ell\top}}\Phi^\top \Phi \textcolor{black}{y_{t-1}^{\ell}} + \textcolor{black}{y_{t-1}^{\ell\top}}\Phi^\top \Xi \textcolor{black}{y_{t-1}^{\ell}} -\textcolor{black}{y_{t-1}^{\ell\top}}\Phi^\top\Xi \Psi_{1:m}Z_{m,t-2} \nonumber\\
& & - \textcolor{black}{y_{t-1}^{\ell\top}}\Xi^\top \textcolor{black}{y_t^{\ell}}+\textcolor{black}{y_{t-1}^{\ell\top}}\Xi^\top c^* +\textcolor{black}{y_{t-1}^{\ell\top}}\Xi^\top \Phi \textcolor{black}{y_{t-1}^{\ell}} + \textcolor{black}{y_{t-1}^{\ell\top}}\Xi^\top \Xi \textcolor{black}{y_{t-1}^{\ell}} -\textcolor{black}{y_{t-1}^{\ell\top}}\Xi^\top\Xi \Psi_{1:m}Z_{m,t-2}\nonumber \\
& & + Z^\top_{m,t-2}\Psi^\top_{1:m}\Xi^\top \textcolor{black}{y_t^{\ell}}- Z^\top_{m,t-2}\Psi^\top_{1:m}\Xi^\top c^*\nonumber\\
& & -Z^\top_{m,t-2}\Psi^\top_{1:m}\Xi^\top \Phi \textcolor{black}{y_{t-1}^{\ell}} - Z^\top_{m,t-2}\Psi^\top_{1:m}\Xi^\top \Xi \textcolor{black}{y_{t-1}^{\ell}} + Z^\top_{m,t-2}\Psi^\top_{1:m}\Xi^\top \Xi\Psi_{1:m}Z_{m,t-2}.\label{expansion_f}
\end{eqnarray}
\begin{itemize}
    \item[(i)] \textbf{cross-derivatives of the form $\partial^2_{\gamma_k\theta_{l,\textcolor{black}{T}}}$, $\partial^3_{\gamma_k\theta_{l,\textcolor{black}{T}}\theta_{j,\textcolor{black}{T}}}$ and $\partial^3_{\gamma_k\theta_{l,\textcolor{black}{T}}\gamma_j}$.} Let us apply the second order partial derivative $\partial^2_{\gamma_k\theta_{l,\textcolor{black}{T}}}$ for $k=1,\cdots,p$ (parameters in $c^*$) and $l=1,\cdots,d_T$ (the $l$-th parameter element of $\Psi_{1:m}$). We obtain
\beqw
\partial^2_{\gamma_k\theta_{l,\textcolor{black}{T}}}f(y_s,s\leq t;\theta_{\textcolor{black}{T}},\gamma) = -(\partial_{\gamma_k}c^{*\top})\Xi (\partial_{\theta_{l,\textcolor{black}{T}}}\Psi_{1:m}) Z_{m,t-2}-Z^\top_{m,t-2}(\partial_{\theta_{l,\textcolor{black}{T}}}\Psi^\top_{1:m})\Xi^\top(\partial_{\gamma_k}c^*).
\eeqw
For $k=p+1,\cdots,p+p^2$ (elements in $\Psi$), we have for any $l=1,\cdots,d_T$:
\beqw
\partial^2_{\gamma_k\theta_{l,\textcolor{black}{T}}}f(y_s,s\leq t;\theta_{\textcolor{black}{T}},\gamma) = -\textcolor{black}{y_{t-1}^{\ell\top}}(\partial_{\gamma_k}\Phi^\top)\Xi (\partial_{\theta_{l,\textcolor{black}{T}}}\Psi_{1:m}) Z_{m,t-2}-Z^\top_{m,t-2}(\partial_{\theta_{l,\textcolor{black}{T}}}\Psi_{1:m})\Xi^\top(\partial_{\gamma_k}\Phi).
\eeqw
Finally, for $k=p+p^2+1,\cdots,d_2$ (elements in $\Xi$), we have for any $k=1,\cdots,d_T$:
\begin{eqnarray*}
\lefteqn{\partial^2_{\gamma_k\theta_{l,\textcolor{black}{T}}}f(y_s,s\leq t;\theta_{\textcolor{black}{T}},\gamma) }\\
& = & \textcolor{black}{y_{t}^{\ell}}(\partial_{\gamma_k}\Xi)(\partial_{\theta_{l,\textcolor{black}{T}}}\Psi_{1:m})Z_{m,t-2}-c^{*\top}(\partial_{\gamma_k}\Xi)(\partial_{\theta_{l,\textcolor{black}{T}}}\Psi_{1:m})Z_{m,t-2}-\textcolor{black}{y_{t-1}^{\ell\top}}\Phi^\top(\partial_{\gamma_k}\Xi)(\partial_{\theta_{l,\textcolor{black}{T}}}\Psi_{1:m})Z_{m,t-2}\\ & & -2\textcolor{black}{y_{t-1}^{\ell\top}}\Xi^\top(\partial_{\gamma_k}\Xi)(\partial_{\theta_{l,\textcolor{black}{T}}}\Psi_{1:m})+Z^\top_{m,t-2}(\partial_{\theta_{l,\textcolor{black}{T}}}\Psi^\top_{1:m})(\partial_{\gamma_k}\Xi^\top)x_t-Z^\top_{m,t-2}(\partial_{\theta_{l,\textcolor{black}{T}}}\Psi^\top_{1:m})(\partial_{\gamma_k}\Xi^\top)c^* \\
& & - Z^\top_{m,t-2}(\partial_{\theta_{l,\textcolor{black}{T}}}\Psi^\top_{1:m})(\partial_{\gamma_k}\Xi^\top)\Phi \textcolor{black}{y_{t-1}^{\ell}}-2Z^\top_{m,t-2}(\partial_{\theta_{l,\textcolor{black}{T}}}\Psi^\top_{1:m})(\partial_{\gamma_k}\Xi^\top)\Xi \textcolor{black}{y_{t-1}^{\ell}}\\
& & + 4 Z^\top_{m,t-2}(\partial_{\theta_{l,\textcolor{black}{T}}}\Psi^\top_{1:m})(\partial_{\gamma_k}\Xi^\top)\Xi\Psi_{1:m}Z_{m,t-2}.
\end{eqnarray*}
Thus, the third order partial derivative $\partial^3_{\gamma_k\theta_{l,\textcolor{black}{T}}\theta_{j,\textcolor{black}{T}}}f(y_s,s\leq t;\theta_{\textcolor{black}{T}},\gamma)=0$ for $k \leq p+p^2$ (elements in $c^*$ and $\Psi$). For $k\geq p+p^2+1$ (elements in $\Xi$), then
\beqw
\partial^3_{\gamma_k\theta_{l,\textcolor{black}{T}}\theta_{j,\textcolor{black}{T}}}f(y_s,s\leq t;\theta_{\textcolor{black}{T}},\gamma) = 4 Z^\top_{m,t-2}(\partial_{\theta_{l,\textcolor{black}{T}}}\Psi^\top_{1:m})(\partial_{\gamma_k}\Xi^\top)\Xi(\partial_{\theta_{j,\textcolor{black}{T}}}\Psi_{1:m})Z_{m,t-2}.
\eeqw
Let us consider the cross derivative $\partial^3_{\gamma_k\theta_{l,\textcolor{black}{T}}\gamma_j}f(y_s,s\leq t;\theta_{\textcolor{black}{T}},\gamma)$. Using the formulas of $\partial^2_{\gamma_k\theta_{l,\textcolor{black}{T}}}f(y_s,s\leq t;\theta_{\textcolor{black}{T}},\gamma)$, we have for $k=1,\cdots,p$, $l=1,\cdots,d_T$ and any $j=1,\cdots,p+p^2$, 
\beqw
\partial^3_{\gamma_k\theta_{l,\textcolor{black}{T}}\gamma_j}f(y_s,s\leq t;\theta_{\textcolor{black}{T}},\gamma) =0,
\eeqw
and for $j=p+p^2+1,\cdots,d_2$,
\beqw
\partial^3_{\gamma_k\theta_{l,\textcolor{black}{T}}\gamma_j}f(y_s,s\leq t;\theta_{\textcolor{black}{T}},\gamma) =-(\partial_{\gamma_k}c^{*\top})(\partial_{\gamma_j}\Xi)(\partial_{\theta_{l,\textcolor{black}{T}}}\Psi_{1:m})Z_{m,t-2}-Z^\top_{m,t-2}(\partial_{\theta_{l,\textcolor{black}{T}}}\Psi^\top_{1:m})(\partial_{\gamma_j}\Xi^\top)(\partial_{\gamma_k}c^*).
\eeqw
When $k=p+1,\cdots,p+p^2$, $l=1,\cdots,d_T$ and any $j=1,\cdots,p+p^2$, 
\beqw
\partial^3_{\gamma_k\theta_{l,\textcolor{black}{T}}\gamma_j}f(y_s,s\leq t;\theta_{\textcolor{black}{T}},\gamma) =0,
\eeqw
and for $j=p+p^2+1,\cdots,d_2$, then
\beqw
\partial^3_{\gamma_k\theta_{l,\textcolor{black}{T}}\gamma_j}f(y_s,s\leq t;\theta_{\textcolor{black}{T}},\gamma) = -\textcolor{black}{y_{t-1}^{\ell\top}}(\partial_{\gamma_k}\Phi^\top)(\partial_{\gamma_j}\Xi) (\partial_{\theta_{l,\textcolor{black}{T}}}\Psi_{1:m}) Z_{m,t-2}-Z^\top_{m,t-2}(\partial_{\theta_{l,\textcolor{black}{T}}}\Psi_{1:m})(\partial_{\gamma_j}\Xi^\top)(\partial_{\gamma_k}\Phi).
\eeqw
Finally, for $k=p+p^2+1,\cdots,d_2$, $l=1,\cdots,d_T$ and $j=1,\cdots,p$, then
\beqw
\partial^3_{\gamma_k\theta_{l,\textcolor{black}{T}}\gamma_j}f(y_s,s\leq t;\theta_{\textcolor{black}{T}},\gamma) =-(\partial_{\gamma_j}c^{*\top}) (\partial_{\gamma_k}\Xi)(\partial_{\theta_{l,\textcolor{black}{T}}}\Psi_{1:m})Z_{m,t-2}-Z^\top_{m,t-2}(\partial_{\theta_{l,\textcolor{black}{T}}}\Psi^\top_{1:m})(\partial_{\gamma_k}\Xi^\top)(\partial_{\gamma_j}c^*).
\eeqw
For $k=p+p^2+1,\cdots,d_2$, $l=1,\cdots,d_T$ and $j=p+1,\cdots,p+p^2$,
{\small{\beqw
\partial^3_{\gamma_k\theta_{l,\textcolor{black}{T}}\gamma_j}f(y_s,s\leq t;\theta_{\textcolor{black}{T}},\gamma) = -\textcolor{black}{y_{t-1}^{\ell\top}}(\partial_{\gamma_j}\Phi^\top)(\partial_{\gamma_k}\Xi)(\partial_{\theta_{l,\textcolor{black}{T}}}\Psi_{1:m})Z_{m,t-2}-Z^\top_{m,t-2}(\partial_{\theta_{l,\textcolor{black}{T}}}\Psi^\top_{1:m})(\partial_{\gamma_k}\Xi^\top)(\partial_{\gamma_j}\Phi)\textcolor{black}{y_{t-1}^{\ell}}.
\eeqw}}
And for $k=p+p^2+1,\cdots,d_2$, $l=1,\cdots,d_T$ and $j=p+p^2,\cdots,d_2$, then
\begin{eqnarray*}
\lefteqn{\partial^3_{\gamma_k\theta_{l,\textcolor{black}{T}}\gamma_j}f(y_s,s\leq t;\theta_{\textcolor{black}{T}},\gamma) = -2\textcolor{black}{y_{t-1}^{\ell\top}}(\partial_{\gamma_j}\Xi^\top)(\partial_{\gamma_k}\Xi)(\partial_{\theta_{l,\textcolor{black}{T}}}\Psi_{1:m})}\\
&  & -2Z^\top_{m,t-2}(\partial_{\theta_{l,\textcolor{black}{T}}}\Psi^\top_{1:m})(\partial_{\gamma_k}\Xi^\top)(\partial_{\gamma_j}\Xi) \textcolor{black}{y_{t-1}^{\ell}} + 4 Z^\top_{m,t-2}(\partial_{\theta_{l,\textcolor{black}{T}}}\Psi^\top_{1:m})(\partial_{\gamma_k}\Xi^\top)(\partial_{\gamma_j}\Xi)\Psi_{1:m}Z_{m,t-2}.
\end{eqnarray*}

\item[(ii)] \textbf{cross-derivatives of the form $\partial^2_{\gamma_k\gamma_l}$ and $\partial^3_{\gamma_k\gamma_l\theta_{j,\textcolor{black}{T}}}$.} Using equation (\ref{expansion_f}), $\forall 1 \leq k \leq p, p+1\leq l \leq p+p^2, 1 \leq j \leq d_T$ and $\forall 1 \leq l \leq p, p+1\leq k \leq p+p^2, 1 \leq j \leq d_T$ , we have 
\beqw
\partial^3_{\gamma_k \gamma_l \theta_{j,\textcolor{black}{T}}}f(y_s,s\leq t;\theta_{\textcolor{black}{T}},\gamma) = 0,
\eeqw
and $\forall 1 \leq k \leq p, p+p^2+1\leq l \leq d_2, 1 \leq j \leq d_T$,
\beqw
\partial^3_{\gamma_k \gamma_l \theta_{j,\textcolor{black}{T}}}f(y_s,s\leq t;\theta_{\textcolor{black}{T}},\gamma) = -(\partial_{\gamma_k}c^{*})^\top (\partial_{\gamma_l}\Xi)(\partial_{\theta_{j,\textcolor{black}{T}}}\Psi_{1:m})Z_{m,t-2}-Z^\top_{m,t-2}(\partial_{\theta_{j,\textcolor{black}{T}}}\Psi_{1:m})^\top (\partial_{\gamma_l}\Xi)^\top (\partial_{\gamma_k}c^*).
\eeqw
By symmetry of the Hessian $\nabla^2_{\gamma\gamma^\top}f(y_s,s\leq t;\theta_{\textcolor{black}{T}},\gamma)$, we would obtain the same quantity for $\partial^3_{\gamma_l \gamma_k \theta_{j,\textcolor{black}{T}}}f(y_s,s\leq t;\theta_{\textcolor{black}{T}},\gamma)$. Now $\forall p+1\leq k \leq p+p^2, p+p^2+1 \leq l \leq d_2, 1 \leq j \leq d_T$,
{\small{\beqw
\partial^3_{\gamma_k \gamma_l \theta_{j,\textcolor{black}{T}}}f(y_s,s\leq t;\theta_{\textcolor{black}{T}},\gamma) = -\textcolor{black}{y_{t-1}^{\ell\top}}(\partial_{\gamma_k}\Phi)^\top (\partial_{\gamma_l}\Xi)(\partial_{\theta_{j,\textcolor{black}{T}}}\Psi_{1:m})Z_{m,t-2}-Z^\top_{m,t-2}(\partial_{\theta_{j,\textcolor{black}{T}}}\Psi_{1:m})^\top(\partial_{\gamma_l}\Xi)^\top (\partial_{\gamma_k}\Phi)\textcolor{black}{y_{t-1}^{\ell}}.
\eeqw}}
Again, by symmetry, we would obtain the same for $\partial^3_{\gamma_l \gamma_k \theta_{j,\textcolor{black}{T}}}f(y_s,s\leq t;\theta_{\textcolor{black}{T}},\gamma)$. Finally, $\forall p+p^2+1 \leq k,l \leq d_2, 1 \leq j \leq d_T$, we have
\begin{eqnarray*}
\lefteqn{\partial^3_{\gamma_k \gamma_l \theta_{j,\textcolor{black}{T}}}f(y_s,s\leq t;\theta_{\textcolor{black}{T}},\gamma) }\\
& = & -\textcolor{black}{y_{t-1}^{\ell\top}}(\partial_{\gamma_k}\Xi)^\top (\partial_{\gamma_l}\Xi) (\partial_{\theta_{j,\textcolor{black}{T}}}\Psi_{1:m})Z_{m,t-2}-\textcolor{black}{y_{t-1}^{\ell\top}}(\partial_{\gamma_l}\Xi)^\top (\partial_{\gamma_k}\Xi) (\partial_{\theta_{j,\textcolor{black}{T}}}\Psi_{1:m})Z_{m,t-2}\\
& & -Z_{m,t-2}(\partial_{\theta_{j,\textcolor{black}{T}}}\Psi_{1:m})^\top (\partial_{\gamma_k}\Xi)^\top (\partial_{\gamma_l}\Xi)\textcolor{black}{y_{t-1}^{\ell}}-Z^\top_{m,t-2}(\partial_{\theta_{j,\textcolor{black}{T}}}\Psi_{1:m})(\partial_{\gamma_l}\Xi)^\top(\partial_{\gamma_k}\Xi)\textcolor{black}{y_{t-1}^{\ell}} \\
& & + Z_{m,t-2}(\partial_{\theta_{j,\textcolor{black}{T}}}\Psi_{1:m})^\top (\partial_{\gamma_k}\Xi)^\top(\partial_{\gamma_l}\Xi)\Psi_{1:m}Z_{m,t-2} + Z_{m,t-2}\Psi^\top_{1:m} (\partial_{\gamma_k}\Xi)^\top(\partial_{\gamma_l}\Xi)(\partial_{\theta_{j,\textcolor{black}{T}}}\Psi_{1:m})Z_{m,t-2} \\
& & + Z_{m,t-2}(\partial_{\theta_{j,\textcolor{black}{T}}}\Psi_{1:m})^\top (\partial_{\gamma_l}\Xi)^\top(\partial_{\gamma_k}\Xi)\Psi_{1:m}Z_{m,t-2} + Z_{m,t-2}\Psi^\top_{1:m} (\partial_{\gamma_l}\Xi)^\top(\partial_{\gamma_k}\Xi)(\partial_{\theta_{j,\textcolor{black}{T}}}\Psi_{1:m})Z_{m,t-2}.
\end{eqnarray*}

\end{itemize}

\section{\textcolor{black}{Proofs}}\label{appendix_proof}

\textcolor{black}{We first provide the assumptions} we relied on in the large sample analysis of the first step penalized estimator $\widehat{\theta}_{\textcolor{black}{T}}$.

\medskip

{The following assumption is the so-called sparsity assumption. As the true support $\Ac_{\textcolor{black}{T}}$ is unknown, we rely on penalized M-estimation to obtain an estimation of this set.
\begin{assumption}\label{assumption_sparsity}
$\normalfont \text{card}(\Ac_{\textcolor{black}{T}}) = k_{T} < d_T$ with $\Ac_T = \normalfont \text{supp}(\theta_{0,\textcolor{black}{T}}) := \{i=1,\cdots,d_T: \theta_{0,i,\textcolor{black}{T}} \neq 0\}$.
\end{assumption}}

{The next assumption ensures the stability and invertibility of the process $(x_t)$.
\begin{assumption}\label{assumption_stationarity}
Let the AR and MA operators defined as $\Phi(L)=I_p-\Phi L$ and $\Xi(L)=I_p+\Xi L$ with $L$ the lag operator. We assume that the model is stable, i.e. $\normalfont\text{det}(\Phi(z))\neq 0$, and invertible, i.e. $\normalfont\text{det}(\Xi(z))\neq 0$ for any $|z|\leq 1$, $z \in \Cb$.
\end{assumption}}

{To ensure suitable regularity conditions on the non-penalized loss, we assume the following:
\begin{assumption}\label{assumption_var_cov}
\textcolor{black}{The $d_T \times d_T$ matrices $\Hb := \Eb[\nabla^2_{\theta_{T} \theta^\top_T}\ell(y_s, s \leq t;\theta_{0,\textcolor{black}{T}})]$ and $\Mb := \Eb[\nabla_{\theta_{\textcolor{black}{T}}}\ell(y_s,s\leq t;\theta_{0,\textcolor{black}{T}}) \nabla_{\theta^\top_{\textcolor{black}{T}}}\ell(y_s,s\leq t;\theta_{0,\textcolor{black}{T}})]$} exist and are positive-definite. There exist $b_1,b_2$ with $0 < b_1 < b_2 < \infty$ and $c_1,c_2$ with $0 < c_1 < c_2 < \infty$ such that
\beqw
b_1 < \lambda_{\min}(\Mb) < \lambda_{\max}(\Mb) < b_2, \;\; \text{and} \;\; c_1 < \lambda_{\min}(\Hb) < \lambda_{\max}(\Hb) < c_2.
\eeqw
Let $\Vb = \Hb^{-1} \Mb \Hb^{-1}$. Then there exist $a_1,a_2$ with $0 < a_1 < a_2 < \infty$ such that for any $T$, we have $a_1 < \lambda_{\min}(\Vb) < \lambda_{\max}(\Vb) < a_2$.
\end{assumption}
\begin{assumption}\label{assumption_var_cov2}
For every $d_T$ and thus $T$, $\Eb[\big\{\nabla_{\theta_{\textcolor{black}{T}}}\ell(y_s, s\leq t;\theta_{0,\textcolor{black}{T}}) \nabla_{\theta^\top_{\textcolor{black}{T}}}\ell(y_s, s\leq t;\theta_{0,\textcolor{black}{T}})\big\}^2] < \infty$.
\end{assumption}
\begin{assumption}\label{assumption_gradient}
Let $\partial_{\theta_{k,\textcolor{black}{T}}}\ell(y_s,s\leq t;\theta_{0,\textcolor{black}{T}}) = -\big(Z_{m,t-1}\otimes \{x_t-\Psi_{0,1:m} Z_{m,t-1}\}\big)_k$, for any $k=1,\cdots,d_T$. Then there is some function $\Psi(.)$ such that for any $T$: 
\beqw
\underset{k=1,\cdots,d_T}{\sup}\Eb[\partial_{\theta_{k,\textcolor{black}{T}}}\ell(y_s,s\leq t;\theta_{0,\textcolor{black}{T}}) \partial_{\theta_{k,\textcolor{black}{T}}}\ell(y_s,s\leq t';\theta_{0,\textcolor{black}{T}})]\leq \Psi(|t-t'|), \;\; \text{and} \;\; \underset{T>0}{\sup}\; \frac{1}{T}\overset{T}{\underset{t,t'=1}{\sum}} \Psi(|t-t'|)<\infty.
\eeqw
\end{assumption}
\begin{assumption}\label{assumption_second_derivative}
Let $\partial^2_{\theta_{k,\textcolor{black}{T}}\theta_{l,\textcolor{black}{T}}}\ell(y_s,s\leq t;\theta_{0,\textcolor{black}{T}}) = \big(Z_{m,t-1}Z^\top_{m,t-1} \otimes I_p\big)_{kl}$ for any $k,l=1,\cdots,d_T$. Let $\zeta_{kl,t} = \partial^2_{\theta_{k,\textcolor{black}{T}} \theta_{l,\textcolor{black}{T}}} \ell(y_s,s\leq t;\theta_{0,\textcolor{black}{T}}) - \Eb[\partial^2_{\theta_{k,\textcolor{black}{T}} \theta_{l,\textcolor{black}{T}}} \ell(y_s,s\leq t;\theta_{0,\textcolor{black}{T}})]$. Then there exists some function $\chi(.)$ such that for any $T$: 
\beqw
|\Eb[\zeta_{kl,t} \zeta_{k'l',t'}]| \leq \chi(|t-t'|), \;\; \text{and} \;\; \underset{T>0}{\sup} \; \frac{1}{T}\overset{T}{\underset{t,t'=1}{\sum}} \chi(|t-t'|) <\infty.
\eeqw
\end{assumption}
These assumptions on the loss deserve a few comments. Assumption \ref{assumption_stationarity} concerns the probabilistic property of the process. Assumptions \ref{assumption_var_cov}-\ref{assumption_var_cov2} are similar to condition (F) of Fan and Peng (2004) and allow for controlling the minimum and maximum eigenvalues of the limits of the empirical Hessian and the score cross-product. Assumptions \ref{assumption_gradient}-\ref{assumption_second_derivative} are moment conditions, which may be somewhat arbitrary but ensure the convergence of the empirical gradient and Hessian to their population level counterparts by an application of the Markov inequality. As for the \textcolor{black}{penalty functions} SCAD and MCP, we consider the following conditions:
\begin{assumption}\label{assumption_folded}
\begin{itemize}
    \item[(i)] Let $\lambda$ a fixed non-negative scalar, let $x \geq 0$, then $\lambda^{-1}\pp(\lambda,x)$ is increasing and concave in $x$, has continuous derivative $\lambda^{-1}\partial_x \pp(\lambda,x)$ with $\underset{x \rightarrow 0^+}{\lim} \lambda^{-1}\partial_x \pp(\lambda,x) > 0$. Moreover, $\underset{x \rightarrow 0^+}{\lim} \lambda^{-1}\partial_x \pp(\lambda,x)$ does not depend on $\lambda$.
    \item[(ii)] Let $A_{1,T} = \underset{ k \in \Ac_{\textcolor{black}{T}}}{\max}|\partial_{\theta_{k,\textcolor{black}{T}}}\pp(\frac{\lambda_T}{T},|\theta_{0,k,\textcolor{black}{T}}|)|$, and $A_{2,T} = \underset{k \in \Ac_{\textcolor{black}{T}}}{\max}|\partial^2_{\theta_{k,\textcolor{black}{T}} \theta_{k,\textcolor{black}{T}}}\pp(\frac{\lambda_T}{T},|\theta_{0,k,\textcolor{black}{T}}|)|$. Then $A_{1,T} = O(T^{-1/2})$ and $A_{2,T} \rightarrow 0$.
    \textcolor{black}{\item[(iii)] There exist constants $M$ and $D$ such that $| \partial^2_{\theta\theta} \pp(\lambda,\theta_1) - \partial^2_{\theta \theta} \pp(\lambda,\theta_2) |\leq D |\theta_1 -\theta_2 |$,
    for any real numbers $\theta_1,\theta_2$ such that $\theta_1,\theta_2 > M\lambda$.
    \item[(iv)] $\frac{\lambda_T}{\sqrt{T d_T}} \rightarrow \infty$ as $T \rightarrow \infty$ holds and $\underset{T \rightarrow \infty}{\lim} \; \underset{x \rightarrow 0^+}{\lim \, \inf} \; \frac{T}{\lambda_T}\partial_x\pp(\frac{\lambda_T}{T},x) > 0$.}
    \item[(v)] $\exists a_1, a_2$ with $0 < a_1 < a_2 < \infty$ and $a_1 \leq \underset{k \in \Ac_{\textcolor{black}{T}}}{\min}|\theta_{0,k,\textcolor{black}{T}}| \leq \underset{k \in \Ac_{\textcolor{black}{T}}}{\max}|\theta_{0,k,\textcolor{black}{T}}| \leq a_2$, and $\underset{k \in \Ac_{\textcolor{black}{T}}}{\min}|\theta_{0,k,\textcolor{black}{T}}| / (\frac{\lambda_T}{T}) \rightarrow \infty$ as $T \rightarrow \infty$.
\end{itemize} 
\end{assumption}
These assumptions are standard in sparse analysis using folded concave \textcolor{black}{penalty functions}: see, e.g., Fan and Peng (2004). Assumption \ref{assumption_folded}-(i), (iii) provide some regularity conditions on the penalty function. Assumption \ref{assumption_folded}-(ii) implies that the penalty has less influence than the non-penalized loss function in the regularised problem. This is a key unbiasedness property for large parameters and ensures the existence of the $\sqrt{T/d_T}$-consistent penalized estimator. Assumption \ref{assumption_folded}-(iv) makes the penalty function singular at the origin so that the
sparse $\widehat{\theta}_{\textcolor{black}{T}}$ satisfies the sparsity property in the oracle Theorem. Assumption \ref{assumption_folded}-(v) is a beta-min assumption on the minimum signal.}

The next assumption may be artificial, but it is key to obtain the oracle property and is in line with Assumption (H) of Fan and Peng (2004).
\begin{assumption}\label{assumption_array}
Let $X_{T,t}=\sqrt{T}Q_T\Vb^{\textcolor{black}{-1/2}}_{\Ac_{\textcolor{black}{T}}\Ac_{\textcolor{black}{T}}}\Hb^{-1}_{\Ac_{\textcolor{black}{T}}\Ac_{\textcolor{black}{T}}}\nabla_{\theta_{\Ac_{\textcolor{black}{T}}}}\Gb_{T,t}(\underline{y};\theta_{0,\textcolor{black}{T}})$ with $(Q_T)$ a sequence of $r \times \normalfont\text{card}(\Ac_{\textcolor{black}{T}})$ matrices such that $ Q_T Q^\top_T \overset{\Pb}{\underset{T \rightarrow \infty}{\longrightarrow}} \mathbb{C}$ for some positive definite symmetric matrix $\mathbb{C}$, $\Vb_{\Ac_{\textcolor{black}{T}}\Ac_{\textcolor{black}{T}}} = \big(\Hb^{-1} \Mb \Hb^{-1}\big)_{\Ac_{\textcolor{black}{T}}\Ac_{\textcolor{black}{T}}}$ and $\nabla_{\theta_{\Ac_{\textcolor{black}{T}}}}\Gb_{T,t}(\underline{y};\theta_{0,\textcolor{black}{T}}) = \frac{1}{T}\big(-\big(Z_{m,t-1}\otimes \{x_t-\Psi_{0,1:m} Z_{m,t-1}\}\big)_{\Ac_{\textcolor{black}{T}}}\big) \in \Rb^{k_T}$. Let $\Fc^T_t = \sigma(X_{T,s},s\leq t)$, then $X_{T,t}$ is a martingale difference and we have
\beqw
\Eb\Big[\underset{1 \leq k,l \leq d_T}{\sup}\Eb[\big\{\partial_{\theta_{k,\textcolor{black}{T}}}\ell(y_s,s\leq t;\theta_{0,\textcolor{black}{T}})\partial_{\theta_{l,\textcolor{black}{T}}}\ell(y_s,s\leq t;\theta_{0,\textcolor{black}{T}})\big\}^2|\Fc^T_{t-1}]\lambda_{\max,t-1}(\Hb^T_{t-1}) \Big] \leq \overline{B}<\infty,
\eeqw
with $\Hb^T_{t-1} = \Eb[\nabla_{\theta_{\textcolor{black}{T}}}\ell(y_s,s\leq t;\theta_{0,\textcolor{black}{T}})\nabla_{\theta^\top_{\textcolor{black}{T}}}\ell(y_s,s\leq t;\theta_{0,\textcolor{black}{T}})|\Fc^T_{t-1}]$ and $\lambda_{\max,t-1}(\Hb^T_{t-1})<\infty$.
\end{assumption}
Assumption \ref{assumption_array} is key to \textcolor{black}{verify} the Lindeberg condition to apply Theorem \ref{lindeberg_shiryaev} of Shiryaev (1991) when dealing with dependent variables.

\begin{proof}[Proof of Theorem \ref{bound_prob_asym}.]
Let $\nu_T = \sqrt{d}_T\big(T^{-1/2} + R_T\big)$, where $R_T$ will be made explicit depending on the penalty case. We would like to prove that for any $\eps > 0$, there exists $C_{\eps} > 0$ such that
\beqw
\Pb(\cfrac{1}{\nu_T}\|\widehat{\theta}_{\textcolor{black}{T}} - \theta_{0,\textcolor{black}{T}}\|_2 > C_{\eps}) < \eps.
\eeqw
Following the reasoning of Fan and Li (2001), Theorem 1, and denoting $\Gb^{\text{pen}}_T(\underline{y};\theta) = \Gb_T(\underline{y};\theta)+\overset{d_T}{\underset{i=1}{\sum}}\pp(\frac{\lambda_T}{T},|\theta_{i,\textcolor{black}{T}}|)$, we have
\beqw
\Pb(\cfrac{1}{\nu_T} \|\widehat{\theta}_{\textcolor{black}{T}} - \theta_{0,\textcolor{black}{T}}\|_2 > C_{\eps}) \leq \Pb(\exists \uu, \|\uu\|_2 = C_{\eps}: \Gb^{\text{pen}}_T(\underline{y};\theta_{0,\textcolor{black}{T}}+\nu_T\uu) \leq \Gb^{\text{pen}}_T(\underline{y};\theta_{0,\textcolor{black}{T}})),
\eeqw
which implies that there is a local minimum in the ball $\{\theta_{0,\textcolor{black}{T}}+\nu_T\uu, \|\uu\|_2 \leq C_{\eps}\}$ so that the minimum $\widehat{\theta}_{\textcolor{black}{T}}$ satisfies $\|\widehat{\theta}_{\textcolor{black}{T}} - \theta_{0,\textcolor{black}{T}}\|_2 = O_p(\nu_T)$. Now by a Taylor expansion of the penalized loss function, we obtain
\begin{eqnarray*}
\lefteqn{\Gb^{\text{pen}}_T(\underline{y};\theta_{0,\textcolor{black}{T}}+\nu_T\uu)-\Gb^{\text{pen}}_T(\underline{y};\theta_{0,\textcolor{black}{T}})}\\
& = & \nu_T \uu^\top \nabla_{\theta_{\textcolor{black}{T}}} \Gb_T(\underline{y};\theta_{0,\textcolor{black}{T}}) + \frac{\nu^2_T}{2} \uu^\top \nabla^2_{\theta \theta^\top_{\textcolor{black}{T}}} \Gb_T(\underline{y};\theta_{0,\textcolor{black}{T}})\uu 
+ \overset{d_T}{\underset{k=1}{\sum}}\big\{\pp(\frac{\lambda_T}{T},|\theta_{0,k,\textcolor{black}{T}}+\nu_T\uu_k|)-\pp(\frac{\lambda_T}{T},|\theta_{0,k,\textcolor{black}{T}}|)\big\},
\end{eqnarray*}
since the third derivative vanishes. We want to prove
\beq \label{bound_obj}
\begin{array}{llll}
\Pb(\exists \uu, \|\uu\|_2 = C_{\eps} &: &\uu^\top \nabla_{\theta_{\textcolor{black}{T}}} \Gb_T(\underline{y};\theta_{0,\textcolor{black}{T}}) + \frac{\nu_T}{2} \uu^\top \Hb \uu + \frac{\nu_T}{2} \Rc_T(\theta_{0,\textcolor{black}{T}}) \\
& + & \nu^{-1}_T\overset{d_T}{\underset{k=1}{\sum}}\big\{\pp(\frac{\lambda_T}{T},|\theta_{0,k,\textcolor{black}{T}}+\nu_T\uu_k|)-\pp(\frac{\lambda_T}{T},|\theta_{0,k,\textcolor{black}{T}}|)\big\} \leq 0) < \eps,
\end{array}
\eeq
where $\Rc_T(\theta_{0,\textcolor{black}{T}}) = \uu^\top \big\{\nabla^2_{\theta \theta^\top_{\textcolor{black}{T}}} \Gb_T(\underline{y};\theta_{0,\textcolor{black}{T}}) - \Hb\big\} \uu$. First, for $a>0$ and the Markov inequality, we have for the score term
\begin{eqnarray*}
\lefteqn{\Pb(\underset{\uu:\|\uu\|_2=C_{\eps}}{\sup}|\uu^{\top}\nabla_{\theta_{\textcolor{black}{T}}} \Gb_T(\underline{y};\theta_{0,\textcolor{black}{T}})| > a) }\\
&\leq &\Pb(\underset{\uu:\|\uu\|_2=C_{\eps}}{\sup}\|\uu\|_2\|\nabla_{\theta_{\textcolor{black}{T}}} \Gb_T(\underline{y};\theta_{0,\textcolor{black}{T}})\|_2 > a) \\
&\leq &\Pb(\|\nabla_{\theta_{\textcolor{black}{T}}} \Gb_T(\underline{y};\theta_{0,\textcolor{black}{T}})\|_2 > \frac{a}{C_{\eps}}) \\
&\leq & \big(\frac{C_{\eps}}{a}\big)^2 \Eb[\|\nabla_{\theta_{\textcolor{black}{T}}} \Gb_T(\underline{y};\theta_{0,\textcolor{black}{T}})\|^2_2] \\
& \leq & \big(\frac{C_{\eps}}{a}\big)^2 \overset{d_T}{\underset{k=1}{\sum}} \Eb[\big(\partial_{\theta_{k,\textcolor{black}{T}}} \Gb_T(\underline{y};\theta_{0,\textcolor{black}{T}})\big)^2] \\
& = & \big(\frac{C_{\eps}}{a}\big)^2 \frac{1}{T^2}\overset{T}{\underset{t,t'=1}{\sum}} \overset{d_T}{\underset{k=1}{\sum}} \Eb[\partial_{\theta_{k,\textcolor{black}{T}}}\ell(y_s,s\leq t;\theta_{0,\textcolor{black}{T}}) \partial_{\theta_{k,\textcolor{black}{T}}}\ell(y_s,s\leq t';\theta_{0,\textcolor{black}{T}})] \\
& \leq & \big(\frac{C_{\eps}}{a}\big)^2 d_T \frac{1}{T^2} \overset{T}{\underset{t,t'=1}{\sum}} \Psi(|t-t'|),
\end{eqnarray*}
where $\partial_{\theta_{k,\textcolor{black}{T}}}\ell(y_s,s\leq t;\theta_{0,\textcolor{black}{T}}) = -\big(Z_{m,t-1}\otimes \{x_t-\Psi_{0,1:m} Z_{m,t-1}\}\big)_k$ for any $k=1,\cdots,d_T$. By Assumption \ref{assumption_gradient}, $\underset{k=1,\cdots,d_T}{\sup}\Eb[\partial_{\theta_{k,\textcolor{black}{T}}}\ell(y_s,s\leq t;\theta_{0,\textcolor{black}{T}}) \partial_{\theta_{k,\textcolor{black}{T}}}\ell(y_s,s\leq t';\theta_{0,\textcolor{black}{T}})]\leq \Psi(|t-t'|)$ and $\frac{1}{T}\overset{T}{\underset{t,t'=1}{\sum}} \Psi(|t-t'|)<\infty$. As a consequence,
\beqw
\Pb(\underset{\uu:\|\uu\|_2=C_{\eps}}{\sup}|\uu^{\top}\nabla_{\theta_{\textcolor{black}{T}}} \Gb_T(\underline{y};\theta_{0,\textcolor{black}{T}})| > a) \leq \frac{K_1 C^2_{\eps}d_T}{a^2T},
\eeqw
for $K_1>0$ a finite constant. We now focus on the hessian quantity that can be rewritten as
\beqw
\uu^\top \nabla^2_{\theta_{\textcolor{black}{T}} \theta^\top_{\textcolor{black}{T}}} \Gb_T(\underline{y};\theta_{0,\textcolor{black}{T}})\uu = \uu^\top \Eb[\nabla^2_{\theta_{\textcolor{black}{T}} \theta^\top_{\textcolor{black}{T}}} \Gb_T(\underline{y};\theta_{0,\textcolor{black}{T}})]\uu + \Rc_T(\theta_{0,\textcolor{black}{T}}), 
\eeqw
where $\Rc_T(\theta_{0,\textcolor{black}{T}})= \overset{d_T}{\underset{k,l=1}{\sum}}\uu_k\uu_l\big\{\partial^2_{\theta_{k,\textcolor{black}{T}}\theta_{l,\textcolor{black}{T}}}\Gb_T(\underline{y};\theta_{0,\textcolor{black}{T}})-\Eb[\partial^2_{\theta_{k,\textcolor{black}{T}}\theta_{l,\textcolor{black}{T}}}\Gb_T(\underline{y};\theta_{0,\textcolor{black}{T}})]\big\}$. Its two first moments satisfy $\Eb[\Rc_T(\theta_{0,\textcolor{black}{T}})]=0$ and 
\beqw
\text{Var}(\Rc_T(\theta_{0,\textcolor{black}{T}})) = \frac{1}{T^{\textcolor{black}{2}}} \overset{T}{\underset{t,t'=1}{\sum}}  \overset{d_T}{\underset{k,k',l,l'=1}{\sum}}\uu_k \uu_l \uu_{k'} \uu_{l'} \Eb[\zeta_{kl,t} \zeta_{k'l',t'}], 
\eeqw
where $\zeta_{kl,t} = \partial^2_{\theta_{k,\textcolor{black}{T}} \theta_{l,\textcolor{black}{T}}} \ell(y_s,s\leq t;\theta_{0,\textcolor{black}{T}}) - \Eb[\partial^2_{\theta_{k,\textcolor{black}{T}} \theta_{l,\textcolor{black}{T}}} \ell(y_s,s\leq t;\theta_{0,\textcolor{black}{T}})]$ with $\partial^2_{\theta_{k,\textcolor{black}{T}} \theta_{l,\textcolor{black}{T}}} \ell(y_s,s\leq t;\theta_{0,\textcolor{black}{T}})=(Z_{m,t-1} Z^\top_{m,t-1} \otimes I_p)_{kl}$. Let $b>0$, we have by the Markov inequality and Assumption \ref{assumption_second_derivative},
\beqw
\Pb(|\Rc_T(\theta_{0,\textcolor{black}{T}})|>b) \leq \frac{1}{b^2} \Eb[\Rc^2_T(\theta_{0,\textcolor{black}{T}})]\leq \frac{K_2 \|\uu\|^4_2 d^2_T}{T b^2} \leq \frac{C^4_{\eps} K_2 d^2_T}{T b^2},
\eeqw
for some constant $K_2>0$. By Assumption \ref{assumption_var_cov},
\beqw
\uu^\top \Eb[\nabla^2_{\theta_{\textcolor{black}{T}} \theta^\top_{\textcolor{black}{T}}} \Gb_T(\underline{y};\theta_{0,\textcolor{black}{T}})]\uu \geq \lambda_{\min}(\Hb_n)\uu^{\top} \uu.
\eeqw
Let us now consider the penalization part. First, note that
\beqw
\overset{d_T}{\underset{k=1}{\sum}}\big\{\pp(\frac{\lambda_T}{T},|\theta_{0,k,\textcolor{black}{T}}+\nu_T\uu_k|)-\pp(\frac{\lambda_T}{T},|\theta_{0,k,\textcolor{black}{T}}|) \big\} \geq \underset{k\in \Ac_T}{\sum}\big\{\pp(\frac{\lambda_T}{T},|\theta_{0,k,\textcolor{black}{T}}+\nu_T\uu_k|)-\pp(\frac{\lambda_T}{T},|\theta_{0,k,\textcolor{black}{T}}|)\big\}.
\eeqw
For $a>2$, the SCAD penalty is
\beqw
\forall k \in \Ac_T, \; \pp(\frac{\lambda_T}{T},|\theta_{0,k,\textcolor{black}{T}}|) = \begin{cases}
\frac{\lambda_T}{T} |\theta_{0,k,\textcolor{black}{T}}|, & \text{for} \;  |\theta_{0,k,\textcolor{black}{T}}| \leq \frac{\lambda_T}{T}, \\
-\frac{1}{(2(a-1))}(\theta_{0,k,\textcolor{black}{T}}^2-2 a\frac{\lambda_T}{T}|\theta_{0,k,\textcolor{black}{T}}|+(\frac{\lambda_T}{T})^2), &  \text{for} \; \frac{\lambda_T}{T} \leq |\theta_{0,k,\textcolor{black}{T}}| \leq a \frac{\lambda_T}{T}, \\
(a+1)(\frac{\lambda_T}{T})^2/2, & \text{for} \;  |\theta_{0,k,\textcolor{black}{T}}| > a\frac{\lambda_T}{T},
\end{cases}
\eeqw
so that the derivative is given as:
\beqw
\forall k \in \Ac_T, \; \partial_{\theta_{k,\textcolor{black}{T}}}\pp(\frac{\lambda_T}{T},|\theta_{0,k,\textcolor{black}{T}}|) = \begin{cases}
\frac{\lambda_T}{T}, & \text{for} \;  |\theta_{0,k,\textcolor{black}{T}}| \leq \frac{\lambda_T}{T}, \\
\frac{a\frac{\lambda_T}{T}-|\theta_{0,k,\textcolor{black}{T}}|}{a-1}, &  \text{for} \; \frac{\lambda_T}{T} \leq |\theta_{0,k,\textcolor{black}{T}}| \leq a \frac{\lambda_T}{T}, \\
0, & \text{for} \;  |\theta_{0,k,\textcolor{black}{T}}| > a\frac{\lambda_T}{T}.
\end{cases}
\eeqw
This derivative can be written compactly as
\beqw
\forall k \in \Ac_T, \; \partial_{\theta_{k,\textcolor{black}{T}}}\pp(\frac{\lambda_T}{T},|\theta_{0,k,\textcolor{black}{T}}|) = \frac{\lambda_T}{T}\big\{\mathbf{1}_{\{|\theta_{0,k,\textcolor{black}{T}}|\leq \frac{\lambda_T}{T}\}}+\frac{(a\frac{\lambda_T}{T}-|\theta_{0,k,\textcolor{black}{T}}|)^+}{(a-1)\frac{\lambda_T}{T}}\mathbf{1}_{\{|\theta_{0,k,\textcolor{black}{T}}|>\frac{\lambda_T}{T}\}}\big\}
\eeqw
As a consequence, the SCAD penalty is twice continuously differentiable, its second derivative is zero unless some components $|\theta_{0,k,\textcolor{black}{T}}|$ take values in $[\frac{\lambda_T}{T},a\frac{\lambda_T}{T}]$ and $\forall k \in \Ac_T,\partial^2_{\theta_{k,\textcolor{black}{T}} \theta_{k,\textcolor{black}{T}}}\pp(\frac{\lambda_T}{T},|\theta_{0,k,\textcolor{black}{T}}|) \rightarrow 0$ as $T\rightarrow \infty$ under the scaling $\lambda_T=o(T)$. The MCP behaves similarly to the SCAD as it is a quadratic spline and is defined as
\beqw
\forall k \in \Ac_T, \; \pp(\frac{\lambda_T}{T},|\theta_{0,k,\textcolor{black}{T}}|) = 
\begin{cases}
\frac{\lambda_T}{T}|\theta_{0,k,\textcolor{black}{T}}| - \frac{\theta^2_{0,k}}{b}, & \text{for} \; |\theta_{0,k,\textcolor{black}{T}}| \leq b\frac{\lambda_T}{T}, \\
\frac{1}{2}b (\frac{\lambda_T}{T})^2 , & \text{for} \; |\theta_{0,k,\textcolor{black}{T}}| > b\frac{\lambda_T}{T},
\end{cases}
\eeqw
so that the derivative is given as
\beqw
\forall k \in \Ac_T, \; \partial_{\theta_{k,\textcolor{black}{T}}}\pp(\frac{\lambda_T}{T},|\theta_{0,k,\textcolor{black}{T}}|) = 
\begin{cases}
(\frac{\lambda_T}{T}-\frac{|\theta_{0,k,\textcolor{black}{T}}|}{b})\text{sgn}(\theta_{0,k,\textcolor{black}{T}}), & \text{for} \; |\theta_{0,k,\textcolor{black}{T}}| \leq b\frac{\lambda_T}{T}, \\
0, & \text{for} \; |\theta_{0,k,\textcolor{black}{T}}| > b\frac{\lambda_T}{T},
\end{cases}
\eeqw
which can be expressed as $\forall k \in \Ac_T, \; \partial_{\theta_{k,\textcolor{black}{T}}}\pp(\frac{\lambda_T}{T},|\theta_{0,k,\textcolor{black}{T}}|) = \frac{1}{b}(b\frac{\lambda_T}{T}-|\theta_{0,k,\textcolor{black}{T}}|)^+$.
Thus, under $\lambda_T = o(T)$, we obtain $\partial^2_{\theta_{k,\textcolor{black}{T}} \theta_{k,\textcolor{black}{T}}}\pp(\frac{\lambda_T}{T},|\theta_{0,k,\textcolor{black}{T}}|) \rightarrow 0$ when $T \rightarrow \infty$. Now for any $k \in \Ac_{\textcolor{black}{T}} \subset \{1,\cdots,d_T\}$, we have
\begin{eqnarray*}
\lefteqn{\pp(\frac{\lambda_T}{T},|\theta_{0,k,\textcolor{black}{T}} + \nu_T  u_k|)-\pp(\frac{\lambda_T}{T},|\theta_{0,k,\textcolor{black}{T}}|)}\\
& = & \nu_T  u_k \textnormal{sgn}(\theta_{0,k,\textcolor{black}{T}}) \partial_{\theta_{k,\textcolor{black}{T}}}\pp(\frac{\lambda_T}{T},|\theta_{0,k,\textcolor{black}{T}}|)
+\frac{\nu^2_T}{2} u^2_k \partial^2_{\theta_{k,\textcolor{black}{T}} \theta_{k,\textcolor{black}{T}}}\pp(\frac{\lambda_T}{T},|\theta_{0,k,\textcolor{black}{T}}|) \big(1+o(1)\big).
\end{eqnarray*}
Hence, we obtain
\beqw
|\underset{k \in \Ac_{\textcolor{black}{T}}}{\sum} \pp(\frac{\lambda_T}{T},|\theta_{0,k,\textcolor{black}{T}}+ \nu_T  u_k|)-\pp(\frac{\lambda_T}{T},|\theta_{0,k,\textcolor{black}{T}}|) | \leq \nu_T \|\uu\|_1 A_{1,\textcolor{black}{T}} + \frac{\nu^2_T}{2} \|\uu\|^2_2 A_{2,\textcolor{black}{T}} \big(1+o(1)\big).
\eeqw
Using $\|\uu\|_1 \leq \sqrt{\textnormal{card}(\Ac_{\textcolor{black}{T}})} \|\uu\|_2$, we obtain
\beqw
|\underset{k \in \Ac_{\textcolor{black}{T}}}{\sum} \pp(\frac{\lambda_T}{T},|\theta_{0,k,\textcolor{black}{T}}+ \nu_T  u_k|)-\pp(\frac{\lambda_T}{T},|\theta_{0,k,\textcolor{black}{T}}|) | \leq \nu_T \sqrt{\textnormal{card}(\Ac_{\textcolor{black}{T}})} \|\uu\|_2 A_{1,\textcolor{black}{T}} + \nu^2_T \|\uu\|^2_2 A_{2,\textcolor{black}{T}}.
\eeqw
Finally, for the LASSO penalty, we have
\beqw
|\underset{k \in \Ac_{\textcolor{black}{T}}}{\sum} \pp(\frac{\lambda_T}{T},|\theta_{0,k,\textcolor{black}{T}}+ \nu_T  u_k|)-\pp(\frac{\lambda_T}{T},|\theta_{0,k,\textcolor{black}{T}}|) | =| \underset{k \in \Ac_{\textcolor{black}{T}}}{\sum}\frac{\lambda_T}{T}\big(|\theta_{0,k,\textcolor{black}{T}}+ \nu_T  u_k|-|\theta_{0,k,\textcolor{black}{T}}|\big)| \leq \nu_T\frac{\lambda_T}{T}\sqrt{\text{card}(\Ac_{\textcolor{black}{T}})}\|\uu\|_2.
\eeqw
Then, denoting $\delta_T = \lambda_{\min}(\Hb) C^2_{\eps} \nu_T/2$, and using $\frac{\nu_T}{2}\Eb[\uu^\top\nabla^2_{\theta \theta^\top_{\textcolor{black}{T}}} \ell(y_s,s\leq t;\theta_{0,\textcolor{black}{T}})\uu] \geq \delta_T$, we deduce that (\ref{bound_obj}) can be bounded as
\begin{eqnarray*}
\lefteqn{\Pb(\exists \uu, \|\uu\|_2 = C_{\eps} :\uu^\top \nabla_{\theta_{\textcolor{black}{T}}} \Gb_T(\underline{y};\theta_{0,\textcolor{black}{T}}) +  \frac{\nu_T}{2} \uu^\top \nabla^2_{\theta_{\textcolor{black}{T}}\theta^\top_{\textcolor{black}{T}}}\Gb_T(\underline{y};\theta_{0,\textcolor{black}{T}})}\\
 && + \nu^{-1}_T\overset{d_T}{\underset{i=1}{\sum}}\big\{\pp(\frac{\lambda_T}{T},|\theta_{0,i,\textcolor{black}{T}}+\nu_T\uu_i|)-\pp(\frac{\lambda_T}{T},|\theta_{0,i,\textcolor{black}{T}}|)\big\} \leq 0) \\
&\leq &\Pb(\exists \uu, \|\uu\|_2 = C_{\eps} : |\nabla_{\theta_{\textcolor{black}{T}}} \Gb_T(\underline{y};\theta_{0,\textcolor{black}{T}})\uu| > \delta_T/4) + \Pb(\exists \uu, \|\uu\|_2 = C_{\eps} : |\frac{\nu_T}{2} \Rc_T(\theta_{0,\textcolor{black}{T}}) |> \delta_T/4) \\
& & + \Pb(\exists \uu, \|\uu\|_2 = C_{\eps} :  |\overset{d_T}{\underset{k=1}{\sum}}\{\pp(\frac{\lambda_T}{T},|\theta_{0,k,\textcolor{black}{T}}+\nu_T\uu_k|)-\pp(\frac{\lambda_T}{T},|\theta_{0,k,\textcolor{black}{T}}|)\}|> \nu_T \delta_T/4) \\
& \leq & \frac{16 K_1 C^2_{\eps} d_T}{\delta^2_T T} + \frac{4 \nu^2_T C^4_{\eps} d^2_T}{T \delta^2_T} + \eps/3 \\
& \leq & \frac{C_1 d_T}{T C^2_{\eps}\nu^2_T} + \frac{C_2 d^2_T}{T} + \eps/3,
\end{eqnarray*}
for $C_1,C_2>0$ some finite constants, where we used for $T$ and $C_{\eps}$ sufficiently large enough
\beqw
\Pb(\exists \uu, \|\uu\|_2 = C_{\eps} :  |\overset{d_T}{\underset{k=1}{\sum}}\{\pp(\frac{\lambda_T}{T},|\theta_{0,k,\textcolor{black}{T}}+\nu_T\uu_k|)-\pp(\frac{\lambda_T}{T},|\theta_{0,k,\textcolor{black}{T}}|)\}|> \nu_T \delta_T/4) < \eps/3.
\eeqw
Moreover, we chose $\nu_T = \sqrt{d}_T \big(T^{-1/2}+R_T\big)$ with $R_T = A_{1,T}$ for the SCAD and MCP, $R_T = \frac{\lambda_T}{T}$ for the LASSO, we obtain
\begin{eqnarray*}
\lefteqn{\Pb(\exists \uu, \|\uu\|_2 = C_{\eps} :\uu^\top \nabla_{\theta_{\textcolor{black}{T}}} \Gb_T(\underline{y};\theta_{0,\textcolor{black}{T}}) +  \frac{\nu_T}{2} \uu^\top \nabla^2_{\theta_{\textcolor{black}{T}}\theta^\top_{\textcolor{black}{T}}}\Gb_T(\underline{y};\theta_{0,\textcolor{black}{T}})}\\
 && + \nu^{-1}_T\overset{d_T}{\underset{k=1}{\sum}}\big\{\pp(\frac{\lambda_T}{T},|\theta_{0,k,\textcolor{black}{T}}+\nu_T\uu_k|)-\pp(\frac{\lambda_T}{T},|\theta_{0,k,\textcolor{black}{T}}|)\big\} \leq 0) \\
& \leq & \frac{C_1}{C^2_{\eps}} + \frac{C_2 d^2_T}{T} + \eps/3,
\end{eqnarray*}
Now for $C_{\eps}$ sufficiently large, $\frac{C_1}{C^2_{\eps}}<\eps/3$. Now there exists $T_0$ such that for $T>T_0$ and a fixed $C_{\eps}$, $\frac{C_2 d^2_T}{T}<\eps/3$ under $d^2_T=o(T)$. Hence, 
\begin{eqnarray*}
\lefteqn{\Pb(\exists \uu, \|\uu\|_2 = C_{\eps} :\uu^\top \nabla_{\theta_{\textcolor{black}{T}}} \Gb_T(\underline{y};\theta_{0,\textcolor{black}{T}}) +  \frac{\nu_T}{2} \uu^\top \nabla^2_{\theta_{\textcolor{black}{T}}\theta^\top_{\textcolor{black}{T}}}\Gb_T(\underline{y};\theta_{0,\textcolor{black}{T}})}\\
&&+ \nu^{-1}_T\overset{d_T}{\underset{k=1}{\sum}}\big\{\pp(\frac{\lambda_T}{T},|\theta_{0,k,\textcolor{black}{T}}+\nu_T\uu_k|)-\pp(\frac{\lambda_T}{T},|\theta_{0,k,\textcolor{black}{T}}|)\big\} \leq 0) < \eps.
\end{eqnarray*}
We deduce $\|\widehat{\theta}_{\textcolor{black}{T}}-\theta_{0,\textcolor{black}{T}}\|_2 = O_p(\nu_T)$.
\end{proof}

\begin{proof}[Proof of Theorem \ref{oracle_theorem}.]
Let us define $\theta = (\theta^\top_{\Ac_{\textcolor{black}{T}}},\theta^\top_{\Ac^c_{\textcolor{black}{T}}})^\top$. To prove the support recovery consistency, we show with probability tending to one when $T \rightarrow \infty$, under $\|\theta_{\Ac_{\textcolor{black}{T}}}-\theta_{0,\Ac_{\textcolor{black}{T}}}\| = O_p(T^{-1/2})$ and suitable regularisation rates depending on the penalty, that
\beq \label{correct_min}
\Gb^{\text{pen}}_T(\underline{y};\theta_{\Ac_{\textcolor{black}{T}}},\mathbf{0}_{\Ac^c_{\textcolor{black}{T}}}) = \underset{\|\theta_{\Ac^c_{\textcolor{black}{T}}}\|_2 \leq C \sqrt{d_T/T}}{\min} \;\big\{\Gb^{\text{pen}}_T(\underline{y};\theta_{\Ac_{\textcolor{black}{T}}},\theta_{\Ac^c_{\textcolor{black}{T}}}) \big\}.
\eeq
To prove (\ref{correct_min}), for any $\sqrt{T/d_T}$-consistent $\theta_{\Ac_{\textcolor{black}{T}}}$, we show that over the set $\{i \in \Ac^c_{\textcolor{black}{T}}, \theta_{i,\textcolor{black}{T}}:|\theta_{i,\textcolor{black}{T}}| \leq \sqrt{d_T/T}C\}$ for $C>0$,
\beq \label{gradient_sign}
\begin{array}{llll}
\partial_{\theta_{i,\textcolor{black}{T}}} \Gb^{\text{pen}}_T(\underline{y};\theta_{\textcolor{black}{T}}) > 0 & \text{when} & 0 < \theta_{i,\textcolor{black}{T}} < \sqrt{d_T/T}C, \\
\partial_{\theta_{i,\textcolor{black}{T}}} \Gb^{\text{pen}}_T(\underline{y};\theta_{\textcolor{black}{T}}) < 0 & \text{when} & -\sqrt{d_T/T}C < \theta_{i,\textcolor{black}{T}} < 0,
\end{array}
\eeq
with probability converging to $1$. For any index $i \in \Ac^c_{\textcolor{black}{T}}$, by a Taylor expansion around the true parameter, we have
\begin{eqnarray*}
\lefteqn{ \partial_{\theta_{i,\textcolor{black}{T}}} \Gb^{\text{pen}}_T(\underline{y};\theta_{\textcolor{black}{T}}) = \partial_{\theta_{i,\textcolor{black}{T}}} \Gb_T(\underline{y};\theta_{\textcolor{black}{T}}) + \partial_{\theta_{i,\textcolor{black}{T}}} \pp(\frac{\lambda_T}{T},|\theta_{i,\textcolor{black}{T}}|) \text{sgn}(\theta_{i,\textcolor{black}{T}})} \\
& = & \partial_{\theta_{i,\textcolor{black}{T}}} \Gb_T(\underline{y};\theta_{0,\textcolor{black}{T}}) + \overset{d_T}{\underset{k=1}{\sum}} \partial^2_{\theta_{i,\textcolor{black}{T}} \theta_{k,\textcolor{black}{T}}} \Gb_T(\underline{y};\theta_{0,\textcolor{black}{T}}) \big(\theta_{k,\textcolor{black}{T}}-\theta_{0,k,\textcolor{black}{T}}\big) + \partial_{\theta_{i,\textcolor{black}{T}}} \pp(\frac{\lambda_T}{T},|\theta_{i,\textcolor{black}{T}}|) \text{sgn}(\theta_{i,\textcolor{black}{T}}).
\end{eqnarray*}
Now using Assumption \ref{assumption_gradient}, we have the bound $\partial_{\theta_{i,\textcolor{black}{T}}} \Gb_T(\underline{y};\theta_{0,\textcolor{black}{T}}) = O_p(\sqrt{\frac{d_T}{T}})$. The second order term can be developed as
\beqw
\overset{d_T}{\underset{k=1}{\sum}} \partial^2_{\theta_{i,\textcolor{black}{T}} \theta_{k,\textcolor{black}{T}}} \Gb_T(\underline{y};\theta_{0,\textcolor{black}{T}}) \big(\theta_{k,\textcolor{black}{T}}-\theta_{0,k,\textcolor{black}{T}}\big) = \overset{d_T}{\underset{k=1}{\sum}} \Eb[\partial^2_{\theta_{i,\textcolor{black}{T}}\theta_{k,\textcolor{black}{T}}} \ell(y_s,s\leq t;\theta_{0,\textcolor{black}{T}})]\big(\theta_{k,\textcolor{black}{T}}-\theta_{0,k,\textcolor{black}{T}}\big) + \Kc_T(\theta_{\textcolor{black}{T}},\theta_{0,\textcolor{black}{T}}),
\eeqw
with 
\beqw
\Kc_T(\theta_{\textcolor{black}{T}},\theta_{0,\textcolor{black}{T}}) = \overset{d_T}{\underset{k=1}{\sum}}\big(\partial^2_{\theta_{i,\textcolor{black}{T}}\theta_{k,\textcolor{black}{T}}}\Gb_T(\underline{y};\theta_{0,\textcolor{black}{T}})-\Eb[\partial^2_{\theta_{i,\textcolor{black}{T}}\theta_{k,\textcolor{black}{T}}} \ell(y_s,s\leq t;\theta_{0,\textcolor{black}{T}})]\big)\big(\theta_{k,\textcolor{black}{T}}-\theta_{0,k,\textcolor{black}{T}}\big).
\eeqw
Now using $\|\theta-\theta_{0,\textcolor{black}{T}}\|_2=O_p(\sqrt{d_T/T})$, denoting $\Pc_T(\theta_{0,\textcolor{black}{T}}) = \text{vec}\big(\partial^2_{\theta_{i,\textcolor{black}{T}}\theta_{k,\textcolor{black}{T}}}\Gb_T(\underline{y};\theta_{0,\textcolor{black}{T}})-\Eb[\partial^2_{\theta_{i,\textcolor{black}{T}}\theta_{k,\textcolor{black}{T}}} \ell(y_s,s\leq t;\theta_{0,\textcolor{black}{T}})],k=1,\cdots,\textcolor{black}{d_T}\big)$ the \textcolor{black}{$d_T$}-dimensional vector, we have for any $a>0$ and the Cauchy-Schwarz inequality
\begin{eqnarray*}
\lefteqn{\Pb(|\Kc_T(\theta_{\textcolor{black}{T}},\theta_{0,\textcolor{black}{T}})|>a)\leq \Pb(\|\theta-\theta_{0,\textcolor{black}{T}}\|_2\|\Pc_T(\theta_{0,\textcolor{black}{T}})\|_2>a)}\\
& \leq & \Pb(C_0\sqrt{\frac{d_T}{T}}\|\Pc_T(\theta_{0,\textcolor{black}{T}})\|_2|>a) + \Pb(\|\theta-\theta_{0,\textcolor{black}{T}}\|_2>C_0\sqrt{\frac{d_T}{T}}).
\end{eqnarray*}
where $C_0>0$ is a fixed constant. We have
\beqw
\Eb[\|\Pc_T(\theta_{0,\textcolor{black}{T}})\|^2_2] = \frac{1}{T^2}\overset{T}{\underset{t,t'=1}{\sum}} \overset{\textcolor{black}{d_T}}{\underset{k,\textcolor{black}{k'}=1}{\sum}} \Eb[\zeta_{i,k,t} \zeta_{i,k',t'}],
\eeqw
where $\zeta_{i,k,t} = \partial^2_{\theta_{i,\textcolor{black}{T}}\theta_{k,\textcolor{black}{T}}}\ell(y_s,s\leq t;\theta_{0,\textcolor{black}{T}})-\Eb[\partial^2_{\theta_{i,\textcolor{black}{T}}\theta_{k,\textcolor{black}{T}}}\ell(y_s,s\leq t;\theta_{0,\textcolor{black}{T}})]$. As a consequence, by Assumption \ref{assumption_second_derivative}, we deduce
\beqw
\Pb(C_0\sqrt{\frac{d_T}{T}}\|\Pc_T(\theta_{0,\textcolor{black}{T}})\|_2|>a) \leq \frac{L C^2_0 d^{\textcolor{black}{3}}_T}{T^2 a^2},
\eeqw
for a finite constant $L>0$. Hence, taking $a = \sqrt{d_T/T}$, we obtain for $L_0>0$ finite
\beqw
\Pb(|\Kc_T(\theta_{\textcolor{black}{T}},\theta_{0,\textcolor{black}{T}})|>L_0\sqrt{\frac{d_T}{T}}) \leq \frac{C_{st}d^2_T}{T} + \frac{\eps}{2},
\eeqw
where $C_{st}$ is a generic constant. Under the scaling assumption on $(d_T,T)$, we deduce $|\Kc_T(\theta_{\textcolor{black}{T}},\theta_{0,\textcolor{black}{T}})|=O_p(\sqrt{d_T/T})$. Moreover, using $\|\theta-\theta_{0,\textcolor{black}{T}}\|_2=O_p(\sqrt{d_T/T})$, 
\beqw
|\overset{d_T}{\underset{k=1}{\sum}} \Eb[\partial^2_{\theta_{i,\textcolor{black}{T}}\theta_{k,\textcolor{black}{T}}} \ell(y_s,s\leq t;\theta_{0,\textcolor{black}{T}})]\big(\theta_{k,\textcolor{black}{T}}-\theta_{0,k,\textcolor{black}{T}}\big)| \leq \big(\overset{d_T}{\underset{k=1}{\sum}} \Eb[\partial^2_{\theta_{i,\textcolor{black}{T}}\theta_{k,\textcolor{black}{T}}} \ell(y_s,s\leq t;\theta_{0,\textcolor{black}{T}})]^2\big)^{1/2} C_0\sqrt{\frac{d_T}{T}} \leq C_{st} \sqrt{\frac{d_T}{T}},
\eeqw
using the bound assumption on the eigenvalues of the Hessian matrix of Assumption \ref{assumption_var_cov}. Hence, 
\beqw
|\overset{d_T}{\underset{k=1}{\sum}} \Eb[\partial^2_{\theta_{i,\textcolor{black}{T}}\theta_{k,\textcolor{black}{T}}} \ell(y_s,s\leq t;\theta_{0,\textcolor{black}{T}})]\big(\theta_{k,\textcolor{black}{T}}-\theta_{0,k,\textcolor{black}{T}}\big)| = O_p(\sqrt{\frac{d_T}{T}}).
\eeqw
Thus, putting the pieces together, we obtain
\beqw
\overset{d_T}{\underset{k=1}{\sum}} \partial^2_{\theta_{i,\textcolor{black}{T}} \theta_{k,\textcolor{black}{T}}} \Gb_T(\underline{y};\theta_{0,\textcolor{black}{T}}) \big(\theta_{k,\textcolor{black}{T}}-\theta_{0,k,\textcolor{black}{T}}\big) = O_p(\sqrt{\frac{d_T}{T}}).
\eeqw
We thus obtain for the SCAD and MCP penalty functions
\beqw
\begin{array}{llll}
\partial_{\theta_{i,\textcolor{black}{T}}} \Gb_T(\underline{y};\theta_{\textcolor{black}{T}}) & = & O_p(\sqrt{\frac{d_T}{T}}) + \partial_{\theta_{i,\textcolor{black}{T}}} \pp(\frac{\lambda_T}{T},|\theta_{i,\textcolor{black}{T}}|) \text{sgn}(\theta_{i,\textcolor{black}{T}}) \\
& = & \frac{\lambda_T}{T}\big[\frac{T}{\lambda_T} \partial_{\theta_{i,\textcolor{black}{T}}} \pp(\frac{\lambda_T}{T},|\theta_{i,\textcolor{black}{T}}|) \text{sgn}(\theta_{i,\textcolor{black}{T}}) + O_p(\frac{\sqrt{T d_T}}{\lambda_T}) \big].
\end{array}
\eeqw
As a consequence, under Assumption \ref{assumption_folded}-(iv), that is $\underset{T \rightarrow \infty}{\lim} \; \underset{x \rightarrow 0^+}{\lim \, \inf} \; \frac{T}{\lambda_T}\nabla_{x}\pp(\frac{\lambda_T}{T},x) > 0$ and if the regularisation parameter satisfies $\frac{\lambda_T}{\sqrt{d_T T}} \rightarrow \infty$, we deduce that the sign of the gradient entirely depends on the sign of $\widehat{\theta}_{\textcolor{black}{T}}$. This this proves (\ref{gradient_sign}).

\medskip

We now turn to the asymptotic distribution. We proved that $\widehat{\theta}_{\Ac^c_{\textcolor{black}{T}}}$ becomes $\mathbf{0}_{\Ac^c_{\textcolor{black}{T}}}$ with probability approaching one. Now by a Taylor expansion around $\theta_{0,i,\textcolor{black}{T}}$, for each $i \in \Ac_{\textcolor{black}{T}}$, we have  
\begin{eqnarray*}
\lefteqn{\partial_{\theta_{i,\textcolor{black}{T}}} \Gb_T(\underline{y};\widehat{\theta}_{\textcolor{black}{T}})+\partial_{\theta_{i,\textcolor{black}{T}}}\pp(\frac{\lambda_T}{T},|\widehat{\theta}_{i,\textcolor{black}{T}}|)\text{sgn}(\widehat{\theta}_{i,\textcolor{black}{T}}) }\\
&=& \partial_{\theta_{i,\textcolor{black}{T}}} \Gb_T(\underline{y};\theta_{0,\textcolor{black}{T}}) + \underset{j\in \Ac_{\textcolor{black}{T}}}{\sum}\partial^2_{\theta_{i,\textcolor{black}{T}} \theta_{j,\textcolor{black}{T}}} \Gb_T(\underline{y};\theta_{0,\textcolor{black}{T}}) (\widehat{\theta}_{j,\textcolor{black}{T}}-\theta_{0,j,\textcolor{black}{T}}) + \partial_{\theta_{i,\textcolor{black}{T}}}\pp(\frac{\lambda_T}{T},|\theta_{0,i,\textcolor{black}{T}}|)\text{sgn}(\theta_{0,i,\textcolor{black}{T}})\\
&+ &\partial^2_{\theta_{i,\textcolor{black}{T}} \theta_{i,\textcolor{black}{T}}}\pp(\frac{\lambda_T}{T},|\textcolor{black}{\widetilde{\theta}}_{i,T}|)(\widehat{\theta}_{i,\textcolor{black}{T}}-\theta_{0,i,\textcolor{black}{T}}),
\end{eqnarray*}
where $\textcolor{black}{\widetilde{\theta}}_{\textcolor{black}{T}}$ is such that $\|\textcolor{black}{\widetilde{\theta}}_{\textcolor{black}{T}}-\theta_{0,\textcolor{black}{T}}\|_2\leq \|\widehat{\theta}_{\textcolor{black}{T}}-\theta_{0,\textcolor{black}{T}}\|_2$.
Then inverting this relationship and multiplying by $\sqrt{T}$, we obtain in vector form with respect to the elements in $\Ac_{\textcolor{black}{T}}$
\begin{eqnarray*}
\lefteqn{\Hb_{\Ac_{\textcolor{black}{T}}\Ac_{\textcolor{black}{T}}}\big(\widehat{\theta}_{\textcolor{black}{T}}-\theta_{0,\textcolor{black}{T}}\big)_{\Ac_{\textcolor{black}{T}}} + \mathbf{b}(\theta_{0,\textcolor{black}{T}})_{\Ac_{\textcolor{black}{T}}} + \mathbf{S}(\theta_{0,\textcolor{black}{T}})_{\Ac_{\textcolor{black}{T}}\Ac_{\textcolor{black}{T}}}\big(\widehat{\theta}_{\textcolor{black}{T}}-\theta_{0,\textcolor{black}{T}}\big)_{\Ac_{\textcolor{black}{T}}}}\\
&=& -\nabla_{\theta_{\Ac_{\textcolor{black}{T}}}}\Gb_T(\underline{y};0_{\Ac^c_{\textcolor{black}{T}}},\theta_{0,\Ac_{\textcolor{black}{T}}}) - \Pc(\theta_{0,\textcolor{black}{T}})_{\Ac_{\textcolor{black}{T}}\Ac_{\textcolor{black}{T}}}\big(\widehat{\theta}_{\textcolor{black}{T}}-\theta_{0,\textcolor{black}{T}}\big)_{\Ac_{\textcolor{black}{T}}} - \big\{\mathbf{S}(\widetilde{\theta}_{\textcolor{black}{T}})_{\Ac_{\textcolor{black}{T}}\Ac_{\textcolor{black}{T}}}-\mathbf{S}(\theta_{0,\textcolor{black}{T}})_{\Ac_{\textcolor{black}{T}}\Ac_{\textcolor{black}{T}}}\big\}\big(\widehat{\theta}_{\textcolor{black}{T}}-\theta_{0,\textcolor{black}{T}}\big)_{\Ac_{\textcolor{black}{T}}},
\end{eqnarray*}
where $\Pc(\theta_{0,\textcolor{black}{T}})_{\Ac_{\textcolor{black}{T}}\Ac_{\textcolor{black}{T}}} = \nabla^2_{\theta_{\Ac_{\textcolor{black}{T}}}\theta^\top_{\Ac_{\textcolor{black}{T}}}}\Gb_T(\underline{y};0_{\Ac^c_{\textcolor{black}{T}}},\theta_{0,\Ac_{\textcolor{black}{T}}}) - \Hb_{\Ac_{\textcolor{black}{T}}\Ac_{\textcolor{black}{T}}}$ and
\beqw
\begin{array}{llll}
\mathbf{b}(\theta_{0,\textcolor{black}{T}})_{\Ac_{\textcolor{black}{T}}} &=& \big(\partial_{\theta_{j,\textcolor{black}{T}}} \pp(\frac{\lambda_T}{T},|\theta_{0,j}|)\text{sgn}(\theta_{0,j}),j=1,\cdots,\textcolor{black}{k_T}\big)^\top \in \Rb^{\textcolor{black}{k_T}},\\ \mathbf{S}(\theta_{0,\textcolor{black}{T}})_{\Ac_{\textcolor{black}{T}}\Ac_{\textcolor{black}{T}}} &=& \text{diag}(\partial^2_{\theta_{i,\textcolor{black}{T}} \theta_{i,\textcolor{black}{T}}} \pp(\frac{\lambda_T}{T},|\theta_{0,i,\textcolor{black}{T}}|),i =1,\cdots,\textcolor{black}{k_T}) \in \Mc_{\textcolor{black}{k_T} \times \textcolor{black}{k_T}}(\Rb).
\end{array}
\eeqw
Let $\mathbf{K}_{\Ac_{\textcolor{black}{T}} \Ac_{\textcolor{black}{T}}}= I_{\Ac_{\textcolor{black}{T}}}+\Hb^{-1}_{\Ac_{\textcolor{black}{T}}\Ac_{\textcolor{black}{T}}} \mathbf{S}(\theta_{0,\textcolor{black}{T}})_{\Ac_{\textcolor{black}{T}}\Ac_{\textcolor{black}{T}}}$, multiplying both sides by $\Hb^{-1}_{\Ac_{\textcolor{black}{T}}\Ac_{\textcolor{black}{T}}}$, and multiplying by $\sqrt{T}Q_T \Vb^{-1/2}_{\Ac_{\textcolor{black}{T}}\Ac_{\textcolor{black}{T}}}$, we obtain
\begin{eqnarray*}
\lefteqn{\sqrt{T} Q_T \Vb^{-1/2}_{\Ac_{\textcolor{black}{T}}\Ac_{\textcolor{black}{T}}} \mathbf{K}_{\Ac_{\textcolor{black}{T}} \Ac_{\textcolor{black}{T}}} \Big[\big(\widehat{\theta}_{\textcolor{black}{T}}-\theta_{0,\textcolor{black}{T}}\big)_{\Ac_{\textcolor{black}{T}}}  + \big(\Hb_{\Ac_{\textcolor{black}{T}}\Ac_{\textcolor{black}{T}}} \mathbf{K}_{\Ac_{\textcolor{black}{T}} \Ac_{\textcolor{black}{T}}}\big)^{-1}  \mathbf{b}(\theta_{0,\textcolor{black}{T}})_{\Ac_{\textcolor{black}{T}}}\Big] }\\
& = &  -\sqrt{T} Q_T \Vb^{-1/2}_{\Ac_{\textcolor{black}{T}}\Ac_{\textcolor{black}{T}}}\Hb^{-1}_{\Ac_{\textcolor{black}{T}}\Ac_{\textcolor{black}{T}}} \Big[\nabla_{\theta_{\Ac_{\textcolor{black}{T}}}}\Gb_T(\underline{y};0_{\Ac^c_{\textcolor{black}{T}}},\theta_{0,\Ac_{\textcolor{black}{T}}}) + \Pc(\theta_{0,\textcolor{black}{T}})_{\Ac_{\textcolor{black}{T}},\Ac_{\textcolor{black}{T}}}\big(\widehat{\theta}_{\textcolor{black}{T}}-\theta_{0,\textcolor{black}{T}}\big)_{\Ac_{\textcolor{black}{T}}}\\
&&+ \big\{\mathbf{S}(\widetilde{\theta}_{\textcolor{black}{T}})_{\Ac_{\textcolor{black}{T}}\Ac_{\textcolor{black}{T}}}-\mathbf{S}(\theta_{0,\textcolor{black}{T}})_{\Ac_{\textcolor{black}{T}}\Ac_{\textcolor{black}{T}}}\big\}\big(\widehat{\theta}_{\textcolor{black}{T}}-\theta_{0,\textcolor{black}{T}}\big)_{\Ac_{\textcolor{black}{T}}}\Big].
\end{eqnarray*}
First, let us treat the term involving the second order derivative. \textcolor{black}{We have:
\begin{eqnarray*}
\lefteqn{\Pc(\theta_{0,\textcolor{black}{T}})_{\Ac_{\textcolor{black}{T}},\Ac_{\textcolor{black}{T}}}\big(\widehat{\theta}_{\textcolor{black}{T}}-\theta_{0,\textcolor{black}{T}}\big)_{\Ac_{\textcolor{black}{T}}}}\\
& \leq & \|\nabla^2_{\theta_{\Ac_{\textcolor{black}{T}}}\theta^\top_{\Ac_{\textcolor{black}{T}}}}\Gb_T(\underline{y};0_{\Ac^c_{\textcolor{black}{T}}},\theta_{0,\Ac_{\textcolor{black}{T}}}) - \Hb_{\Ac_{\textcolor{black}{T}}\Ac_{\textcolor{black}{T}}}\|_F\|\big(\widehat{\theta}_{\textcolor{black}{T}}-\theta_{0,\textcolor{black}{T}}\big)_{\Ac_{\textcolor{black}{T}}}\|_2 =  O_p(\frac{d_T}{\sqrt{T}}) O_p(\sqrt{\frac{d_T}{T}}) = \textcolor{black}{o_p(\frac{1}{\sqrt{T}})}, 
\end{eqnarray*}
using Assumption \ref{assumption_second_derivative}. Since $\|\Hb^{-1}_{\Ac_T\Ac_T} \xx\|_2 \leq \lambda_{\min}^{-1}(\Hb_{\Ac_T\Ac_T})\|\xx\|_2$ for any vector $\xx \in \Rb^{k_T}$, we deduce:
\begin{eqnarray*}
| Q_T \Vb^{-1/2}_{\Ac_{\textcolor{black}{T}}\Ac_{\textcolor{black}{T}}}\Hb^{-1}_{\Ac_{\textcolor{black}{T}}\Ac_{\textcolor{black}{T}}} \Pc(\theta_{0,\textcolor{black}{T}})_{\Ac_{\textcolor{black}{T}},\Ac_{\textcolor{black}{T}}}\big(\widehat{\theta}_{\textcolor{black}{T}}-\theta_{0,\textcolor{black}{T}}\big)_{\Ac_{\textcolor{black}{T}}}| \leq | Q_T \Vb^{-1/2}_{\Ac_{\textcolor{black}{T}}\Ac_{\textcolor{black}{T}}}| \lambda_{\min}^{-1}(\Hb_{\Ac_T\Ac_T})O_p(\frac{d_T}{\sqrt{T}}) O_p(\sqrt{\frac{d_T}{T}}) = \textcolor{black}{o_p(\frac{1}{\sqrt{T}})}.
\end{eqnarray*}}
As for the expansion with respect to the penalty term, using Assumption \ref{assumption_folded}-(iii), element-by-element, we obtain
{\small{\beqw
\forall k \in \Ac_T, |\partial^2_{\theta_{k,\textcolor{black}{T}}\theta_{k,\textcolor{black}{T}}}\pp(\frac{\lambda_T}{T},|\widetilde{\theta}_{k,\textcolor{black}{T}}|)-\partial^2_{\theta_{k,\textcolor{black}{T}}\theta_{k,\textcolor{black}{T}}}\pp(\frac{\lambda_T}{T},|\theta_{0,k,\textcolor{black}{T}}|)||\widehat{\theta}_{k,\textcolor{black}{T}}-\theta_{0,k,\textcolor{black}{T}}| \leq K |\widetilde{\theta}_{k,\textcolor{black}{T}}-\theta_{0,k,\textcolor{black}{T}}| |\widehat{\theta}_{k,\textcolor{black}{T}}-\theta_{0,k,\textcolor{black}{T}}| = O_p(\frac{d_T}{T}).
\eeqw}}
\textcolor{black}{We deduce $|\sqrt{T} Q_T \Vb^{-1/2}_{\Ac_{\textcolor{black}{T}}\Ac_{\textcolor{black}{T}}}\Hb^{-1}_{\Ac_{\textcolor{black}{T}}\Ac_{\textcolor{black}{T}}}  \big\{\mathbf{S}(\widetilde{\theta}_{\textcolor{black}{T}})_{\Ac_{\textcolor{black}{T}}\Ac_{\textcolor{black}{T}}}-\mathbf{S}(\theta_{0,\textcolor{black}{T}})_{\Ac_{\textcolor{black}{T}}\Ac_{\textcolor{black}{T}}}\big\}\big(\widehat{\theta}_{\textcolor{black}{T}}-\theta_{0,\textcolor{black}{T}}\big)_{\Ac_{\textcolor{black}{T}}}|=o_p(1)$.}

\medskip

We now prove that $X_{T,t} = \sqrt{T}Q_T \Vb^{\textcolor{black}{-}1/2}_{\Ac_{\textcolor{black}{T}}\Ac_{\textcolor{black}{T}}}\Hb^{-1}_{\Ac_{\textcolor{black}{T}}\Ac_{\textcolor{black}{T}}}\nabla_{\theta_{\Ac_{\textcolor{black}{T}}}}\Gb_{T,t}(\underline{y};0_{\Ac^c_{\textcolor{black}{T}}},\theta_{0,\Ac_{\textcolor{black}{T}}}), t = 1,\cdots,T$ is asymptotically normal by checking the Lindeberg condition for applying Theorem \ref{lindeberg_shiryaev} of Shiryaev.  Here, $\nabla_{\theta_{\textcolor{black}{T}}}\Gb_{T,t}(\underline{y};0_{\Ac^c_{\textcolor{black}{T}}},\theta_{0,\Ac_{\textcolor{black}{T}}})$ is the $t$-th point of the score of the empirical criterion. Now let $\beta >0$, we need to prove that for any $\eps>0$
\beqw
\Pb(\overset{T}{\underset{t=1}{\sum}} \Eb[\|X_{T,t}\|^2_2\mathbf{1}_{\|X_{T,t}\|_2>\beta}|\Fc^T_{t-1}] > \eps) \underset{T \rightarrow \infty}{\longrightarrow} 0,
\eeqw
where $\Fc^T_{t} = \sigma(X_{T,s},s\leq t)$. By the Markov inequality, we have
\begin{eqnarray*}
\lefteqn{\Pb(\overset{T}{\underset{t=1}{\sum}} \Eb[\|X_{T,t}\|^2_2\mathbf{1}_{\|X_{T,t}\|_2>\beta}|\Fc^T_{t-1}] > \eps)  \leq \frac{1}{\eps} \overset{T}{\underset{t=1}{\sum}} \Eb[\Eb[\|X_{T,t}\|^2_2\mathbf{1}_{\|X_{T,t}\|_2>\beta}|\Fc^T_{t-1}]]}  \\
& \leq & \frac{1}{\eps} \overset{T}{\underset{t=1}{\sum}} \Eb[\Eb[\|X_{T,t}\|^4_2|\Fc^T_{t-1}] \Pb(\|X_{T,t}\|_2>\beta|\Fc^T_{t-1})^{1/2}] \\
& \leq &  \frac{1}{\eps} \overset{T}{\underset{t=1}{\sum}} \Eb[ \big\{\frac{C_{st}}{T^2}\Eb[\|\nabla_{\theta_{\Ac_{\textcolor{black}{T}}}}\ell(y_s,s\leq t; \theta_{0,\textcolor{black}{T}}) \nabla_{\theta^\top_{\Ac_{\textcolor{black}{T}}}}\ell(y_s,s\leq t; \theta_{0,\textcolor{black}{T}})\|^2_2|\Fc^T_{t-1}]\big\}^{1/2} \\
& & \times \frac{1}{\beta}\Eb[\|\sqrt{T}Q_T \Vb^{1/2}_{\Ac_{\textcolor{black}{T}}\Ac_{\textcolor{black}{T}}}\Hb^{-1}_{\Ac_{\textcolor{black}{T}}\Ac_{\textcolor{black}{T}}}\nabla_{\theta_{\Ac_{\textcolor{black}{T}}}}\Gb_{T,t}(\underline{y};0_{\Ac^c_{\textcolor{black}{T}}},\theta_{0,\Ac_{\textcolor{black}{T}}})\|^2_2|\Fc^T_{t-1}]^{1/2}],
\end{eqnarray*}
with $C_{st}>0$. Then, let $\Kb_T = \Vb^{-1/2}_{\Ac_{\textcolor{black}{T}}\Ac_{\textcolor{black}{T}}}\Hb^{-1}_{\Ac_{\textcolor{black}{T}}\Ac_{\textcolor{black}{T}}}$. We have
\begin{eqnarray*}
\lefteqn{\Eb[\|\sqrt{T}\Kb_T\textcolor{black}{\nabla_{\theta_{\Ac_T}}}\Gb_{T,t}(\underline{y};0_{\Ac^c_{\textcolor{black}{T}}},\theta_{0,\Ac_{\textcolor{black}{T}}})\|^2_2|\Fc^T_{t-1}] = \frac{1}{T}\Eb[\nabla_{\theta_{\Ac^\top_T}}\ell(y_s,s\leq t; \theta_{0,\textcolor{black}{T}})\Kb^\top_T \Kb_T \nabla_{\theta_{\Ac_{\textcolor{black}{T}}}}\ell(y_s,s\leq t; \theta_{0,\textcolor{black}{T}})|\Fc^T_{t-1}]} \\
& = & \frac{1}{T}\Eb[\text{tr}\big(\nabla_{\theta_{\Ac^\top_T}}\ell(y_s,s\leq t; \theta_{0,\textcolor{black}{T}})\Kb^\top_T \Kb_T \nabla_{\theta_{\Ac_{\textcolor{black}{T}}}}\ell(y_s,s\leq t; \theta_{0,\textcolor{black}{T}})\big)|\Fc^T_{t-1}] \\
& = & \frac{1}{T}\text{tr}\big(\Eb[\nabla_{\theta_{\Ac^\top_T}}\ell(y_s,s\leq t; \theta_{0,\textcolor{black}{T}}) \nabla_{\theta_{\Ac_{\textcolor{black}{T}}}}\ell(y_s,s\leq t; \theta_{0,\textcolor{black}{T}})|\Fc^T_{t-1}]\Kb^\top_T \Kb_T\big) \leq \frac{1}{T}\lambda_{\max}(\Hb^T_{t-1}) \tilde{C}_{st},
\end{eqnarray*}
where $\tilde{C}_{st}>0$ is a finite constant. Moreover, we have
\begin{eqnarray*}
\lefteqn{\Eb[\|\nabla_{\theta_{\Ac_{\textcolor{black}{T}}}}\ell(y_s,s\leq t; \theta_{0,\textcolor{black}{T}}) \nabla_{\theta^\top_{\Ac_{\textcolor{black}{T}}}}\ell(y_s,s\leq t; \theta_{0,\textcolor{black}{T}})\|^2_2|\Fc^T_{t-1}]}\\
& = & \Eb[\overset{d_T}{\underset{k,l=1}{\sum}} \big\{\partial_{\theta_{k,\textcolor{black}{T}}}\ell(y_s,s\leq t;\theta_{0,\textcolor{black}{T}})\partial_{\theta_{l,\textcolor{black}{T}}}\ell(y_s,s\leq t;\theta_{0,\textcolor{black}{T}})\big\}\textcolor{black}{^2}|\Fc^T_{t-1}] \\
& \leq & d^2_T \underset{k,l=1,\cdots,d_T}{\sup} \Eb[\big\{\partial_{\theta_{k,\textcolor{black}{T}}}\ell(y_s,s\leq t;\theta_{0,\textcolor{black}{T}})\partial_{\theta_{l,\textcolor{black}{T}}}\ell(y_s,s\leq t;\theta_{0,\textcolor{black}{T}})\big\}\textcolor{black}{^2}|\Fc^T_{t-1}].
\end{eqnarray*}
Now by Assumption \ref{assumption_array}, we have
\begin{eqnarray*}
\lefteqn{\Pb(\overset{T}{\underset{t=1}{\sum}} \Eb[\|X_{T,t}\|^2_2\mathbf{1}_{\|X_{T,t}\|_2>\beta}|\Fc^T_{t-1}] > \eps) }\\
& \leq & \frac{C^{1/2}_{st}\tilde{C}^{1/2}_{st}d_T}{T^{3/2}} \overset{T}{\underset{t=1}{\sum}} \Eb[\underset{k,l=1,\cdots,d_T}{\sup} \Eb[\big\{\partial_{\theta_{k,\textcolor{black}{T}}}\ell(y_s,s\leq t;\theta_{0,\textcolor{black}{T}})\partial_{\theta_{l,\textcolor{black}{T}}}\ell(y_s,s\leq t;\theta_{0,\textcolor{black}{T}})\big\}^2|\Fc^T_{t-1}] \lambda_{\max}(\Hb^T_{t-1})]\\
& \leq & \frac{C^{1/2}_{st}\tilde{C}^{1/2}_{st} \overline{B}Td_T}{T^{3/2}}\cdot
\end{eqnarray*}
Hence, $\overset{T}{\underset{t=1}{\sum}} \Eb[\|X_{T,t}\|^2_2\mathbf{1}_{\|X_{T,t}\|_2>\beta}|\Fc^T_{t-1}]=o_p(1)$. Thus, $X_{T,t}$ satisfies the Lindeberg condition, and by Theorem \ref{lindeberg_shiryaev}, $\sqrt{T}Q_T \Vb^{-1/2}_{\Ac_{\textcolor{black}{T}}\Ac_{\textcolor{black}{T}}}\Hb^{-1}_{\Ac_{\textcolor{black}{T}}\Ac_{\textcolor{black}{T}}}\nabla_{\theta_{\Ac_{\textcolor{black}{T}}}}\Gb_{T}(\underline{y};0_{\Ac^c_{\textcolor{black}{T}}},\theta_{0,\Ac_{\textcolor{black}{T}}})$ is asymptotically normally distributed. Finally, for $T$ large enough, $\mathbf{b}(\theta_{0,\textcolor{black}{T}})_{\Ac_{\textcolor{black}{T}}} = 0 \in \Rb^{k_T}$ and $\mathbf{S}(\theta_{0,\textcolor{black}{T}})_{\Ac_{\textcolor{black}{T}}\Ac_{\textcolor{black}{T}}} = 0 \in \Mc_{k_T \times k_T}(\Rb)$. 
\end{proof}

{To establish the large sample properties of the second step estimator $\widehat{\gamma}$, we mainly rely on moment assumptions, detailed as follows.
\textcolor{black}{\begin{assumption}\label{assumption_first_2ndstep}
Let $\partial_{\gamma_{k}}f(y_s,s\leq t;\theta_{0,T},\gamma_0) = -\big(K_{t-1}(\theta_{0,T})\otimes \{y^{\ell}_t-\Gamma_0 K_{t-1}(\theta_{0,T})\}\big)_k$, for any $k=1,\cdots,d_2$. Then there is some function $\kappa(.)$ such that for any $T$: 
\beqw
\underset{k=1,\cdots,d_2}{\sup}\Eb[\partial_{\gamma_{k}}f(y_s,s\leq t;\theta_{0,T},\gamma_0) \partial_{\gamma_{k}}f(y_s,s\leq t';\theta_{0,T},\gamma_0)]\leq \kappa(|t-t'|), \;\; \text{and} \;\; \underset{T>0}{\sup}\; \frac{1}{T}\overset{T}{\underset{t,t'=1}{\sum}} \kappa(|t-t'|)<\infty.
\eeqw
\end{assumption}}

\begin{assumption}\label{assumption_second_cross_2ndstep}
Let $\upsilon_{kl,t}= \partial^2_{\gamma_k\theta_{l,\textcolor{black}{T}}}f(y_s,s\leq t;\theta_{0,\textcolor{black}{T}},\gamma_0)$, for any $k=1,\cdots,d_2$ and $l=1,\cdots,d_T$, where $\partial^2_{\gamma_k\theta_{l,\textcolor{black}{T}}}f(y_s,s\leq t;\theta_{0,\textcolor{black}{T}},\gamma_0)$ is provided in Appendix \ref{appendix_derivative}. There is some function $\xi(.)$ such that for any $T$: $|\Eb[\upsilon_{kl,t}\upsilon_{k'l',t'}] | \leq \xi(|t-t'|), \; \text{and} \;\underset{T>0}{\sup} \; \frac{1}{T}\overset{T}{\underset{t,t'=1}{\sum}}\xi(|t-t'|)<\infty$.
\end{assumption}
\textcolor{black}{\begin{assumption}\label{assumption_second_derivative_2nd_step}
Let $\partial^2_{\gamma_{k}\gamma_{l}}f(y_s,s\leq t;\theta_{0,T},\gamma_0) = \big(I_p\otimes K_{t-1}(\theta_{0,T})K^\top_{t-1}(\theta_{0,T})\big)_{kl}$ for any $k,l=1,\cdots,d_T$. Let $\mu_{kl,t} = \partial^2_{\gamma_{k} \gamma_{l}} f(y_s,s\leq t;\theta_{0,T},\gamma_0) - \Eb[\partial^2_{\gamma_{k} \gamma_{l}} f(y_s,s\leq t;\theta_{0,T},\gamma_0)]$. There exists some function $\varphi(.)$ such that for any $T$: $|\Eb[\mu_{kl,t} \mu_{k'l',t'}]| \leq \varphi(|t-t'|), \;\; \text{and} \;\; \underset{T>0}{\sup} \; \frac{1}{T}\overset{T}{\underset{t,t'=1}{\sum}} \varphi(|t-t'|) <\infty$.
\end{assumption}}
For the next assumption, the third order partial derivatives $\partial^3_{\gamma_k\theta_{l,\textcolor{black}{T}}\theta_{j,\textcolor{black}{T}}}f(.), \partial^3_{\gamma_k\gamma_l\theta_{j,\textcolor{black}{T}}}f(.)$ are provided in Appendix \ref{appendix_derivative}.
\begin{assumption}\label{assumption_third_cross_1_2ndstep}
For almost all observations $(y_t)$, the derivatives $\partial^3_{\gamma_k \theta_{l,\textcolor{black}{T}}\theta_{j,\textcolor{black}{T}}}\ell(y_s,s\leq t;\theta_{\textcolor{black}{T}},\gamma)$ and $\partial^3_{\gamma_k \gamma_l\theta_{j,\textcolor{black}{T}}}\ell(y_s,s\leq t;\theta_{\textcolor{black}{T}},\gamma)$ exist. Let 
\beqw
\begin{array}{llll}
\kappa_t(L) & = & \underset{1\leq k \leq d_2, 1\leq l,j \leq d_T}{\sup} \big\{\underset{\theta_{\textcolor{black}{T}}:\|\theta_{\textcolor{black}{T}}-\theta_{0,\textcolor{black}{T}}\|_2\leq L \sqrt{d_T/T}}{\sup} \partial^3_{\gamma_k\theta_{l,\textcolor{black}{T}}\theta_{j,\textcolor{black}{T}}}f(y_s,s\leq t;\theta_{\textcolor{black}{T}},\gamma_0) \big\},\\
\rho_t(L) & = & \underset{1\leq k,l \leq d_2, 1\leq j \leq d_T}{\sup} \big\{\underset{\theta_{\textcolor{black}{T}}:\|\theta_{\textcolor{black}{T}}-\theta_{0,\textcolor{black}{T}}\|_2\leq L \sqrt{d_T/T}}{\sup} \partial^3_{\gamma_k\gamma_l\theta_{j,\textcolor{black}{T}}}f(y_s,s\leq t;\theta_{\textcolor{black}{T}},\gamma_0) \big\},
\end{array}
\eeqw
where $0 < L < \infty$. Then 
\beqw
\eta(L) = \frac{1}{T^2}\overset{T}{\underset{t,t'=1}{\sum}}\Eb[\kappa_t(L) \kappa_{t'}(L) ]<\infty, \;\; \zeta(L) = \frac{1}{T^2}\overset{T}{\underset{t,t'=1}{\sum}}\Eb[\rho_t(L) \rho_{t'}(L) ]<\infty.
\eeqw
\end{assumption}}
\begin{proof}[Proof of Theorem \ref{consistency_second}.]
Under the Theorem's assumptions, the first step estimator satisfies the rate $\|\widehat{\theta}_{\textcolor{black}{T}}-\theta_{0,\textcolor{black}{T}}\|_2=O_p(\sqrt{d_T/T})$. Now let us denote $\nu_T = T^{-1/2}$. We would like to prove that for any $\eps>0$, there exists $C_{\eps}>0$ such that 
\beq \label{obj_second}
\Pb(\nu^{-1}_T\|\widehat{\gamma}-\gamma_0\|>C_{\eps}) \leq \Pb(\exists \uu \in \Rb^{d_2}, \|\uu\|_2 \geq C_{\eps}: \Lb_T(\underline{y};\widehat{\theta}_{\textcolor{black}{T}},\gamma_0+\uu \nu_T)\leq \Lb_T(\underline{y};\widehat{\theta}_{\textcolor{black}{T}},\gamma_0) ),
\eeq
where $d_2 = p(1+2p)$ the second step parameter dimension size. Using the convexity of the objective function, we have
\begin{eqnarray}
\lefteqn{\big\{\exists \uu^* \in \Rb^{d_2}, \|\uu^*\| \geq C_{\eps}: \Lb_T(\underline{y};\widehat{\theta}_{\textcolor{black}{T}},\gamma_0+\uu^* \nu_T)\leq \Lb_T(\underline{y};\widehat{\theta}_{\textcolor{black}{T}},\gamma_0) \big\}}\nonumber\\
&\subset & \big\{\exists \overline{\uu} \in \Rb^{d_2}, \|\overline{\uu}\| \geq C_{\eps}: \Lb_T(\underline{y};\widehat{\theta}_{\textcolor{black}{T}},\gamma_0+\overline{\uu} \nu_T)\leq \Lb_T(\underline{y};\widehat{\theta}_{\textcolor{black}{T}},\gamma_0)\big\},\label{convex}
\end{eqnarray}
a relationship that allows us to work with a fixed $\|\uu\|_2$. Let $\gamma^*_1= \gamma_0+\nu_T \uu^*$ such that $\Lb_T(\underline{y};\widehat{\theta}_{\textcolor{black}{T}},\gamma_1)\leq \Lb_T(\underline{y};\widehat{\theta}_{\textcolor{black}{T}},\gamma_0)$. By convexity of $\Lb_T(\underline{y};\widehat{\theta}_{\textcolor{black}{T}},.)$, for a fixed $\widehat{\theta}_{\textcolor{black}{T}}$, we have for $\alpha \in (0,1)$ and $\gamma = \alpha \gamma_1 + (1-\alpha) \gamma_0$ that
\beqw
\Lb_T(\underline{y};\widehat{\theta}_{\textcolor{black}{T}},\gamma) \leq \alpha \Lb_T(\underline{y};\widehat{\theta}_{\textcolor{black}{T}},\gamma_1) + (1-\alpha) \Lb_T(\underline{y};\widehat{\theta}_{\textcolor{black}{T}},\gamma_0) \leq \Lb_T(\underline{y};\widehat{\theta}_{\textcolor{black}{T}},\gamma_0).
\eeqw
Now we choose $\alpha$ such that $\|\overline{\uu}\|_2=C_{\eps}$ with $\overline{\uu}=\alpha \gamma_1+(1-\alpha)\gamma_0$. Hence (\ref{convex}) holds and
\begin{eqnarray*}
\lefteqn{\Pb(\|\widehat\gamma-\gamma_0\|>C_{\eps}\nu_T)}\\
& \leq & \Pb(\exists \uu \in \Rb^{d_2}, \|\uu\|_2 \geq C_{\eps}: \Lb_T(\underline{y};\widehat{\theta}_{\textcolor{black}{T}},\gamma_0+\uu \nu_T)\leq \Lb_T(\underline{y};\widehat{\theta}_{\textcolor{black}{T}},\gamma_0) ) \\
& \leq & \Pb(\exists \overline{\uu} \in \Rb^{d_2}, \|\overline{\uu}\|_2 = C_{\eps}: \Lb_T(\underline{y};\widehat{\theta}_{\textcolor{black}{T}},\gamma_0+\overline{\uu} \nu_T)\leq \Lb_T(\underline{y};\widehat{\theta}_{\textcolor{black}{T}},\gamma_0) ).
\end{eqnarray*}
Thus we choose $\uu$ such that $\|\uu\|_2=C_{\eps}$. Now by a Taylor expansion, we want to prove
\beq\label{objective_loss}
\Pb(\exists \uu: \|\uu\|_2=C_{\eps}:\nabla_{\gamma}\Lb_T(\underline{y};\widehat{\theta}_{\textcolor{black}{T}},\gamma_0) \uu + \frac{\nu_T}{2}\uu^\top\nabla^2_{\gamma\gamma^\top}\Lb_T(\underline{y};\widehat{\theta}_{\textcolor{black}{T}},\gamma_0)\uu \leq 0) <\eps,
\eeq
where the third order derivative vanishes. Let us consider the first order term. By a Taylor expansion, we have for any $\uu$ such that $\|\uu\|_2=C_{\eps}$,
\begin{eqnarray}
\lefteqn{\uu^\top\nabla_{\gamma}\Lb_T(\underline{y};\widehat{\theta}_{\textcolor{black}{T}},\gamma_0)}\nonumber\\
& = & \overset{d_2}{\underset{k=1}{\sum}} \uu_k \nabla_{\gamma_k}\Lb_T(\underline{y};\theta_{0,\textcolor{black}{T}},\gamma_0)
+ \textcolor{black}{\overset{d_2}{\underset{k=1}{\sum}} \overset{d_T}{\underset{l=1}{\sum}} \uu_k \nabla^2_{\gamma_k\theta_{l,\textcolor{black}{T}}}\Lb_T(\underline{y};\theta_{0,\textcolor{black}{T}},\gamma_0)(\widehat{\theta}_{l,\textcolor{black}{T}}-\theta_{0,l,\textcolor{black}{T}})} \nonumber\\
 &  & + \overset{d_2}{\underset{k=1}{\sum}} \overset{d_T}{\underset{l,j=1}{\sum}} \uu_k \nabla^3_{\gamma_k\theta_{l,\textcolor{black}{T}}\theta_{j,\textcolor{black}{T}}}\Lb_T(\underline{y};\overline{\theta}_{\textcolor{black}{T}},\gamma_0)(\widehat{\theta}_{l,\textcolor{black}{T}}-\theta_{0,l,\textcolor{black}{T}}) (\widehat{\theta}_{j,\textcolor{black}{T}}-\theta_{0,j,\textcolor{black}{T}}), \label{taylor_first_order}
\end{eqnarray}
where $\overline{\theta}_{\textcolor{black}{T}}$ is such that $\|\overline{\theta}_{\textcolor{black}{T}}-\theta_{0,\textcolor{black}{T}}\|_2 \leq \|\widehat{\theta}_{\textcolor{black}{T}}-\theta_{0,\textcolor{black}{T}}\|_2$. Thus, for any $a>0$, we have
{\footnotesize{\begin{eqnarray*}
\lefteqn{\Pb(\underset{\uu:\|\uu\|_2=C_{\eps}}{\sup}|\uu^\top\nabla_{\gamma}\Lb_T(\underline{y};\widehat{\theta}_{\textcolor{black}{T}},\gamma_0)|>a)}\\
&\leq & \Pb(\underset{\uu:\|\uu\|_2=C_{\eps}}{\sup}|\uu^\top\nabla_{\gamma}\Lb_T(\underline{y};\theta_{0,\textcolor{black}{T}},\gamma_0)|>a/3) 
+ \Pb(\underset{\uu:\|\uu\|_2=C_{\eps}}{\sup}|\overset{d_2}{\underset{k=1}{\sum}} \overset{d_T}{\underset{l=1}{\sum}} \uu_k \nabla^2_{\gamma_k\theta_{l,\textcolor{black}{T}}}\Lb_T(\underline{y};\theta_{0,\textcolor{black}{T}},\gamma_0)(\widehat{\theta}_{l,\textcolor{black}{T}}-\theta_{0,l,\textcolor{black}{T}})|>a/3) \\
& & + \Pb(\underset{\uu:\|\uu\|_2=C_{\eps}}{\sup}|\overset{d_2}{\underset{k=1}{\sum}} \overset{d_T}{\underset{l,j=1}{\sum}} \uu_k \nabla^3_{\gamma_k\theta_{l,\textcolor{black}{T}}\theta_{j,\textcolor{black}{T}}}\Lb_T(\underline{y};\overline{\theta}_{\textcolor{black}{T}},\gamma_0)(\widehat{\theta}_{l,\textcolor{black}{T}}-\theta_{0,l,\textcolor{black}{T}}) (\widehat{\theta}_{j,\textcolor{black}{T}}-\theta_{0,j,\textcolor{black}{T}})|>a/3).
\end{eqnarray*}}}
\textcolor{black}{By the Markov inequality, for any $a>0$:
\begin{eqnarray*}
\Pb(\underset{\uu:\|\uu\|_2=C_{\eps}}{\sup}|\uu^\top\nabla_{\gamma}\Lb_T(\underline{y};\theta_{0,\textcolor{black}{T}},\gamma_0)|>a/3) \leq\big(\frac{C_{\eps}}{a}\big)^2\Eb[\|\nabla_{\gamma}\Lb_T(\underline{y};\theta_{0,\textcolor{black}{T}},\gamma_0)\|^2_2] \leq C_{st}\big(\frac{C_{\eps}}{a}\big)^2\frac{1}{T^2}\overset{T}{\underset{t,t'=1}{\sum}}\kappa(|t-t'|).
\end{eqnarray*}
Then, by Assumption \ref{assumption_first_2ndstep}, we deduce
\beqw
\Pb(\underset{\uu:\|\uu\|_2=C_{\eps}}{\sup}|\uu^\top\nabla_{\gamma}\Lb_T(\underline{y};\theta_{0,\textcolor{black}{T}},\gamma_0)|>a) \leq \frac{L_1C^2_{\eps}}{a^2T}.
\eeqw}
Now we have by the Cauchy-Schwarz inequality and for $L>0$ large enough
\textcolor{black}{\begin{eqnarray*}
\lefteqn{\Pb(\underset{\uu:\|\uu\|_2=C_{\eps}}{\sup}|\overset{d_2}{\underset{k=1}{\sum}} \overset{d_T}{\underset{l=1}{\sum}} \uu_k \nabla^2_{\gamma_k\theta_{l,\textcolor{black}{T}}}\Lb_T(\underline{y};\theta_{0,\textcolor{black}{T}},\gamma_0)(\widehat{\theta}_{l,\textcolor{black}{T}}-\theta_{0,l,\textcolor{black}{T}})|>a/3) }\\
& \leq & \Pb(\underset{\uu:\|\uu\|_2=C_{\eps}}{\sup}\|\widehat\theta_{\textcolor{black}{T}}-\theta_{0,\textcolor{black}{T}}\|_2\big(\overset{d_T}{\underset{l=1}{\sum}} \Big[\overset{d_2}{\underset{k=1}{\sum}} \uu_k \nabla^2_{\gamma_k\theta_{l,\textcolor{black}{T}}}\Lb_T(\underline{y};\theta_{0,\textcolor{black}{T}},\gamma_0)\Big]^2\big)^{1/2}|>a/3) \\
& \leq & \Pb(\underset{\uu:\|\uu\|_2=C_{\eps}}{\sup} L \sqrt{\frac{d_T}{T}}\big(\overset{d_T}{\underset{l=1}{\sum}} \Big[\overset{d_2}{\underset{k=1}{\sum}} \uu_k \nabla^2_{\gamma_k\theta_{l,\textcolor{black}{T}}}\Lb_T(\underline{y};\theta_{0,\textcolor{black}{T}},\gamma_0)\Big]^2\big)^{1/2}|>a/3) \\
&& + \Pb(\|\widehat\theta_{\textcolor{black}{T}}-\theta_{0,\textcolor{black}{T}}\|_2>L\sqrt{\frac{d_T}{T}}).
\end{eqnarray*}}

By the Markov inequality, we obtain
\textcolor{black}{\begin{eqnarray*}
\lefteqn{\Pb(\underset{\uu:\|\uu\|_2=C_{\eps}}{\sup} L \sqrt{\frac{d_T}{T}}\big(\overset{d_T}{\underset{l=1}{\sum}} \Big[\overset{d_2}{\underset{k=1}{\sum}} \uu_k \nabla^2_{\gamma_k\theta_{l,\textcolor{black}{T}}}\Lb_T(\underline{y};\theta_{0,\textcolor{black}{T}},\gamma_0)\Big]^2\big)^{1/2}|>a/3)} \\
& \leq & \frac{9L^2d_T}{Ta^2} \Eb[\overset{d_T}{\underset{l=1}{\sum}} \Big[\overset{d_2}{\underset{k=1}{\sum}} \uu_k \nabla^2_{\gamma_k\theta_{l,\textcolor{black}{T}}}\Lb_T(\underline{y};\theta_{0,\textcolor{black}{T}},\gamma_0)\Big]^2] \\
& \leq & \frac{9L^2d_T}{Ta^2} \frac{1}{T^2}\overset{T}{\underset{t,t'=1}{\sum}}\overset{d_T}{\underset{l=1}{\sum}} \overset{d_2}{\underset{k,k'=1}{\sum}} \uu_k\uu_{k'} \Eb[\upsilon_{kl,t}\upsilon_{k'l,t'}] \\
& \leq & \frac{9 L^2d_T}{Ta^2} \frac{d_T d_2 C^2_{\eps} K_1}{T},
\end{eqnarray*}}
using Assumption \ref{assumption_second_cross_2ndstep} for $K_1>0$. 
For the third order term of (\ref{taylor_first_order}), we have
\begin{eqnarray*}
\lefteqn{\Pb(\underset{\uu:\|\uu\|_2=C_{\eps}}{\sup}|\overset{d_2}{\underset{k=1}{\sum}} \overset{d_T}{\underset{l,j=1}{\sum}} \uu_k \nabla^3_{\gamma_k\theta_{l,\textcolor{black}{T}}\theta_{j,\textcolor{black}{T}}}\Lb_T(\underline{y};\overline{\theta}_{\textcolor{black}{T}},\gamma_0)(\widehat{\theta}_{l,\textcolor{black}{T}}-\theta_{0,l,\textcolor{black}{T}}) (\widehat{\theta}_{j,\textcolor{black}{T}}-\theta_{0,j,\textcolor{black}{T}})|>a/\textcolor{black}{3})} \\
& \leq & \Pb(\underset{\uu:\|\uu\|_2=C_{\eps}}{\sup}\|\widehat{\theta}_{\textcolor{black}{T}}-\theta_{0,\textcolor{black}{T}}\|^2_2 \big(\overset{d_T}{\underset{l,j=1}{\sum}}\Big[\overset{d_2}{\underset{k=1}{\sum}}  \uu_k \nabla^3_{\gamma_k\theta_{l,\textcolor{black}{T}}\theta_{j,\textcolor{black}{T}}}\Lb_T(\underline{y};\overline{\theta}_{\textcolor{black}{T}},\gamma_0)\Big]^2\big)^{1/2}>a/\textcolor{black}{3}) \\
& \leq & \Pb(\underset{\uu:\|\uu\|_2=C_{\eps}}{\sup}L^2 \frac{d_T}{T} \big(\overset{d_T}{\underset{l,j=1}{\sum}}\Big[\overset{d_2}{\underset{k=1}{\sum}}  \uu_k \nabla^3_{\gamma_k\theta_{l,\textcolor{black}{T}}\theta_{j,\textcolor{black}{T}}}\Lb_T(\underline{y};\overline{\theta}_{\textcolor{black}{T}},\gamma_0)\Big]^2\big)^{1/2}>a/\textcolor{black}{3}) + \Pb(\|\widehat{\theta}_{\textcolor{black}{T}}-\theta_{0,\textcolor{black}{T}}\|_2>L\sqrt{\frac{d_T}{T}}).
\end{eqnarray*}
Now by the Markov inequality, we have
\begin{eqnarray*}
\lefteqn{\Pb(\underset{\uu:\|\uu\|_2=C_{\eps}}{\sup}L^2 \frac{d_T}{T} \big(\overset{d_T}{\underset{l,j=1}{\sum}}\Big[\overset{d_2}{\underset{k=1}{\sum}}  \uu_k \nabla^3_{\gamma_k\theta_{l,\textcolor{black}{T}}\theta_{j,\textcolor{black}{T}}}\Lb_T(\underline{y};\overline{\theta}_{\textcolor{black}{T}},\gamma_0)\Big]^2\big)^{1/2}>a/\textcolor{black}{3})}\\
& \leq & \frac{\textcolor{black}{9}L^4d^2_T}{T^2} \Eb[\overset{d_T}{\underset{l,j=1}{\sum}}\Big[\overset{d_2}{\underset{k=1}{\sum}}  \uu_k \nabla^3_{\gamma_k\theta_{l,\textcolor{black}{T}}\theta_{j,\textcolor{black}{T}}}\Lb_T(\underline{y};\overline{\theta}_{\textcolor{black}{T}},\gamma_0)\Big]^2] \\
& \leq & \frac{\textcolor{black}{9}L^4d^2_T}{T^2a^2} \frac{1}{T^2}\overset{T}{\underset{t,t'=1}{\sum}}\overset{d_T}{\underset{l,j=1}{\sum}} \overset{d_2}{\underset{k,k'=1}{\sum}} \uu_k\uu_{k'} \Eb[\partial^3_{\gamma_k\theta_{l,\textcolor{black}{T}}\theta_{j,\textcolor{black}{T}}}f(y_s,s\leq t;\theta_{0,\textcolor{black}{T}},\gamma_0)\partial^3_{\gamma_{k'}\theta_{l,\textcolor{black}{T}}\theta_{j,\textcolor{black}{T}}}f(y_s,s\leq t';\theta_{0,\textcolor{black}{T}},\gamma_0)] \\
& \leq & \frac{\textcolor{black}{9}L^2d^2_T}{T^2a^2} d^2_T d_2 C^2_{\eps} K_3,
\end{eqnarray*}
using Assumption \ref{assumption_third_cross_1_2ndstep} with $K_3>0$. We thus have managed (\ref{taylor_first_order}). Now, let us focus on the second order derivative $\nabla^2_{\gamma\gamma^\top}\Lb_T(\underline{y};\widehat{\theta}_{\textcolor{black}{T}},\gamma_0)$. By a Taylor expansion, we have
\beq\label{taylor_second_order}
\uu^\top\nabla^2_{\gamma\gamma^\top}\Lb_T(\underline{y};\widehat{\theta}_{\textcolor{black}{T}},\gamma_0)\uu = \uu^\top\nabla^2_{\gamma\gamma^\top}\Lb_T(\underline{y};\theta_{0,\textcolor{black}{T}},\gamma_0)\uu +  \nabla_{\theta_{\textcolor{black}{T}}}\big\{\uu^\top\nabla^2_{\gamma\gamma^\top}\Lb_T(\underline{y};\overline{\theta}_{\textcolor{black}{T}},\gamma_0)\uu\big\}(\widehat{\theta}_{\textcolor{black}{T}}-\theta_{0,\textcolor{black}{T}}),
\eeq
where $\|\overline{\theta}_{\textcolor{black}{T}}-\theta_{0,\textcolor{black}{T}}\|_2 \leq \|\widehat{\theta}_{\textcolor{black}{T}}-\theta_{0,\textcolor{black}{T}}\|_2$. First, we have
\beqw
\uu^\top\nabla^2_{\gamma\gamma^\top}\Lb_T(\underline{y};\theta_{0,\textcolor{black}{T}},\gamma_0)\uu = \Eb[\uu^\top\nabla^2_{\gamma\gamma^\top}\Lb_T(\underline{y};\theta_{0,\textcolor{black}{T}},\gamma_0)\uu] + \Rc_T(\theta_{0,\textcolor{black}{T}},\gamma_0),
\eeqw
where $\Rc_T(\theta_{0,\textcolor{black}{T}},\gamma_0) = \overset{d_2}{\underset{k,l=1}{\sum}}\uu_k\uu_l\big\{\nabla^2_{\gamma_k\gamma_l}\Lb_T(\underline{y};\theta_{0,\textcolor{black}{T}},\gamma_0) - \Eb[\nabla^2_{\gamma_k\gamma_l}\Lb_T(\underline{y};\theta_{0,\textcolor{black}{T}},\gamma_0)]\big\}$. Under \textcolor{black}{Assumption \ref{assumption_second_derivative_2nd_step}}:
\beqw
\|\nabla^2_{\gamma\gamma^\top}\Lb_T(\underline{y};\theta_{0,\textcolor{black}{T}},\gamma_0) - \Eb[\nabla^2_{\gamma\gamma^\top}f(y_s,s\leq t;\theta_{0,\textcolor{black}{T}},\gamma_0)]\|=o_p(1).
\eeqw
Hence, $\Rc_T(\theta_{0,\textcolor{black}{T}},\gamma_0)=o_p(1)$. As for the third order term, the derivatives of the form $\partial^3_{\gamma_k \gamma_l \theta_{j,\textcolor{black}{T}}}f(y_s,s\leq t;\theta_{\textcolor{black}{T}},\gamma)$ are provided in {Appendix} \ref{appendix_derivative}-(ii). For any $b>0$,
\begin{eqnarray*}
\lefteqn{\Pb(\underset{\uu:\|\uu\|_2=C_{\eps}}{\sup}|\nabla_{\theta_{\textcolor{black}{T}}}\big\{\uu^\top\nabla^2_{\gamma\gamma^\top}\Lb_T(\underline{y};\overline{\theta}_{\textcolor{black}{T}},\gamma_0)\uu\big\}(\widehat{\theta}_{\textcolor{black}{T}}-\theta_{0,\textcolor{black}{T}})|>b)}\\
& \leq & \Pb(\underset{\uu:\|\uu\|_2=C_{\eps}}{\sup}\|\widehat{\theta}_{\textcolor{black}{T}}-\theta\|_2 \big(\overset{d_T}{\underset{j=1}{\sum}} \Big[\overset{d_2}{\underset{k,l=1}{\sum}}\uu_k\uu_l\nabla^3_{\gamma_k\gamma_l\theta_{j,\textcolor{black}{T}}}\Lb_T(\underline{y};\overline{\theta}_{\textcolor{black}{T}},\gamma_0)\Big]^2\big)^{1/2}>b) \\
& \leq & \Pb(\underset{\uu:\|\uu\|_2=C_{\eps}}{\sup}L \sqrt{\frac{d_T}{T}}\big(\overset{d_T}{\underset{j=1}{\sum}} \Big[\overset{d_2}{\underset{k,l=1}{\sum}}\uu_k\uu_l\nabla^3_{\gamma_k\gamma_l\theta_{j,\textcolor{black}{T}}}\Lb_T(\underline{y};\overline{\theta}_{\textcolor{black}{T}},\gamma_0)\Big]^2\big)^{1/2}>b) + \Pb(\|\widehat{\theta}_{\textcolor{black}{T}}-\theta_{0,\textcolor{black}{T}}\|_2>L\sqrt{\frac{d_T}{T}}) \\
& \leq & \frac{L^2d_T}{Tb^2}\Eb[\overset{d_T}{\underset{j=1}{\sum}} \Big[\overset{d_2}{\underset{k,l=1}{\sum}}\uu_k\uu_l\nabla^3_{\gamma_k\gamma_l\theta_{j,\textcolor{black}{T}}}\Lb_T(\underline{y};\overline{\theta}_{\textcolor{black}{T}},\gamma_0)\Big]^2] + \Pb(\|\widehat{\theta}_{\textcolor{black}{T}}-\theta_{0,\textcolor{black}{T}}\|_2>L\sqrt{\frac{d_T}{T}})\\
& \leq & \frac{L^2d_T}{Tb^2}\frac{1}{T^2}\overset{T}{\underset{t,t'=1}{\sum}}\overset{d_T}{\underset{j=1}{\sum}}\overset{d_2}{\underset{k,l,k',l'=1}{\sum}}\uu_k\uu_l \uu_{k'}\uu_{l'} \Eb[\nabla^3_{\gamma_k\gamma_l\theta_{j,\textcolor{black}{T}}}f(y_s,s\leq t;\overline{\theta}_{\textcolor{black}{T}},\gamma_0)\nabla^3_{\gamma_{k'}\gamma_{l'}\theta_{j,\textcolor{black}{T}}}f(y_s,s\leq t';\overline{\theta}_{\textcolor{black}{T}},\gamma_0)]\\
& & + \Pb(\|\widehat{\theta}_{\textcolor{black}{T}}-\theta_{0,\textcolor{black}{T}}\|_2>L\sqrt{\frac{d_T}{T}})\\
& \leq & \frac{L^2d_T}{Tb^2} d_T d^2_2 C^4_{\eps} K_3 + \Pb(\|\widehat{\theta}_{\textcolor{black}{T}}-\theta_{0,\textcolor{black}{T}}\|_2>L\sqrt{\frac{d_T}{T}}),
\end{eqnarray*}
for $L,K_3>0$ and using Assumption \ref{assumption_third_cross_1_2ndstep}. We have thus controlled for (\ref{taylor_second_order}). Putting the pieces together, we are in a position to bound probability (\ref{objective_loss}). Denoting $\delta_T = \lambda_{\min}(\Eb[\nabla^2_{\gamma\gamma^\top}f(y_s,s\leq t;\theta_{0,\textcolor{black}{T}},\gamma_0)])C^2_{\eps}\nu_T/2$ and using $\frac{\nu_T}{2}\Eb[\uu^\top\nabla^2_{\gamma\gamma^\top}f(y_s,s\leq t;\theta_{0,\textcolor{black}{T}},\gamma_0)\uu] \geq \delta_T$, we have
{\small{\begin{eqnarray}
\lefteqn{\Pb(\exists \uu: \|\uu\|_2=C_{\eps}:\nabla_{\gamma}\Lb_T(\underline{y};\widehat{\theta}_{\textcolor{black}{T}},\gamma_0) \uu + \frac{\nu_T}{2}\uu^\top\nabla^2_{\gamma\gamma^\top}\Lb_T(\underline{y};\widehat{\theta}_{\textcolor{black}{T}},\gamma_0)\uu \leq 0)}\nonumber\\
& \hspace*{-0.5cm}\leq & \hspace*{-0.3cm}\Pb(\exists \uu: \|\uu\|_2=C_{\eps}:|\nabla_{\gamma}\Lb_T(\underline{y};\widehat{\theta}_{\textcolor{black}{T}},\gamma_0)\uu|>\delta_T/4) + \Pb(\exists \uu: \|\uu\|_2=C_{\eps}:|\frac{\nu_T}{2}\Rc_T(\theta_{0,\textcolor{black}{T}},\gamma_0)|>\delta_T/4) \nonumber\\
& \hspace*{-0.8cm}& \hspace*{-1.1cm}+ \Pb(\exists \uu: \|\uu\|_2=C_{\eps}:\underset{\overline{\theta}_{\textcolor{black}{T}}:\|\overline{\theta}_{\textcolor{black}{T}}-\theta_{0,\textcolor{black}{T}}\|_2\leq L(\frac{d_T}{T})^{1/2}}{\sup}|\frac{\nu_T}{2}\nabla_{\theta_{\textcolor{black}{T}}}\big\{\uu^\top\nabla^2_{\gamma\gamma^\top}\Lb_T(\underline{y};\overline{\theta}_{\textcolor{black}{T}},\gamma_0)\uu\big\}(\widehat{\theta}_{\textcolor{black}{T}}-\theta_{0,\textcolor{black}{T}})|>\delta_T/4). \label{upper_prob}
\end{eqnarray}}}
First, we have 
\textcolor{black}{\begin{eqnarray*}
\lefteqn{\Pb(\exists \uu: \|\uu\|_2=C_{\eps}:|\nabla_{\gamma}\Lb_T(\underline{y};\widehat{\theta}_{\textcolor{black}{T}},\gamma_0)\uu|>\delta_T/4) \leq \Pb(\underset{\uu:\|\uu\|_2=C_{\eps}}{\sup}|\uu^\top\Lb_T(\underline{y};\theta_{0,\textcolor{black}{T}},\gamma_0)|>\delta_T/12)}\\
&  &  + \Pb(\underset{\uu:\|\uu\|_2=C_{\eps}}{\sup}|\overset{d_2}{\underset{k=1}{\sum}} \overset{d_T}{\underset{l=1}{\sum}} \uu_k \nabla^2_{\gamma_k\theta_{l,\textcolor{black}{T}}}\Lb_T(\underline{y};\theta_{0,\textcolor{black}{T}},\gamma_0)(\widehat{\theta}_{l,\textcolor{black}{T}}-\theta_{0,l,\textcolor{black}{T}})|>\delta_T/12) \\
& & + \Pb(\underset{\uu:\|\uu\|_2=C_{\eps}}{\sup}|\overset{d_2}{\underset{k=1}{\sum}} \overset{d_T}{\underset{l,j=1}{\sum}} \uu_k \nabla^3_{\gamma_k\theta_{l,\textcolor{black}{T}}\theta_{j,\textcolor{black}{T}}}\Lb_T(\underline{y};\overline{\theta}_{\textcolor{black}{T}},\gamma_0)(\widehat{\theta}_{l,\textcolor{black}{T}}-\theta_{0,l,\textcolor{black}{T}}) (\widehat{\theta}_{j,\textcolor{black}{T}}-\theta_{0,j,\textcolor{black}{T}})|>\delta_T/12) \\
& \leq & \frac{C^2_{\eps}L_1}{T\delta^2_T} + \frac{d^2_Td_2C^2_{\eps}L_2}{T^2\delta^2_T} + \eps/8 +  \frac{C^2_{\eps}d^4_T d_2 L_3}{T^2 \delta^2_T} + \eps/8.
\end{eqnarray*}}
Moreover, $\Pb(\exists \uu: \|\uu\|_2=C_{\eps}:|\frac{\nu_T}{2}\Rc_T(\theta_{0,\textcolor{black}{T}},\gamma_0)|>\delta_T/4) < \eps/\textcolor{black}{8}$, and
{\footnotesize{\begin{eqnarray*}
\Pb(\exists \uu: \|\uu\|_2=C_{\eps}:\underset{\overline{\theta}:\|\overline{\theta}_{\textcolor{black}{T}}-\theta_{0,\textcolor{black}{T}}\|_2\leq L\sqrt{d_T/T}}{\sup}|\frac{\nu_T}{2}\nabla_{\theta_{\textcolor{black}{T}}}\big\{\uu^\top\nabla^2_{\gamma\gamma^\top}\Lb_T(\underline{y};\overline{\theta}_{\textcolor{black}{T}},\gamma_0)\uu\big\}(\widehat{\theta}_{\textcolor{black}{T}}-\theta_{0,\textcolor{black}{T}})|>\delta_T/4)
 \leq \frac{d^2_T d^2_2 L_4 C^4_{\eps}\nu^2_T}{T \delta^2_T} + \eps/\textcolor{black}{8}.
\end{eqnarray*}}}
As a consequence, for $L_1,L_2,L_3,L_4>0$, with $\nu_T=\frac{1}{\sqrt{T}}$, (\ref{upper_prob}) can be bounded as
\begin{eqnarray*}
\lefteqn{\Pb(\exists \uu: \|\uu\|_2=C_{\eps}:\nabla_{\gamma}\Lb_T(\underline{y};\widehat{\theta}_{\textcolor{black}{T}},\gamma_0) \uu + \frac{\nu_T}{2}\uu^\top\nabla^2_{\gamma\gamma^\top}\Lb_T(\underline{y};\widehat{\theta}_{\textcolor{black}{T}},\gamma_0)\uu \leq 0)}\\
& \leq &  \frac{C^2_{\eps}L_1}{T\delta^2_T} + \frac{d^2_Td_2C^2_{\eps}L_2}{T^2\delta^2_T} + \eps/\textcolor{black}{8} +  \frac{C^2_{\eps}d^4_T d_2 L_3}{T^3\delta^2_T} + \eps/\textcolor{black}{8} + \eps/\textcolor{black}{8} + \frac{d^2_T d^2_2 L_4 C^4_{\eps}\nu^2_T}{T \delta^2_T} + \eps/\textcolor{black}{8}\\
& \leq & \frac{C_{1}}{C^2_{\eps}} + \frac{d^2_Td_2 C_{2}}{T C^2_{\eps}} + \eps/\textcolor{black}{8} +\frac{d^4_T d_2 C_{3}}{T^2C^2_{\eps}} + \eps/\textcolor{black}{8} + \eps/\textcolor{black}{8} + \frac{d^2_T d^2_2 C_4}{T} + \eps/\textcolor{black}{8},
\end{eqnarray*}
where $C_1,C_2,C_3,C_4$ are strictly positive constants. Under the scaling behaviour of Theorem \ref{oracle_theorem}, we deduce that for $C_{\eps}$ large enough, $T$ large enough,
\beqw
\Pb(\exists \uu: \|\uu\|_2=C_{\eps}:\nabla_{\gamma}\Lb_T(\underline{y};\widehat{\theta}_{\textcolor{black}{T}},\gamma_0) \uu + \frac{\nu_T}{2}\uu^\top\nabla^2_{\gamma\gamma^\top}\Lb_T(\underline{y};\widehat{\theta}_{\textcolor{black}{T}},\gamma_0)\uu \leq 0) < \eps.
\eeqw
\end{proof}

{To derive the asymptotic distribution of the second step estimator, we assume the following conditions.
\begin{assumption}\label{assumption_varcov_2ndstep}
The $d_2$ square matrices $\textcolor{black}{\Ub}:=\Eb[\big(I_p\otimes K_{t-1}(\theta_{0,\textcolor{black}{T}})K_{t-1}(\theta_{0,\textcolor{black}{T}})^\top\big)]$ and $\textcolor{black}{\Wb}:=\Eb[\big(K_{t-1}(\theta_{0,\textcolor{black}{T}})\otimes \big\{\textcolor{black}{y_{t}^{\ell}}-\Gamma_0K_{t-1}(\theta_{0,\textcolor{black}{T}})\big\}\big)\big(K_{t-1}(\theta_{0,\textcolor{black}{T}})\otimes \big\{\textcolor{black}{y_{t}^{\ell}}-\Gamma_0K_{t-1}(\theta_{0,\textcolor{black}{T}})\big\}\big)^\top]$ exist and are positive definite. Moreover, the $d_2 \times k_T$ matrix $\textcolor{black}{\Upsilon_{\gamma \Ac_T} = \Eb[\partial^2_{\gamma_l \theta_{k,T}}f(y_s,s\leq t;\theta_{0,\textcolor{black}{T}},\gamma_0)]_{1\leq l \leq d_2,k \in \Ac_T }}$ and \textcolor{black}{the $k_T \times d_2$ matrix $\Jb_{\Ac_T\gamma} = \Eb[\nabla_{\theta_{\Ac_T}}\ell(y_s,s\leq t;\theta_{0,T})\nabla_{\gamma^\top}f(y_s,s\leq t;\theta_{0,T},\gamma_0)]$ exist.}
\end{assumption}

\textcolor{black}{\begin{assumption}\label{assumption_second_cross_2ndstep_centered}
Let $\lambda_{kl,t}= \partial^2_{\gamma_k\theta_{l,\textcolor{black}{T}}}f(y_s,s\leq t;\theta_{0,\textcolor{black}{T}},\gamma_0)-\Eb[\partial^2_{\gamma_k\theta_{l,\textcolor{black}{T}}}f(y_s,s\leq t;\theta_{0,\textcolor{black}{T}},\gamma_0)]$, for any $k=1,\cdots,d_2$ and $l=1,\cdots,d_T$. There is some function $\psi(.)$ such that for any $T$: $|\Eb[\lambda_{kl,t}\lambda_{k'l',t'}] | \leq \psi(|t-t'|), \; \text{and} \;\underset{T>0}{\sup} \; \frac{1}{T}\overset{T}{\underset{t,t'=1}{\sum}}\psi(|t-t'|)<\infty$.
\end{assumption}}


\textcolor{black}{\begin{assumption}\label{assumption_array_2nd}
Let $Z_{T,t}=\sqrt{T}\big(\big(\Upsilon_{\gamma\Ac_T}\Hb^{-1}_{\Ac_{\textcolor{black}{T}}\Ac_{\textcolor{black}{T}}}\nabla_{\theta_{\Ac_{\textcolor{black}{T}}}}\Gb_{T,t}(\underline{y};\theta_{0,\textcolor{black}{T}})\big)^\top,\big(\nabla_{\gamma}\Lb_{T,t}(\underline{y};\theta_{0,T},\gamma_0)\big)^\top\big)^\top$ with  $\nabla_{\theta_{\Ac_{\textcolor{black}{T}}}}\Gb_{T,t}(\underline{y};\theta_{0,\textcolor{black}{T}}) = - \frac{1}{T}\big(Z_{m,t-1}\otimes \{x_t-\Psi_{0,1:m} Z_{m,t-1}\}\big)_{\Ac_{\textcolor{black}{T}}} \in \Rb^{k_T}$ and $\nabla_{\gamma}\Lb_{T,t}(\underline{y};\theta_{0,T},\gamma_0) = \frac{1}{T}\nabla_{\gamma}f(y_s,s\leq t;\theta_{0,T},\gamma_0) = -\frac{1}{T}\big(K_{t-1}(\theta_{0,T})\otimes \{y^{\ell}_{t}-\Gamma_0K_{t-1}(\theta_{0,T})\}\big) \in \Rb^{d_2}$, $\Upsilon_{\gamma\Ac_T}$ the $d_2 \times \normalfont\text{card}(\Ac_{\textcolor{black}{T}})$ matrix defined in Assumption \ref{assumption_varcov_2ndstep}. Let $\Fc^T_{t} = \sigma(Z_{T,s},s\leq t)$, then $Z_{T,t}$ is a martingale difference and we have
\beqw
\Eb\Big[\underset{1 \leq k,l \leq d_T}{\sup}\Eb[\big\{\partial_{\theta_{k,\textcolor{black}{T}}}\ell(y_s,s\leq t;\theta_{0,\textcolor{black}{T}})\partial_{\theta_{l,\textcolor{black}{T}}}\ell(y_s,s\leq t;\theta_{0,\textcolor{black}{T}})\big\}^2|\Fc^T_{1,t-1}]\lambda_{\max,t-1}(\Hb^T_{t-1}) \Big] \leq \overline{B}_1<\infty,
\eeqw
\beqw
\Eb\Big[\underset{1 \leq k,l \leq d_2}{\sup}\Eb[\big\{\partial_{\gamma_{k}}f(y_s,s\leq t;\theta_{0,\textcolor{black}{T}},\gamma_0)\partial_{\gamma_{l}}f(y_s,s\leq t;\theta_{0,\textcolor{black}{T}},\gamma_0)\big\}^2|\Fc^T_{t-1}]\lambda_{\max,t-1}(\Hb^T_{t-1}) \Big] \leq \overline{B}_2<\infty,
\eeqw
\beqw
\Eb\Big[\underset{1 \leq k \leq d_T, 1 \leq l \leq d_2}{\sup}\Eb[\big\{\partial_{\theta_{k,\textcolor{black}{T}}}\ell(y_s,s\leq t;\theta_{0,\textcolor{black}{T}})\partial_{\gamma_{l}}f(y_s,s\leq t;\theta_{0,\textcolor{black}{T}},\gamma_0)\big\}^2|\Fc^T_{t-1}]\lambda_{\max,t-1}(\Hb^T_{t-1}) \Big] \leq \overline{B}_3<\infty,
\eeqw
with $\Hb^T_{t-1} = \Eb[\alpha_T\alpha^\top_T|\Fc^T_{t-1}], \; \alpha_T = \big(\big(\nabla_{\theta_{\textcolor{black}{\Ac_T}}}\ell(y_s,s\leq t;\theta_{0,\textcolor{black}{T}})\big)^\top,\big(\nabla_{\gamma}f(y_s,s\leq t;\theta_{0,\textcolor{black}{T}},\gamma_0)\big)^\top\big)^\top$,
and $\lambda_{\max,t-1}(\Hb^T_{t-1})<\infty$.
\end{assumption}}
\textcolor{black}{As in Assumption \ref{assumption_array}, the conditions stated in Assumption \ref{assumption_array_2nd} may be artificial but are necessary to verify the Lindeberg condition since both losses are triangular arrays.}
\begin{assumption}\label{assumption_third_cross_2_2ndstep}
For almost all observations, the derivatives $\partial^3_{\gamma_j \theta_{k,\textcolor{black}{T}}\theta_{l,\textcolor{black}{T}}}\ell(y_s,s\leq t;\theta_{\textcolor{black}{T}},\gamma)$, $\partial^3_{\gamma_j \theta_{k,\textcolor{black}{T}}\gamma_i}\ell(y_s,s\leq t;\theta_{\textcolor{black}{T}},\gamma)$ and $\partial^3_{\gamma_i \gamma_j\theta_{k,\textcolor{black}{T}}}\ell(y_s,s\leq t;\theta_{\textcolor{black}{T}},\gamma)$ exist. Let $\eta>0$ and define
\beqw
\begin{array}{llll}
H_t(L)&=&\underset{\gamma:\|\gamma-\gamma_0\|<\eta}{\sup}\big\{\underset{\theta: \|\theta-\theta_{0,\textcolor{black}{T}}\|_2\leq L (d_T/T)^{1/2}}{\sup}|\partial^3_{\gamma_j \theta_{k,\textcolor{black}{T}}\theta_{l,\textcolor{black}{T}}}\ell(y_s,s\leq t;\theta_{\textcolor{black}{T}},\gamma)|\big\}, 1 \leq j \leq d_2, 1 \leq k,l \leq d_T, \\
R_t(L)&=&\underset{\gamma:\|\gamma-\gamma_0\|<\eta}{\sup}\big\{\underset{\theta: \|\theta-\theta_{0,\textcolor{black}{T}}\|_2\leq L (d_T/T)^{1/2}}{\sup}|\partial^3_{\gamma_j \theta_{k,\textcolor{black}{T}}\gamma_i}\ell(y_s,s\leq t;\theta_{\textcolor{black}{T}},\gamma)|\big\}, 1 \leq j,i \leq d_2, 1 \leq k \leq d_T, \\
M_t(L)&=&\underset{\gamma:\|\gamma-\gamma_0\|<\eta}{\sup}\big\{\underset{\theta: \|\theta-\theta_{0,\textcolor{black}{T}}\|_2\leq L (d_T/T)^{1/2}}{\sup}|\partial^3_{\gamma_i\gamma_j \theta_{k,\textcolor{black}{T}}}\ell(y_s,s\leq t;\theta_{\textcolor{black}{T}},\gamma)|\big\}, 1 \leq j,i \leq d_2, 1 \leq k \leq d_T,
\end{array}
\eeqw
where $0 < L < \infty$. Then
\beqw
\frac{1}{T^2}\overset{T}{\underset{t,t'=1}{\sum}}\Eb[H_{t}(L)H_{t'}(L)]<\infty, \;\; \frac{1}{T^2}\overset{T}{\underset{t,t'=1}{\sum}}\Eb[R_{t}(L)R_{t'}(L)]<\infty, \;\; \frac{1}{T^2}\overset{T}{\underset{t,t'=1}{\sum}}\Eb[M_{t}(L)M_{t'}(L)]<\infty.
\eeqw
\end{assumption}
These moment assumptions are similar to Assumption G of Fan and Peng (2004), but they are adapted to the dependent case. The derivatives can be found in Appendix \ref{appendix_derivative}.}

\begin{proof}[Proof of Theorem \ref{distr_second}.]
Through a Taylor expansion, we obtain for the $\widehat{\gamma}$ component
\beqw
0=\nabla_{\gamma} \Lb_T(\underline{y};\widehat{\theta}_{\Ac_T},\widehat{\gamma}) = \nabla_{\gamma} \Lb_T(\underline{y};\theta_{0,\Ac_T},\gamma_0) + \nabla^2_{\gamma \theta^\top_{\Ac_T}} \Lb_T(\underline{y};\overline{\theta}_{\Ac_T},\overline{\gamma}) (\widehat{\theta}_{\textcolor{black}{T}}-\theta_{0,\textcolor{black}{T}})_{\Ac_{\textcolor{black}{T}}} + \nabla^2_{\gamma\gamma^\top} \Lb_T(\underline{y};\overline{\theta}_{\Ac_T},\overline{\gamma}) (\widehat{\gamma}-\gamma_0),
\eeqw
where $\|\overline{\theta}_{\textcolor{black}{T}}-\theta_{0,\textcolor{black}{T}}\|_2 \leq \|\widehat{\theta}_{\textcolor{black}{T}}-\theta_{0,\textcolor{black}{T}}\|_2$ and $\|\overline{\gamma}-\gamma\| \leq \|\widehat{\gamma}-\gamma\|$. Here, $\widehat{\theta}_{\Ac_T} \in \Rb^{k_T}$. Then inverting this relationship, multiplying by $\sqrt{T}$ and using the asymptotic expansion of the first step estimator, we obtain
{\footnotesize{\begin{eqnarray*}
\lefteqn{\sqrt{T}(\widehat{\gamma}-\gamma_0)}\\
&=& (-\nabla^2_{\gamma\gamma^\top} \Lb_T(\underline{y};\overline{\theta}_{\Ac_T},\overline{\gamma}) )^{-1} \nabla^2_{\gamma\theta^\top_{\Ac_T}} \Lb_T(\underline{y};\overline{\theta}_{\Ac_T},\overline{\gamma}) \sqrt{T}(\widehat{\theta}_{\textcolor{black}{T}}-\theta_{0,\textcolor{black}{T}})_{\Ac_{\textcolor{black}{T}}} + (-\nabla^2_{\gamma\gamma^\top} \Lb_T(\underline{y};\overline{\theta}_{\Ac_T},\overline{\gamma}) )^{-1} \sqrt{T}\nabla_{\gamma} \Lb_T(\underline{y};\theta_{0,\Ac_T},\gamma_0).
\end{eqnarray*}}}
\textcolor{black}{To ease the notations, we omit the $\Ac_T$ index with respect to the arguments in $\Lb_T(\underline{y};.)$. Let us control for $\|\nabla^2_{\gamma\theta^\top_{\Ac_T}} \Lb_T(\underline{y};\overline{\theta}_T,\overline{\gamma}) - \nabla^2_{\gamma\theta^\top_{\Ac_T}} \Lb_T(\underline{y};\theta_{0,T},\gamma_0)\|_F$. We have the expansion:
\begin{eqnarray}
\lefteqn{\nabla^2_{\gamma\theta^\top_{\Ac_T}} \Lb_T(\underline{y};\overline{\theta}_{\textcolor{black}{T}},\overline{\gamma}) = \nabla^2_{\gamma\theta^\top_{\Ac_T}} \Lb_T(\underline{y};\theta_{0,\textcolor{black}{T}},\gamma_0) }\nonumber \\
& &  +(\overline{\theta}_{\textcolor{black}{T}}-\theta_{0,\textcolor{black}{T}})^\top_{\Ac_T} \nabla_{\theta_{\Ac_T}}\big\{\nabla^2_{\gamma\theta^\top_{\Ac_T}} \Lb_T(\underline{y};\widetilde{\theta}_{\textcolor{black}{T}},\widetilde{\gamma}) \big\} + (\overline{\gamma}-\gamma_0)^\top \nabla_{\gamma}\big\{\nabla^2_{\gamma\theta^\top_{\Ac_T}}\Lb_T(\underline{y};\widetilde{\theta}_{\textcolor{black}{T}},\widetilde{\gamma})\big\}, \label{expansion1}
\end{eqnarray}
where $\|\widetilde{\theta}_{\textcolor{black}{T}}-\theta_{0,\textcolor{black}{T}}\|_2\leq \|\overline{\theta}_{\textcolor{black}{T}}-\theta_{0,\textcolor{black}{T}}\|_2$ and $\|\widetilde{\gamma}-\gamma_0\|\leq \|\overline{\gamma}-\gamma_0\|$. The second term of (\ref{expansion1}) can be bounded as:
\begin{eqnarray*}
\|(\overline{\theta}_{\textcolor{black}{T}}-\theta_{0,\textcolor{black}{T}})^\top_{\Ac_T} \nabla_{\theta_{\Ac_T}}\big\{\nabla^2_{\gamma\theta^\top_{\Ac_T}} \Lb_T(\underline{y};\widetilde{\theta}_{\textcolor{black}{T}},\widetilde{\gamma})\big\}\|^2_F \leq \underset{1\leq j \leq d_2,k,l\in \Ac_{\textcolor{black}{T}}}{\sum} \{\partial^3_{\gamma_j \theta_{k,\textcolor{black}{T}}\theta_{l,\textcolor{black}{T}}}\Lb_T(\underline{y};\widetilde{\theta}_{\textcolor{black}{T}},\widetilde{\gamma})\}^2\|\overline{\theta}_{\textcolor{black}{T}}-\theta_{0,\textcolor{black}{T}}\|^2_2.
\end{eqnarray*}
Under Assumption \ref{assumption_third_cross_2_2ndstep}, using $\|\widehat{\theta}_{\textcolor{black}{T}}-\theta_{0,\textcolor{black}{T}}\|_2=O_p(\sqrt{\frac{d_T}{T}})$, we deduce 
\beqw
\|(\overline{\theta}_{\textcolor{black}{T}}-\theta_{0,\textcolor{black}{T}})^\top_{\Ac_T} \nabla_{\theta_{\Ac_T}}\big\{\nabla^2_{\gamma\theta^\top_{\Ac_T}} \Lb_T(\underline{y};\widetilde{\theta}_{\textcolor{black}{T}},\widetilde{\gamma})\big\}\|_F \leq O_p(\big(d^2_T\frac{d_T}{T}\big)^{1/2}).
\eeqw
As for the third term, using a similar reasoning, for any $j=1,\cdots,d_2$, we have
\begin{eqnarray*}
\|(\overline{\gamma}-\gamma_0)^\top\nabla_{\gamma}\big\{\nabla^2_{\gamma\theta^\top_{\Ac_T}} \Lb_T(\underline{y};\widetilde{\theta}_{\textcolor{black}{T}},\widetilde{\gamma})\big\}\|^2_F \leq \underset{k\in \Ac_{\textcolor{black}{T}}}{\sum}\overset{d_2}{\underset{j,l=1}{\sum}}\big\{\partial^3_{\gamma_j \gamma_k\gamma_l}\Lb_T(\underline{y};\widetilde{\theta}_{\textcolor{black}{T}},\widetilde{\gamma})\big\}^2 \|\overline{\gamma}-\gamma_0\|^2_2.
\end{eqnarray*}
Under Assumption \ref{assumption_third_cross_2_2ndstep}, using $\|\widehat{\gamma}-\gamma_0\|_2=O_p(\frac{1}{\sqrt{T}})$, we deduce
\beqw
\|(\overline{\gamma}-\gamma_0)^\top\nabla_{\gamma}\big\{\nabla^2_{\gamma\theta^\top_{\Ac_T}} \Lb_T(\underline{y};\widetilde{\theta}_{\textcolor{black}{T}},\widetilde{\gamma})\big\}\|_F\leq O_p(\big(\frac{d_T}{T}\big)^{1/2}).
\eeqw
Finally, using Assumption \ref{assumption_second_cross_2ndstep_centered}:
\begin{eqnarray*}
\lefteqn{\big(\nabla^2_{\gamma\theta^\top_{\Ac_T}}\Lb_T(\underline{y};\theta_{0,\textcolor{black}{T}},\gamma_0)-\Eb[\nabla^2_{\gamma\theta^\top_{\Ac_T}}\Lb_T(\underline{y};\theta_{0,\textcolor{black}{T}},\gamma_0)]\big)(\widehat{\theta}_T-\theta_{0,T})_{\Ac_T}}\\
& \leq & \|\big\{\nabla^2_{\gamma\theta^\top_{\Ac_T}}\Lb_T(\underline{y};\theta_{0,\textcolor{black}{T}},\gamma_0)-\Eb[\nabla^2_{\gamma\theta^\top_{\Ac_T}}\Lb_T(\underline{y};\theta_{0,\textcolor{black}{T}},\gamma_0)]\big\}\|_F\|(\widehat{\theta}_T-\theta_{0,T})_{\Ac_T}\|_2 = O_p(\frac{d_T}{\sqrt{T}})O_p(\sqrt{\frac{d_T}{T}}).
\end{eqnarray*}
Putting the pieces together, we deduce
\beqw
\nabla^2_{\gamma\theta^\top_{\Ac_T}} \Lb_T(\underline{y};\overline{\theta}_{\textcolor{black}{T}},\overline{\gamma}) \sqrt{T}(\widehat{\theta}_{\textcolor{black}{T}}-\theta_{0,\textcolor{black}{T}})_{\Ac_{\textcolor{black}{T}}}  =  \Eb[\nabla^2_{\gamma\theta^\top_{\Ac_T}} \Lb_T(\underline{y};\theta_{0,\textcolor{black}{T}},\gamma_0)] \sqrt{T}(\widehat{\theta}_{\textcolor{black}{T}}-\theta_{0,\textcolor{black}{T}})_{\Ac_{\textcolor{black}{T}}} + o_p(1). 
\eeqw}
To prove $\|\nabla^2_{\gamma\gamma^\top}\Lb_T(\underline{y};\overline{\theta}_{\textcolor{black}{T}},\overline{\gamma})-\Eb[\nabla^2_{\gamma\gamma^\top} f(y_s,s\leq t;\theta_{0,\textcolor{black}{T}},\gamma_0)]\|=o_p(1)$, we can use a similar reasoning. We have the expansion
\begin{eqnarray*}
\lefteqn{\nabla^2_{\gamma\gamma^\top} \Lb_T(\underline{y};\overline{\theta}_{\textcolor{black}{T}},\overline{\gamma}) }\\
& = & \nabla^2_{\gamma\gamma^\top} \Lb_T(\underline{y};\theta_{0,\textcolor{black}{T}},\gamma_0)+ (\overline{\theta}_{\textcolor{black}{T}}-\theta_{0,\textcolor{black}{T}})^\top_{\Ac_T}\nabla_{\theta_{\Ac_T}}\big\{\nabla^2_{\gamma\gamma^\top} \Lb_T(\underline{y};\widetilde{\theta}_{\textcolor{black}{T}},\widetilde{\gamma}) \big\} + (\overline{\gamma}-\gamma_0)^\top\nabla_{\gamma}\big\{\nabla^2_{\gamma\gamma^\top}\Lb_T(\underline{y};\widetilde{\theta}_{\textcolor{black}{T}},\widetilde{\gamma})\alpha_T\big\} \\
& = & \nabla^2_{\gamma\gamma^\top} \Lb_T(\underline{y};\theta_{0,\textcolor{black}{T}},\gamma_0)+(\overline{\theta}_{\textcolor{black}{T}}-\theta_{0,\textcolor{black}{T}})^\top_{\Ac_T} \nabla_{\theta_{\Ac_T}}\big\{\nabla^2_{\gamma\gamma^\top} \Lb_T(\underline{y};\widetilde{\theta}_{\textcolor{black}{T}},\widetilde{\gamma}) \big\},
\end{eqnarray*}
since $\forall k,l,m \leq d_2, \partial^3_{\gamma_k\gamma_l\gamma_m}f(y_s,s\leq t;\theta_{\textcolor{black}{T}},\gamma)=0$, and
where $\|\widetilde{\theta}_{\textcolor{black}{T}}-\theta_{0,\textcolor{black}{T}}\|_2\leq \|\overline{\theta}_{\textcolor{black}{T}}-\theta_{0,\textcolor{black}{T}}\|_2$ and $\|\widetilde{\gamma}-\gamma_0\|\leq \|\overline{\gamma}-\gamma_0\|$. Using the $\sqrt{T/d_T}$-consistency of $\|\widehat{\theta}_{\textcolor{black}{T}}-\theta_{0,\textcolor{black}{T}}\|_2$, the Cauchy-Schwarz inequality and Assumption \ref{assumption_third_cross_2_2ndstep}, we deduce $\|\nabla^2_{\gamma\gamma^\top}\Lb_T(\underline{y};\overline{\theta}_{\textcolor{black}{T}},\overline{\gamma})-\Eb[\nabla^2_{\gamma\gamma^\top} f(y_s,s\leq t;\theta_{0,\textcolor{black}{T}},\gamma_0)]\| =o_p(1)$ and we denote $\textcolor{black}{\Ub}:=\Eb[\nabla^2_{\gamma\gamma^\top} f(y_s,s\leq t;\theta_{0,\textcolor{black}{T}},\gamma_0)]$. \textcolor{black}{Thus, by Lemma 11 of Loh and Wainwright (2017), we deduce $\|\big(\nabla^2_{\gamma\gamma^\top}\Lb_T(\underline{y};\overline{\theta}_{\textcolor{black}{T}},\overline{\gamma})\big)^{-1}-\Ub^{-1}\| =o_p(1)$. }

\medskip

\textcolor{black}{Now, let $\textcolor{black}{\Upsilon_{\gamma \Ac_T} = \Eb[\partial^2_{\gamma_l\theta_{k,T}}f(y_s,s\leq t;\theta_{0,\textcolor{black}{T}},\gamma_0)]_{1 \leq l \leq d_2, k \in \Ac_T}}$ with dimension $d_2 \times k_T$, for $T$ large enough:
$\sqrt{T}\Upsilon_{\gamma \Ac_T}(\widehat{\theta}_{\textcolor{black}{T}}-\theta_{0,\textcolor{black}{T}})_{\Ac_{\textcolor{black}{T}}} = \sqrt{T}\Upsilon_{\gamma \Ac_T}\Hb^{-1}_{\Ac_T\Ac_T} \nabla_{\theta_{\Ac_T}}\Gb_T(\underline{y};\theta_{0,T})$. Under Assumptions \ref{assumption_array_2nd},  $Z_{T,t}=\sqrt{T}\big(\big(\Upsilon_{\gamma \Ac_T}\Hb^{-1}_{\Ac_T\Ac_T} \nabla_{\theta_{\Ac_T}}\Gb_{T,t}(\underline{y};\theta_{0,T})\big)^\top,\big(\nabla_{\gamma}\Lb_{T,t}(\underline{y};\theta_{0,T},\gamma_0)\big)^\top\big)^\top$, is a martingale difference. Moreover, we have:
\beqw
Z_{T,t} = \sqrt{T}\Kb_T\begin{pmatrix}
\nabla_{\theta_{\Ac_T}}\Gb_{T,t}(\underline{y};\theta_{0,T}) \\  \nabla_{\gamma}\Lb_{T,t}(\underline{y};\theta_{0,T},\gamma_0)
\end{pmatrix}, \; \Kb_T = \begin{pmatrix}
\Upsilon_{\gamma \Ac_T}\Hb^{-1}_{\Ac_T\Ac_T} & \mathbf{0} \\ \mathbf{0} & I_{d_2}
\end{pmatrix}.
\eeqw
Proceeding in the manner as in the proof of the asymptotic distribution of the oracle estimator $\widehat{\theta}_{\Ac_T}$, under the conditions of Assumption \ref{assumption_array_2nd}, we deduce that $Z_{T,t}$ satisfies the Lindeberg condition. By Theorem \ref{lindeberg_shiryaev}, we deduce  $\sqrt{T}\big(\big(\Upsilon_{\gamma \Ac_T}\Hb^{-1}_{\Ac_T\Ac_T} \nabla_{\theta_{\Ac_T}}\Gb_{T}(\underline{y};\theta_{0,T})\big)^\top,\big(\nabla_{\gamma}\Lb_{T}(\underline{y};\theta_{0,T},\gamma_0)\big)^\top\big)^\top$ is asymptotically normal.} Then, by Slutsky's Theorem, we deduce 
\beqw
\sqrt{T}\big(\widehat{\gamma}-\gamma\big) \overset{d}{\underset{T \rightarrow \infty}{\longrightarrow}} \Nc_{\Rb^{d_2}}(0,\Vb_{\gamma}),
\eeqw
where the variance-covariance $\Vb_{\gamma}$ is
\beqw
\Vb_{\gamma} = \textcolor{black}{\Ub^{-1}}\textcolor{black}{\Upsilon_{\gamma \Ac_T}}\Vb_{\Ac_{\textcolor{black}{T}}\Ac_{\textcolor{black}{T}}} \textcolor{black}{\Upsilon^\top_{\gamma \Ac_T}}  \textcolor{black}{\Ub^{-1}} +  \textcolor{black}{\Ub^{-1}} \textcolor{black}{\Wb} \textcolor{black}{\Ub^{-1}} \textcolor{black}{+\Ub^{-1} \Upsilon_{\gamma \Ac_T} \Hb^{-1}_{\Ac_T\Ac_T}\Jb_{\Ac_T\gamma}\Ub^{-1} +\Ub^{-1} \Jb^\top_{\Ac_T\gamma}  \Hb^{-1}_{\Ac_T\Ac_T}\Upsilon^\top_{\gamma \Ac_T}\Ub^{-1}}.
\eeqw
\end{proof}

\section{Some competing M-GARCH models}
\label{competing_MGARCH}

The BEKK model directly generates a variance-covariance process. Developed by Baba, Engle, Kraft and Kroner, in a preliminary version of Engle and Kroner (1995), the BEKK is specified for a $p$-dimensional random vector $\eps_t$ as
\beqw
\left\{\begin{array}{clll}
\eps_t & = & H^{1/2}_t \eta_t, \; \text{with} \; H_t := \Eb[\eps_t \eps^\top_t | \Fc_{t-1}]\succ 0 \; \text{so that} \\
H_t & = & \Omega + \overset{q}{\underset{k = 1}{\sum}} \overset{K}{\underset{j = 1}{\sum}} A_{kj} \eps_{t-k} \eps^\top_{t-k} A^\top_{kj} + \overset{r}{\underset{i = 1}{\sum}} \overset{K}{\underset{i = 1}{\sum}} B_{ij} H_{t-i} B^\top_{ij},
\end{array}\right.
\eeqw
where $K$ is an integer, $\Omega$, $A_{kj}$ and $B_{kj}$ are square $p \times p$ matrices and $\Omega \succ 0$. One advantage of the BEKK model is there is no positive semi-definite constraint on the $A_{kj}$ and $B_{kj}$ matrices. However, it imposes highly artificial constraints on the volatilities and covariances of the components. As a consequence, the coefficients of a BEKK representation are difficult to interpret. In our application, a scalar BEKK was considered, where $A_{kj}$ and $B_{kj}$ are scalar with $K = 1$, $q = r = 1$, together with a Gaussian QMLE estimation.

\medskip

Factor models provide rather natural alternatives to BEKK type dynamics. The O-GARCH assumes the decomposition $H_t = P \Lambda_t P^\top$, where $\Lambda_t = \text{diag}(\lambda_{1,t},\cdots,\lambda_{K,t})$, with $K$ the number of factors. Here, we choose $K=p$ factors and each $\lambda_t$ is supposed to follow a univariate GARCH(1,1) process that is estimated by maximum likelihood. The matrix $P$ is nonsingular and it is estimated by PCA on the empirical variance-covariance matrix of $\epsilon_t$: see Alexander (2001), e.g.

\medskip

Rather than a direct specification of the covariance matrices $(H_t)$ dynamics, an alternative approach is to split the task into two parts: individual volatility dynamics on one side, and correlation dynamics on the other side. The most commonly used correlation process is the Dynamic Conditional Correlation (DCC) of Engle (2002). In its BEKK form, the general DCC model is specified as 
\beqw \label{DCC}
\left\{\begin{array}{llll}
\eps_t & = & H^{1/2}_t \eta_t, \; \text{with} \; H_t := \Eb[\eps_t \eps^\top_t | \Fc_{t-1}]\succ 0 \; \text{so that} \\
H_t & = & D_t R_t D_t, \; R_t =  Q^{\star-1/2}_t Q_t Q^{\star-1/2}_t, \\
Q_t & = & \Omega + \overset{q}{\underset{k=1}{\sum}} M_k Q_{t-k} M^\top_k + \overset{r}{\underset{l=1}{\sum}} W_l u_{t-l} u^\top_{t-l} W^\top_l,
\end{array}\right. 
\eeqw
where $D_t = \text{diag}\left(\sqrt{h_{11,t}},\sqrt{h_{22,t}},\ldots,\sqrt{h_{pp,t}}\right)$, $u_t = \left(u_{1,t},\ldots,u_{p,t}\right)$ with $u_{i,t} = \eps_{i,t}/\sqrt{h_{ii,t}}$, $Q_t = \left[q_{ij,t}\right]$, $Q^{\star}_t = \text{diag}\left(q_{11,t},q_{22,t},\ldots,q_{pp,t}\right)$. The model is parameterized by some deterministic matrices $(M_k)_{k = 1,\cdots,q}$, $(W_l)_{l = 1,\cdots,r}$ and a positive definite $p \times p$ matrix $\Omega$. 
Alternatively, Engle (2002) considered a VEC-type specification too. Denoting by $\odot$ the Hadamard matrix product, the $(Q_t)$-dynamics become
\beqw \label{DCC_hadamard}
Q_t = \Omega^* + \overset{q}{\underset{k=1}{\sum}} B_k \odot Q_{t-k} + \overset{r}{\underset{l=1}{\sum}} A_l \odot u_{t-l} u^\top_{t-l},
\eeqw
where the deterministic matrices $(B_k)_{k = 1,\cdots,q}$ and $(A_l)_{l = 1,\cdots,r}$ must be positive semi-definite. 

\medskip

Since the number of parameters of the latter models is of order $O(p^2)$, the matrices $M_k$ and $W_l$ (resp. $B_k$'s and $A_l$) are often assumed to be scalar. 
This is typically a strong and questionable constraint, particularly when the dimension $p$ increases or when the variables in $(\eps_t)$ are heterogeneous.
Furthermore, their inference is usually carried out trough the QML method, based on a Gaussian or Student quasi likelihood function. Under this methodology, applying a regularisation method, even possible, is numerically arduous and no general asymptotic results exist in this case (to the best of our knowledge), due to the non-convexity of the QML criterion. 

\medskip

If $R_t = R$ a constant correlation matrix, then (\ref{DCC}) becomes the Constant Conditional Correlation (CCC) model.

\end{document}